\documentclass[12pt]{article}
\usepackage{epsfig, amssymb, graphicx, amsmath}
\usepackage{amsthm}
\usepackage{relsize}
\usepackage{comment}
\usepackage{mathcomp}

\usepackage[round-mode=places, round-integer-to-decimal, round-precision=2,
table-format = 1.2,
table-number-alignment=center,
round-integer-to-decimal,
output-decimal-marker={,}
]{siunitx}
\usepackage{booktabs}
\usepackage[shortlabels]{enumitem}
\usepackage{eurosym}	% simbolo dell'euro
\usepackage{wrapfig}	% immagini con testo intorno
\usepackage{float}		% immagini affiancate
\usepackage{footnote}

\usepackage{caption}
\usepackage{subcaption}
\usepackage{xcolor}
\usepackage{physics}
\usepackage{cases}
\usepackage{setspace}

\usepackage[most]{tcolorbox}
\tcbuselibrary{listings,skins}
\usepackage{algorithm}  % algoritmi simulazione
\usepackage[noend]{algpseudocode} % algoritmi simulazione

\usepackage{tikz}
\usetikzlibrary{patterns}
\usetikzlibrary{patterns.meta}
\usepackage{pgfplots}
\usetikzlibrary{shapes.geometric}

\usepackage{thmtools}
\usepackage{bbm}		% Per indicatore
\usepackage[title]{appendix}
\usepackage{adjustbox}
\usepackage{multirow}
\usepackage{hyperref}
\usepackage{multicol}	% pacchetto per scrivere su più colonne
\usepackage[authoryear]{natbib}
\bibpunct{(}{)}{,}{a}{}{,}

\defcitealias{solvencyII}{Solvency II Directive, 2009}
\defcitealias{baselII}{BCBS II 2005}
\defcitealias{baselIII}{BCBS III 2010}

\newtheorem{theorem}{Theorem}

\newtheorem{proposition}[theorem]{Proposition}
\newtheorem{lemma}[theorem]{Lemma}
\newtheorem{corollary}[theorem]{Corollary}
\theoremstyle{definition}

\theoremstyle{remark}

\newtheorem*{example*}{Example}
\numberwithin{theorem}{section}
\numberwithin{proposition}{section}
\numberwithin{lemma}{section}
\numberwithin{corollary}{section}
\numberwithin{definition}{section}
\newcommand{\be}{\begin{equation}}
    \newcommand{\en}{\end{equation}}
\newcommand{\ben}{\begin{equation*}}
	\newcommand{\enn}{\end{equation*}}
\newcommand{\bea}{\begin{eqnarray}}
	\newcommand{\ena}{\end{eqnarray}}

\newcommand{\lcp}{\mathrm{L}_{\boldsymbol{\lambda}}}
\newcommand{\llcp}{\mathcal{L}_{\boldsymbol{\lambda}}}
\newcommand{\ellcp}{\ell_{\boldsymbol{p}}}
\newcommand{\lcb}{\mathrm{L}_{\boldsymbol{p}}}
\newcommand{\llcb}{\mathcal{L}_{\boldsymbol{p}}}
\newcommand{\llcbsp}{\mathcal{L}}
\newcommand{\rhocp}{\widehat{\boldsymbol{\varrho}}_{\boldsymbol{\lambda}}}
\newcommand{\rhocb}{\widehat{\boldsymbol{\varrho}}_{\boldsymbol{p}}}

\DeclareMathOperator*{\argmax}{arg\,max}

\newcolumntype{C}[1]{>{\centering\let\newline\\\arraybackslash\hspace{0pt}}m{#1}}

\makeatletter
\newcommand*\bigcdot{\mathpalette\bigcdot@{.5}}
\newcommand*\bigcdot@[2]{\,\mathbin{\vcenter{\hbox{\scalebox{#2}{$\m@th#1\bullet$}}}}\,}
\makeatother

\usepackage[most]{tcolorbox}
\tcbuselibrary{listings,skins}

\algrenewcommand\algorithmicdo{}
\algrenewcommand\algorithmicthen{}

\begin{document}
	
\newlength\tindent
\setlength{\tindent}{\parindent}
\setlength{\parindent}{1em}
\renewcommand{\indent}{\hspace*{\tindent}}
	
\begin{savenotes}
\title{\bf{
    %Comb-Bernoulli and an application in insurance
    %modelling Dependence in Sparse Time Series
    Modeling dependence in sparse time series \\ of Insurance Claims
    }}
    \author{
        Roberto Baviera$^{1}$,
        Pietro Manzoni$^{2}$, \&
        Michele Domenico Massaria$^{1}$
    }
		
    \maketitle
    
    \begin{tabular}{ll}
    &  $^1$ Politecnico di Milano, Department of Mathematics, Italy \\
    &  $^2$ University of Edinburgh, Business School, United Kingdom \\
    \end{tabular}
\end{savenotes}

%\vspace{0.5cm}
%\begin{center}
%(Preliminary version, do not distribute)
%\end{center}
\vspace*{0.11truein}
\begin{abstract}
    \noindent
    Modeling the dependence between multiple risk types is a central challenge in contemporary insurance risk management. The standard approaches, 
    %such as L\'evy copulas, common shock models, and zero-mixed models,
    L\'evy copulas and zero-mixed models,
    often face practical difficulties in simulation and parameter calibration.
    In this paper, we introduce the Comb-Bernoulli model, a novel framework for capturing dependence between sparse time series of insurance risks, bridging the benefits of the two standard approaches.
    The (traditional) copula structure of the proposed model enables tractable: \textit{i)} simulation, 
    \textit{ii)} likelihood evaluation, and
    \textit{iii)} estimation of dependence parameters.
    We present the general properties of the model and analyze in detail the Gaussian copula case with lognormal marginals. Moreover, we illustrate an application using the Danish fire insurance dataset, highlighting both the modeling strengths and numerical efficiency of our approach in real-world risk management.
\end{abstract}
    
	\vspace*{0.11truein}
	{\bf Keywords}:
Copula, Comb-Bernoulli, simulation, cascade estimation, operational risk. 
	\vspace*{0.11truein}

		\vspace{1.75cm}
		\noindent
		{\bf Address for correspondence:}\\
		{\bf Roberto Baviera}\\
		Department of Mathematics \\
		Politecnico di Milano\\
		32 p.zza Leonardo da Vinci \\
		I-20133 Milano, Italy \\
		Tel. +39-02-2399 4575\\
		roberto.baviera@polimi.it

\newpage

\setstretch{1.2}

\section{Introduction}\label{sec:introduction}

Accurate modeling of dependence in claims time series is crucial for insurance 
risk management. Dependence affects both occurrence (claim frequency) and severity (claim amount), directly 
impacting pricing decisions, reserve calculations, and regulatory capital 
requirements under frameworks such as Solvency II 
\citep[see, e.g.,][]{embrechts2002correlation,McNeil2005}.

%Dependence in claims’ time series can have an impact on both occurrence (claim frequency) and magnitude (claim amount). It has direct implications on several key functions within an insurance company, such as pricing, reserving, and Solvency II risk-based regulatory capital \citep{embrechts2002correlation,McNeil2005}.

In the insurance sector, claims data are time series often characterized by long stretches of zero values --~periods without observed claims~-- interspersed with infrequent but significant non-zero events.
The main challenge in modeling such time series is to simultaneously capture both the intermittent nature of claims and the dependencies across different risks, all in a single multivariate model that can accommodate both zero and non-zero values.
%modelling such sparse time series involves the main challenge of describing the intermittent nature of claims, capturing dependencies across different risk types in the joint distribution of both zero and non-zero values. 
It is well known and documented that the presence of probability atoms in financial time series may produce non-obvious behaviors already in the one-dimensional case \citep[see, e.g.,][]{bandi2020zeros,kolokolov2024jumps}.

\smallskip

In the literature, the standard modeling approach for such data is
%The standard approach 
%found in the literature for such data is
%The standard approach in the literature is 
%offered by 
a copula-based mixed distribution model
with a probability mass at zero (hereinafter \textit{zero-mixed model}), 
which \textit{i)} represents each marginal distribution as a combination of an atom
at zero and a continuous distribution for positive values,
and \textit{ii)} couples \textit{only} the positive marginals using a (traditional) copula.
This technique is common in domains such as hydrology and meteorology, where measurements 
(e.g.\ daily rainfall) are frequently zero.
It was introduced in the seminal paper of \cite{shimizu1993bivariate}, who first employed it %
%in meteorology using a copula-based construction 
to capture joint behaviors in sparse 
%positive-valued 
meteorological time series,
and has been considered by other researchers in the same field \citep[see, e.g.,][]{herr2005generic,zhang2007bivariate,serinaldi2009copula}.
In this framework, events can occur in two different locations at the same time, and the dependence between their magnitudes is modeled via a copula.
Crucially, the dependence applies exclusively to simultaneous non-zero events.\footnote{
The continuous-time generalization of the zero-mixed model is known in the insurance sector as common shock model, where a dependence structure is introduced between multiple compound Poisson processes that describe different classes of risk %(Yuen and Wang (2002), 
\citep[see, e.g.,][]{lindskog2003common}. In practice, these classes share ‘common shocks’ --~claims occurring exactly at the same time in two or more classes according to an identical arrival process.
}

Overall, the zero-mixed approach offers two key advantages. First, its simplicity allows for direct modelling of claims time series, effectively distinguishing between zero and non-zero regimes. Second, due to its flexibility, the model provides detailed and separate specifications for the dependence structures in both the occurrence and the size of events.

The main limitation lies in the estimation of model parameters, even for two-dimensional time series. In many relevant cases, co-events are rare, making it difficult to calibrate the dependence accurately. Additionally, the zero-mixed model becomes increasingly parameter-intensive, with the number of parameters growing exponentially with the dimensionality; for instance, a four-dimensional time series requires specifying sixteen probabilities (to account for events shared across two, three, or all four classes), six bivariate copulas, four trivariate copulas, one quadrivariate copula, and four jump size distributions, as already observed by \cite{avanzi2011modelling}. Finally, simulation of the zero-mixed model must proceed in two stages --~first determining the type of event, then simulating the sizes~-- which limits scalability beyond two dimensions.

%\smallskip
%In this paper, our first objective is to introduce a model that overcomes these limitations, allowing both a parsimonious description and a straightforward simulation, maintaining the model simple; our second objective is to understand the conditions under which the model can be expected to perform well in estimation, even in the presence of few co-events.

\medskip
A continuous-time alternative to the zero-mixed approach is the Lévy copula, a method where dependence is introduced via a multivariate function --~the L\'evy copula~-- that couples the marginal tail integrals of the univariate L\'evy measures. It was introduced in the seminal paper of \cite{tankov2003dependence} for spectrally positive multidimensional L\'evy processes, and is thoroughly discussed in reference
%excellent 
textbooks \citep[see, e.g.,][]{Cont}. 
%In continuous-time, an alternative to the common shock representation discussed above is the L\'evy copula approach, a method where dependence is introduced via a multivariate function (the L\'evy copula) that couples the marginal tail integrals of the univariate L\'evy measures. It was introduced in the seminal paper of \cite{tankov2003dependence} for spectrally positive multidimensional L\'evy processes, and it can be found described in excellent textbooks \citep[see, e.g.,][]{Cont}. 
In the insurance sector, this technique was first applied by \citet{bregman2005ruin} to a bivariate compound Poisson process to estimate ruin probabilities for an insurance company with two classes of risks coupled via a L\'evy copula.
A similar approach has been used to model operational losses by \citet{bocker2008modelling, bocker2010multivariate} and \cite{biagini2009asymptotics}. \cite{avanzi2011modelling} discuss the application of a generic L\'evy copula. \cite{esmaeili2010} derive the explicit form of the bivariate likelihood 
--~in particular for the Clayton L\'evy copula in the compound Poisson case~-- enabling a numerical maximum likelihood estimation of the L\'evy copula parameters.

The main advantage of this model lies in the parsimony of L\'evy copulas. Although typically applied to two risk classes in the literature, they can be scaled to multiple risk classes without causing exponential growth in the number of parameters. %Additionally, the likelihood has an explicit form allowing for a direct parameter estimation.
Unfortunately, simulating $d$ risks coupled via a L\'evy copula remains challenging, as shown by \cite{esmaeili2010}: the algorithm requires first simulating $2^d-1$ independent Poisson processes, followed by generating the corresponding jump sizes associated with each process.

\medskip
In this paper, we introduce the Comb-Bernoulli, a model that offers a parsimonious description of dependence in discrete time series, and bridges the benefits of the zero-mixed model (that offers simplicity, operates in discrete-time, and is well-suited to observed claim time series) and the L\'evy copula approach (known for parsimony and scalability in higher dimensions).

%\smallskip
%The focus of this paper is a parsimonious description of dependence in discrete time series, which bridges the benefits of the zero-mixed model (that offers simplicity, operates in discrete-time, and is well-suited to observed claim time series) and the L\'evy copula approach (known for parsimony and scalability in higher dimensions).

The proposed model is based on (traditional) copulas, which jointly capture the dependence in both claim occurrence and magnitude: this leads to a tractable structure that significantly simplifies both model estimation and simulation. Model estimation is facilitated by an explicit likelihood function, which can be efficiently maximized using the Inference Functions for Margins (IFM) approach \citep{godambe1991estimating,joe1997multivariate}, yielding asymptotic confidence intervals (CIs) for the estimated parameters.
Moreover, as we demonstrate, the simulation of a time series involving $d$ classes requires generating only $d$ processes linked via a copula; this approach overcomes the scalability limitations of zero-mixed models and avoids the exponential growth in $d$ of L\'evy copula models \citep[see, e.g.,][]{esmaeili2010}.

\medskip
There are three main contributions of this paper. First, we propose a new parsimonious model to describe the dependence of sparse time series that allows for both a straightforward estimation and a simple simulation. 
%Second, we discuss in detail an example with the most common dependence model used in the insurance sector, the Gaussian copula. 
Second, we provide a detailed specification of the proposed model within the Gaussian copula framework -- the most widely used dependence structure in the insurance literature.
Finally, we illustrate the performance and the effectiveness of our approach using a well-known time series of insurance claims.

\smallskip
The remainder of the paper is organized as follows. In Section \ref{sec:model}, we introduce our new modeling framework, together with all details required for its practical implementation.
In Section \ref{sec:gaussian_copula}, we establish the main properties of the model under the Gaussian copula specification. 
In Section \ref{sec:Results}, we present the results of the estimation of our model using the Danish fire insurance dataset. 
In Section \ref{sec:conclusions}, we provide concluding remarks.

\section{The Model}
\label{sec:model}

%In this section, we introduce the Comb-Bernoulli model, our novel multivariate model 
%in which each component features a point mass at zero and a continuous distribution on the positive real line, while dependencies are structured via a copula.
%We first formalize the marginal and joint specifications, then derive the corresponding likelihood function, and finally present an algorithm for simulating from the model.

%In these domains, sparsity is not an anomaly but a defining feature of the data-generating process that must be accounted for when measuring dependence.

In this Section, we introduce the Comb-Bernoulli,\footnote{To clarify the name of the model, we draw an analogy with functional analysis: a sequence of discrete jumps in time can be represented as a sum of step functions, whose continuous-time derivative is the Dirac comb (i.e.,\ a sum of Dirac delta functions). In the discrete setting, an analogous “comb” is obtained, describing instantaneous changes occurring at predetermined times.}
our novel multivariate model for insurance claims data.
%our flexible multivariate framework for insurance claims data.
%modelling discrete-time insurance claims 
%Each marginal component features a point mass at zero and a continuous distribution on $(0, \infty)$, while a copula structures their interdependence.
We first present the model formulation, then derive the corresponding likelihood function, develop an algorithm for efficient simulation, and finally discuss parameter estimation.

\subsection{Model Description} \label{sec:modeldescription}

Let $\boldsymbol{X}=[X_1,\dots, X_d]^\top$ denote the vector of total claims across $d$ distinct risk classes over the period of interest. 
As standard in the actuarial literature \citep[see, e.g.][]{klugman2012loss}, we consider claims as non-negative quantities (i.e.\ as claims to the insurance company) taking value zero when no claim occurs, or strictly positive values otherwise.

As described in Section \ref{sec:introduction}, the Comb-Bernoulli features two main elements: 
\textit{i)} mixed discrete-continuous marginal distributions for each risk class 
--~combining a point mass at zero with a distribution supported on the positive real line;
\textit{ii)} a copula function that captures the dependence structure among risk classes.

%In this Section, we introduce the model adopted for the vector $\boldsymbol{X}=(X_1,..., X_d)^\top$, where each $X_i$ models a different risk class. We first present the marginal distributions for the rvs $X_i$ ($i=1,...,d$) and then we introduce the multivariate model.

\medskip
%To account for the inherent sparsity of insurance claims, each marginal random variable (hereinafter rv) $X_i$ (with $i=1,\dots,d$) is modeled such that the event $\{X_i = 0\}$ occurs with probability $1-p_i$, while the positive claim follows a continuous severity distribution. 
%
%is modeled as a mixture of a point mass at zero and a continuous severity distribution on $(0, \infty)$.
To account for the inherent sparsity of insurance claims, each component $X_i$ (with $i=1,\dots,d$) of the claim vector $\boldsymbol{X}$ is modeled as a random variable (hereinafter rv) that equals zero with probability $1-p_i$, and follows a continuous severity distribution over positive values with probability $p_i$.
The cumulative distribution function (cdf) $F_i(\bigcdot;p_i)$ and the probability density function (pdf) $f_i(\bigcdot;p_i)$ of the rv\ $X_i$ read
\begin{equation}
    \left\{
        \begin{aligned}
        F_i(x;p_i) &= (1-p_i) \,\mathbbm{1}_{\{x\geq0\}} + p_i \,\Psi_i ( x ;\boldsymbol{\theta}_i) \, \mathbbm{1}_{\{x>0\}} \,, \\[2mm]
        f_i(x;p_i) &= (1-p_i) \,\delta_{0}(x) + p_i \,\psi_i(x;\boldsymbol{\theta}_i) \, \mathbbm{1}_{\{x>0\}} \,,
        \end{aligned}
    \right.
\label{eq:MarginalModel}
\end{equation}
where $p_i \in (0,1]$ is the probability of a positive claim for the $i$-th component,
$\delta_{0}$ is the Dirac delta at zero,
$\psi_i(\bigcdot;\boldsymbol{\theta}_i)$ is a continuous pdf defined on $(0, \infty)$ governing the claim severity,
$\Psi_i(\bigcdot;\boldsymbol{\theta}_i)$ is the corresponding cdf,
and $\boldsymbol{\theta}_i$ is the vector of parameters for the 
severity distributions. %\footnote{The name Comb-Bernoulli originates from functional analysis: the 1-d pdf resembles a Dirac comb function.}

A natural example for the severity is the lognormal distribution, commonly used in the literature to model insurance claims \citep[see, e.g.,][]{McNeil2005, poudyal2021robust}. 
In this case, the distributional parameters are $\boldsymbol{\theta}_i = [\mu_i,\sigma_i]^\top$ and, for $x > 0$, we have
\begin{equation}
	\label{eq:lognormal_marginals}
	\left\{
	\begin{aligned}
	\psi_i(x; \boldsymbol{\theta}_i) &=
	\varphi \left( \frac{\ln x - \mu_i}{\sigma_i} \right) \,
        \frac{1}{x \sigma_i} \,,\\[2mm]
	\Psi_i(x; \boldsymbol{\theta}_i) &= \Phi \left( \frac{\ln x - \mu_i}{\sigma_i} \right) \,,
\end{aligned}
\right.
\end{equation}
with $\varphi(\bigcdot)$ and $\Phi(\bigcdot)$ denoting the pdf and cdf of the standard normal (st.n.) distribution.

%The joint behavior of the claim vector $\boldsymbol{X} = [X_1, \dots, X_d]^\top$ is modeled through a copula function. Let $C(\boldsymbol{u}; \boldsymbol{\varrho}): [0,1]^d \rightarrow [0,1]$ be an $d$-dimensional copula function, absolutely continuous with respect to the Lebesgue measure on $[0,1]^d$, and parametrized by a dependence parameter vector $\boldsymbol{\varrho}$. Its density is denoted by $c(\boldsymbol{u}; \boldsymbol{\varrho})$.
%

\medskip
To combine flexible marginal distributions with a tractable dependence structure, 
we construct the joint distribution of the claim vector $\boldsymbol{X} = [X_1, \dots, X_d]^\top$ through a traditional copula.
%Specifically, 
Let $C(\,\bigcdot\,; \boldsymbol{\varrho}): [0,1]^d \rightarrow [0,1]$ denote a $d$-dimensional copula function, absolutely continuous w.r.t.\ the Lebesgue measure on $[0,1]^d$ and parametrized by a dependence vector $\boldsymbol{\varrho}$;
moreover, let $c(\,\bigcdot\,; \boldsymbol{\varrho})$ denote its density.\footnote{For notational convenience, we shall use interchangeably $C(u_1,\dots,u_d;\boldsymbol{\varrho})$ and $C(\boldsymbol{u};\boldsymbol{\varrho})$, with $\boldsymbol{u} := [u_1,\dots,u_d]^\top$. The same convention applies to the copula density $c(\bigcdot;\boldsymbol{\varrho})$.}
The joint distribution of $\boldsymbol{X}$ is defined via Sklar’s Theorem \citep[see, e.g.,][Th.5.3, p.186]{McNeil2005}:
\begin{equation} \label{eq:copula_definition}
    \mathbb{P}(X_1 \le x_1, \dots, X_d \le x_d)
    = 
    C\big(F_1(x_1;p_1), \dots, F_d(x_d;p_d); \boldsymbol{\varrho}\big).
\end{equation}
Similarly, the joint survival function is given by the survival copula $\overline{C}$: 
%\citep[see, e.g.,][p.~141, eq.~(4.9)]{CLV}:
\begin{equation} \label{eq:complementary_copula}
    \mathbb{P}(X_1 > x_1, \dots, X_d > x_d)
    = \overline{C}\big(\overline{F}_1(x_1;p_1), \dots, \overline{F}_d(x_d;p_d); \boldsymbol{\varrho}\big),
\end{equation}
where $\overline{F}_i(\,\bigcdot\,;p_i) := 1-F_i(\,\bigcdot\,;p_i)$ denotes the survival function of the $i$-th marginal.

\smallskip

%Using the survival copula, we can compute the exact probability of observing any specific configuration of zero and non-zero claims. The following Lemma defines the probability that strictly positive claims occur only for a specific subset of marginals.
%
%\noindent
%This formula characterizes the discrete component of the multivariate distribution and explicitly defines the probability mass for any subset of simultaneous claims.

\medskip

The Comb-Bernoulli model is thus characterized by three distinct sets of parameters, each governing a specific aspect of the multivariate claim distribution:
the occurrence probabilities 
$\boldsymbol{p} := [p_1, \dots, p_d]^{\top}$, which determine the frequency of claim events in each risk class; the severity parameters
$\boldsymbol{\Theta}:=[\boldsymbol{\theta}_1, \dots, \boldsymbol{\theta}_d]$, which shape the marginal distributions for positive claims in each risk class; 
and the dependence parameter $\boldsymbol{\varrho}$, which controls the correlation structure between marginal distributions through the copula function.

This explicit decomposition into occurrence, severity, and dependence components constitutes the core analytical strength of the proposed framework.
As we discuss in the following sections, this modularity offers several advantages: it enhances interpretability in risk assessment by allowing practitioners to isolate the contribution of each component; it enables rapid and efficient simulation of multivariate claim scenarios due to the separable structure; and it naturally supports a sequential, multi-stage estimation of the parameters, improving numerical stability and computational efficiency.
%allowing each component to be calibrated efficiently and independently.

\subsection{Likelihood function}
Due to the discrete-continuous nature of the marginals, the likelihood for any observation $\boldsymbol{x} = [x_1, \dots, x_d]^\top$ must explicitly account for the atoms at zero. It is therefore useful to define the \emph{active set} $\mathcal{S}({\boldsymbol{x}})$ for an observation $\boldsymbol{x}$ as the indices of components with positive claims:
\[
    \mathcal{S}(\boldsymbol{x}) := \{\, i \in \{1, \dots, d\} : x_i > 0 \,\} \subseteq\mathbb{S} \,\, ,
\]
where $\mathbb{S}:=\{1,\dots,d\}$.
We denote its cardinality by $\mathsf{s}(\boldsymbol{x}) := |\mathcal{S}(\boldsymbol{x})|$.

An example can help illustrate this notation. If claims arise from five different risk classes (i.e.\ $d=5$) and the observed claim vector is 
$\boldsymbol{x} = [0,\, 12.5,\, 4.6,\, 0,\, 7]^\top$, 
then the active set is $\mathcal{S}(\boldsymbol{x}) = \{ 2,3, 5 \}$ and its cardinality is $\mathsf{s}(\boldsymbol{x}) = 3$.
%An example can clarify this notation. If the claims rise within five different types (i.e., $d=5$), and the observed claims (one observation of the time series) are $\boldsymbol{x} = [0,12.5,4.6, 0, 3]^\top$, the active set $\mathcal{S}(\boldsymbol{x}) = \{ 2,3, 5 \}$, its cardinality is $\mathsf{s}(\boldsymbol{x}) = 3$.
%The active set identifies the components that contribute a continuous part to the likelihood, while the remaining zero components correspond to the discrete mass at zero.

\smallskip
The active set partitions the components of each observation into two groups: those in $\mathcal{S}(\boldsymbol{x})$ contribute to the likelihood through a continuous density term, while those not in $\mathcal{S}(\boldsymbol{x})$ contribute only via the discrete probability mass 
at zero.
The resulting log-likelihood $\llcb$ for a time series of $N$ multivariate observations 
can then be stated as follows.\footnote{
%The subscript $\boldsymbol{p}$ in $\llcb$ highlights that --~unlike standard copula models with continuous marginals~-- the defining feature of the Comb-Bernoulli is the presence of atomic probabilities at zero, governed by the parameters $\{p_i\}_i$.}
The subscript $\boldsymbol{p}$ in $\llcb$ emphasizes that, unlike standard copula models with continuous marginals, the Comb-Bernoulli model features atomic probabilities at zero, controlled by the parameters $\boldsymbol{p} := [p_1, \dots, p_d]^{\top}$.
}

\begin{proposition}
\label{pr:CombBernoulli_ll} 
Let $\mathcal{X} := \{\boldsymbol{x}^{(t)}\}_{t=1}^N$ be a $d$-dimensional Comb-Bernoulli time series. The log-likelihood of the model is given by
%
%Let $\mathcal{X} := \{\boldsymbol{x}^{(t)}\}_{t=1}^N$ be a non-negative time series in $\mathbb{R}^{d}$. The log‑likelihood of a $d$-variate Comb‑Bernoulli model is given by
%
\begin{equation}
\label{eq:log_likelihood_CB}
~~
%\llcb(\boldsymbol{\varrho})
\mathcal{L}_{\boldsymbol{p}}\left( \boldsymbol{\Theta}, \boldsymbol{\varrho} \mid \mathcal{X}\right)
=
\sum_{\boldsymbol{x} \in \mathcal{X}} \,
\Biggl\{
\ln \Biggl[
\frac{\partial^{\, \mathsf{s}(\boldsymbol{x})} C(u_1, \dots, u_d; \boldsymbol{\varrho})}
{\prod_{i \in \mathcal{S}(\boldsymbol{x})} \partial u_i} 
\Biggr]_{u_i = F_i(x_i; p_i)}
+
\sum_{i \in \mathcal{S}(\boldsymbol{x})} 
\ln \big(\, p_i\psi_i(x_i; \boldsymbol{\theta}_i) \, \big)
\Biggr\} \,.
\end{equation}
\end{proposition}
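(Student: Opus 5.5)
We treat the observations as independent, so the log-likelihood is the sum over $\boldsymbol{x}\in\mathcal{X}$ of the log of the joint density of a single observation, taken with respect to the dominating measure $\prod_i(\delta_0+\lambda_{(0,\infty)})$. Here $\lambda$ is Lebesgue measure on $(0,\infty)$. The whole task is therefore to compute, for fixed $\boldsymbol{x}$ with active set $\mathcal{S}=\mathcal{S}(\boldsymbol{x})$, the mixed density
\[
g(\boldsymbol{x})=\lim_{h\downarrow0}\frac{\mathbb{P}\bigl(X_i\in(x_i-h,x_i],\ i\in\mathcal{S};\ X_j=0,\ j\notin\mathcal{S}\bigr)}{h^{\mathsf{s}(\boldsymbol{x})}} .
\]
We then show that it equals the bracket in (\ref{eq:log_likelihood_CB}) multiplied by $\prod_{i\in\mathcal{S}}p_i\psi_i(x_i;\boldsymbol{\theta}_i)$.

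\emph{Step 1 (rewriting the event).} Since $X_j\ge0$, we have $\{X_j=0\}=\{X_j\le 0\}$. The event in the numerator is thus a rectangle event of the form $\{X_i\in(a_i,b_i],\ i\in\mathcal{S};\ X_j\le 0,\ j\notin\mathcal{S}\}$. By (\ref{eq:copula_definition}), its probability is the $\mathcal{S}$-increment
\[
\Delta_{\mathcal{S}}\,C(u_1,\dots,u_d;\boldsymbol{\varrho}),
\]
taken over the coordinates $i\in\mathcal{S}$ between $F_i(x_i-h;p_i)$ and $F_i(x_i;p_i)$, with $u_j=F_j(0;p_j)=1-p_j$ held fixed for $j\notin\mathcal{S}$. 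This is the standard inclusion–exclusion expansion of the joint cdf, and it needs no differentiability in the frozen coordinates.

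\emph{Step 2 (local smoothness of marginals).} For $i\in\mathcal{S}$ we have $x_i>0$, and on a neighbourhood of $x_i$ we have $F_i(y;p_i)=1-p_i+p_i\Psi_i(y;\boldsymbol{\theta}_i)$. This is continuously differentiable with derivative $p_i\psi_i(y;\boldsymbol{\theta}_i)$, so the atom at zero does not interfere once $h<x_i$.

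\emph{Step 3 (mixed-derivative limit).} Write the increment as the integral of the copula density $c$ over the box $\prod_{i\in\mathcal{S}}[F_i(x_i-h),F_i(x_i)]$, integrated in the $\mathcal{S}$-coordinates, with the remaining coordinates integrated over $[0,1-p_j]$. By absolute continuity of $C$ this integral equals exactly $\partial^{\mathsf{s}}C/\prod_{i\in\mathcal{S}}\partial u_i$ integrated over the box. Dividing by $h^{\mathsf{s}}$ and letting $h\to0$, a Lebesgue-differentiation argument combined with the change of variables $u_i=F_i(y_i)$ gives
\[
\frac{\partial^{\mathsf{s}} C}{\prod_{i\in\mathcal{S}}\partial u_i}\Big|_{u=F(\boldsymbol{x})}\ \prod_{i\in\mathcal{S}}p_i\psi_i(x_i;\boldsymbol{\theta}_i).
\]
Taking the logarithm and summing over $t$ then yields (\ref{eq:log_likelihood_CB}). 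Two boundary cases are consistent with this formula. When $\mathcal{S}=\emptyset$, the term reduces to $\ln C(1-p_1,\dots,1-p_d)$, the probability of no claims. When $\mathcal{S}=\mathbb{S}$, it reduces to the usual copula density term.

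The main obstacle is Step 3: justifying that the mixed partial derivative exists at the evaluation point, and that the limit may be taken pointwise. We plan to assume, as is implicit in the statement (and true for the Gaussian copula used later), that $c$ is continuous on $(0,1)^d$. The partial derivative $\partial^{\mathsf{s}}C/\prod_{i\in\mathcal{S}}\partial u_i$ is then the continuous function $\int_{\prod_{j\notin\mathcal{S}}[0,u_j]}c\,\mathrm{d}u_{\mathbb{S}\setminus\mathcal{S}}$. Continuity of this function lets the limit pass by the mean value theorem. Without such regularity, the identity holds only for almost every $\boldsymbol{x}$, which is still sufficient for a likelihood, since densities are defined only up to null sets.
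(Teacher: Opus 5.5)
Your proposal is correct and follows the same route as the paper. Both write the per-observation likelihood as the mixed derivative, in the active coordinates, of $\mathbb{P}(X_i\le x_i,\ i\in\mathcal{S};\ X_j=0,\ j\notin\mathcal{S})$, i.e.\ of $C$ with the inactive arguments frozen at $1-p_j$, and then apply the chain rule to obtain the factor $\prod_{i\in\mathcal{S}}p_i\psi_i(x_i;\boldsymbol{\theta}_i)$. You additionally make explicit the dominating measure and the regularity that the paper's appeal to the Lebesgue decomposition leaves implicit.

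One minor caveat: continuity of $c$ on $(0,1)^d$ does not by itself make $u_{\mathcal{S}}\mapsto\int c\,\mathrm{d}u_{\mathcal{S}^{\complement}}$ continuous, since you also need a locally uniform integrable bound near the boundary. Your almost-everywhere fallback already covers this, as does the explicit formula of Lemma~\ref{lemma:GaussianCopulaDerivative} in the Gaussian case.
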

\begin{proof}
    See Appendix \ref{sec:Proofs}
\end{proof}
%
%
\iffalse
\textcolor{black}{
\begin{proposition}
Let $\mathcal{X} = [\boldsymbol{x}^{(t)}]_{t=1}^N$ be a $d$-dimensional Comb-Bernoulli time series of length $N$. 
The log-likelihood of $\mathcal{X}$ is
\begin{equation}
\label{eq:ts_loglikelihood}
    \ell_{\boldsymbol{p}}(\boldsymbol{\Theta},\boldsymbol{\varrho}\mid\mathcal{X}) 
    = \sum_{t=1}^{N} 
    \left[ \,
    \sum_{i \in \mathcal{S}(\boldsymbol{x}^{(t)})} \ln \psi_i(x_i^{(t)};\boldsymbol{\theta}_i)
    + \ln 
    \left[
    \frac{\partial^{\,\mathsf{s}(\boldsymbol{x}^{(t)})} C(u_1,\dots,u_d;\boldsymbol{\varrho})}
    {\prod_{i \in \mathcal{S}(\boldsymbol{x}^{(t)})} \partial u_i} 
    \right]_{u_i = F_i(x_i^{(t)}; p_i)}
    \, \right]
    \,.
\end{equation}
\end{proposition}
\begin{proof}
    See Appendix \ref{sec:Proofs}
\end{proof}
}
\fi

\noindent
For example, a fully zero observation $\boldsymbol{x}=\boldsymbol{0}$ (i.e.\ such that $\mathcal{S}(\boldsymbol{x}) = \varnothing$) yields a particularly simple contribution to the log-likelihood.
Indeed, the marginal sum $\sum\limits_{i \in \mathcal{S}(\boldsymbol{x})} \ln(p_i \psi_i)$ is empty, and the partial derivative term reduces to the copula itself:
\begin{equation*}
\Biggl[
\frac{\partial^{\, \mathsf{s}(\boldsymbol{x})} C(u_1, \dots, u_d; \boldsymbol{\varrho})}
{\prod_{i \in \mathcal{S}(\boldsymbol{x})} \partial u_i}
\Biggr]_{u_i = F_i(x_i; p_i)}
\, = \,
C(\, 1-p_1, \dots, 1-p_d; \boldsymbol{\varrho} \,) \,.
\end{equation*}

\bigskip
The availability of a simple likelihood expression is one of the great advantages of the Comb-Bernoulli model, since it allows straightforward estimation of the dependence parameters $\boldsymbol{\varrho}$ via the IFM approach \citep[see, e.g.,][Ch.10]{joe1997multivariate}, as we discuss in Section \ref{ssec:estimation_of_the_copula_parameters}.
%Under this method the marginal parameters are first estimated from their respective univariate likelihoods; subsequently, the dependence parameters are obtained by maximizing the remaining copula likelihood. 

The bivariate ($d=2$) and trivariate ($d=3$) Comb-Bernoulli models provide the simplest illustrations of the proposed framework. 
Explicit log-likelihoods for these cases are presented
in Appendices~\ref{sec:biv_case} and \ref{sec:triv_case}.

\subsubsection{Likelihood in the continuous-time limit}
The Comb-Bernoulli model introduced in Section \ref{sec:modeldescription} operates in discrete-time, with claims 
%the aggregate claims are 
observed at $N$ monitoring times, equally spaced over the period $(0, T]$;
the time interval between two successive observations is $\Delta t:= T/N$.
At each monitoring time, the probability that the observed loss vector has $I \subseteq \mathbb{S}$ as active set, denoted as $p_I^{\perp}$\,, is given by the following Lemma.
\begin{lemma}\label{lem:prob_active_set}
    For any $I \subseteq \mathbb{S}$, the probability of observing positive losses exactly in the components indexed by $I$ is
    \begin{equation} \label{eq:p_I_perp}
    p_I^{\perp} 
    :=
    \mathbb{P}\big( \{X_i > 0 ~\, \forall i \in I\} \cap \{X_j = 0 ~\, \forall j \notin I\}\big) 
    =
    \sum_{J \supseteq I}	
    (-1)^{|J|-|I|} ~
	\overline{C}_{J}
	\left(
		\, \left\{ p_i \right\}_{i \in J} \,
	\right) \,\,,
    \end{equation}
    where the sum is taken over all index sets $J \subseteq \mathbb{S}$ containing $I$, and
    $\overline{C}_{J}$ denotes the survival copula restricted to the index set $J$ of cardinality $|J|$.\footnote{\label{fn:restriction_copula}As standard in the literature 
    \citep[see, e.g.,][]{CLV}, for any nonempty index set $J \subseteq \mathbb{S}$, we define the \textit{restriction} of the survival copula $\overline{C}$ to the coordinates in $J$ as
\[
    \overline{C}_J( \{ u_j \}_{j\in J})
    := 
    \overline{C}\bigl(
        \{ u_j \}_{j\in J}, \, \{ 1 \}_{j\in J^{\complement}} \, ; \,\boldsymbol{\varrho}\,
    \bigr)\,\,,
\]
i.e.\ as the survival copula obtained by setting all components not in $J$ as equal to $1$.}
\end{lemma}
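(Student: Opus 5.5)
The plan is to express the event of interest through "upper orthant" events and then apply inclusion--exclusion (equivalently, Möbius inversion on the Boolean lattice of subsets of $\mathbb{S}$). Since each $X_i$ is non-negative, $\{X_j = 0\} = \{X_j > 0\}^{\complement}$. For every $J \subseteq \mathbb{S}$ define
\[
q_J := \mathbb{P}\big(X_j > 0 ~\, \forall j \in J\big),
\]
with the convention $q_\varnothing = 1$.

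First I will identify $q_J$ with the restricted survival copula. Apply the joint survival representation \eqref{eq:complementary_copula} with $x_j = 0$ for $j \in J$, and let $x_j \to -\infty$ (or take any $x_j<0$) for $j \notin J$. By \eqref{eq:MarginalModel}, $\overline{F}_j(0;p_j) = 1 - (1-p_j) = p_j$ and $\overline{F}_j(x_j;p_j) = 1$ for $x_j<0$. The event $\{X_j > x_j\}$ with $x_j<0$ is almost sure, so it can be dropped. This gives
\[
q_J = \overline{C}\big(\{p_j\}_{j\in J}, \{1\}_{j \notin J};\boldsymbol{\varrho}\big) = \overline{C}_J(\{p_j\}_{j\in J}),
\]
matching the definition in footnote~\ref{fn:restriction_copula}. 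For $J=\varnothing$ this is consistent with $\overline{C}(1,\dots,1)=1$.

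Next, write $A_I := \{X_i>0 ~\forall i\in I\}$ and $B_j := \{X_j>0\}$. Then
\[
p_I^\perp = \mathbb{P}\Big(A_I \cap \bigcap_{j\notin I} B_j^{\complement}\Big).
\]
Expanding the indicator of the intersection,
\[
\mathbbm{1}_{A_I}\prod_{j\notin I}\big(1-\mathbbm{1}_{B_j}\big) = \sum_{K\subseteq \mathbb{S}\setminus I}(-1)^{|K|}\,\mathbbm{1}_{A_I}\prod_{j\in K}\mathbbm{1}_{B_j} = \sum_{K\subseteq \mathbb{S}\setminus I}(-1)^{|K|}\,\mathbbm{1}_{A_{I\cup K}} .
\]
Taking expectations and reindexing by $J = I\cup K \supseteq I$, so that $|K| = |J|-|I|$, yields
\[
p_I^\perp = \sum_{J\supseteq I}(-1)^{|J|-|I|} q_J .
\]
Substituting the identification of $q_J$ from the first step gives \eqref{eq:p_I_perp}.

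The only delicate point is the first step: justifying that $\overline{C}$ evaluated at the marginal survival values gives orthant probabilities even though the marginals have atoms, so Sklar's copula is not unique. Here, however, \eqref{eq:complementary_copula} is part of the model's definition, so it holds by construction. The edge case of coordinates set to $1$ uses only that $\{X_j>x_j\}$ is almost sure for $x_j<0$, together with the margin property of survival copulas.
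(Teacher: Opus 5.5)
Your proof is correct and follows essentially the same route as the paper. Both arguments rewrite the event as $A_I\cap\bigcap_{j\notin I}B_j^{\complement}$, apply inclusion--exclusion over the coordinates outside $I$, reindex by $J\supseteq I$, and identify $\mathbb{P}\bigl(\bigcap_{j\in J}\{X_j>0\}\bigr)$ with $\overline{C}_J(\{p_j\}_{j\in J})$ via the survival-copula representation. The differences are cosmetic: you expand the product of indicators instead of applying inclusion--exclusion to $\bigcup_{j\notin I}(A\cap B_j)$, and you justify the restriction step (taking $x_j<0$ off $J$ so that $\overline{F}_j=1$) more explicitly than the paper, which simply asserts it.
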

\begin{proof}
    See Appendix \ref{sec:Proofs}
\end{proof}

%{ \color{black}
%\begin{example*}
As an example, in the bivariate case, the probability of observing a positive claim exclusively in the first risk class (i.e.\ with active set $I=\{1\}$) is given by
\[
p_{\{1\}}^{\perp} = \overline{C}_{\{1\}}(p_1) - \overline{C}_{\{1,2\}}(p_1, p_2) = p_1 - \overline{C}(p_1, p_2;\boldsymbol{\varrho}),
\]
which naturally translates to the difference $\mathbb{P}(X_1 > 0) - \mathbb{P}(X_1 > 0, X_2 > 0)$. Similarly, the probability of a simultaneous co-jump in both classes (i.e.\ $I=\{1,2\}$) reduces to $p_{\{1,2\}}^{\perp} = \overline{C}(p_1, p_2;\boldsymbol{\varrho})$.
%\end{example*}
%}

\medskip

In this Section, we show that the Comb-Bernoulli admits a natural continuous-time extension. 
We consider the limit model obtained by increasing the monitoring frequency (i.e.\ by letting $\Delta t \to 0^+$), while applying a Poisson scaling to the probabilities $p_I^\perp$\,:
\begin{equation}
    \phantom{\quad\quad \forall\,I\neq \varnothing}
    p_I^{\perp} \,=\,\lambda_I^{\perp}\,\Delta t
    \quad\quad 
    \forall\,I\neq \varnothing
    \,.
\end{equation}
This scaling ensures that, as the sampling time-step $\Delta t$ decreases, the expected number of claims with $I$ as active set over the period $(0, T]$ remains constant, equal to $\lambda_I^{\perp} T$.

In this continuous-time setting, claims can occur at any instant within the interval $(0, T]$, rather than only at the discrete monitoring times. Interarrival times follow exponential 
distributions characteristic of Poisson processes, and the intensities 
$\{\lambda_I^\perp\}_{I \subseteq \mathbb{S}}$ govern the occurrence of events with active set $I$. 
The dependence across risk classes is captured by a multivariate Lévy copula, connecting to the bivariate construction of \cite{esmaeili2010}.

The following Proposition presents the likelihood of this continuous-time representation. 
%The obtained expression encompasses the bivariate likelihood of \citet[Th.4.1, p.227]{esmaeili2010} 
%--~constructed using the different technique of two-dimensional Lévy copulas~-- 
%as a special case, while extending to $d$ dimensions.
%\smallskip
%We prove in the next Proposition the likelihood of this limit representation; it has been obtained for
%a bivariate compound Poisson with the different technique of two-dimensional L\'evy copulas in the work of \citet[][Th.4.1, p.227]{esmaeili2010}.

\begin{proposition} \label{pr:CombPoisson_ll}
Let $\mathcal{X} := \{\boldsymbol{x}^{(t)}\}_{t}$ denote a collection of observed claim events 
over $(0,T]$, where $t$ represents the arrival time and each $\boldsymbol{x}^{(t)}$ 
the associated $d$-variate loss vector.
In the continuous-time limit, the log-likelihood of the $d$-dimensional 
Comb-Bernoulli is given by
    \begin{equation}
    \label{eq:Comb_Poi_ll}
    \llcp\left(\boldsymbol{\Theta},\boldsymbol{\varrho}\mid\mathcal{X}\right) %\left(\boldsymbol{\varrho} \right)
    =
    \sum_{ I \neq \varnothing } ~
    \left[ ~\sum_{\substack{\boldsymbol{x} \in \mathcal{X} \,:\, \mathcal{S}(\boldsymbol{x}) = I }}
    \left( 
        \ln \zeta \bigl( \boldsymbol{x} \bigr) +
         \sum_{i \in I} \ln\left(
         \lambda_i \psi_i \bigl( x_i;\boldsymbol{\theta}_i \bigr)\right)
    \right) ~- T \lambda_I^{\perp} \,
    \right]
    \,.
\end{equation}
Here, $\lambda_i$ denotes the intensity of component $i$, defined as
$\lambda_i := \sum\limits_{I \subseteq \mathbb{S} \,:\, I \ni i} \lambda_I^{\perp}$, and the function $\zeta$ is

\begin{equation*}
\zeta (\boldsymbol{x})
:=
\sum_{J \supseteq \mathcal{S}(\boldsymbol{x})} (-1)^{|J|}
\left[
\frac{\partial^{\, \mathrm{s}(\boldsymbol{x})}}{\prod_{i \in \mathcal{S}(\boldsymbol{x})} \partial u_i}
\mathfrak{C}_J(\{ u_i \}_{i \in J})
\right]_{u_i = \lambda_i \overline{\Psi_i}(x_i)} ~,
\end{equation*}
%and
%\begin{equation*}
%    \lambda_I^\perp=\mathfrak{C}_{ I}\left(\{\lambda_i\}_{i\in I};\boldsymbol{\varrho}\right)\,\,,
%\end{equation*}
where $\mathfrak{C}_{J}$ denotes the L\'evy copula restricted to the index set $J$, i.e.\
\[
    \mathfrak{C}_{J} \bigl( \{u_i\}_{i \in J} \bigr) 
    := 
    \lim_{\Delta t \rightarrow 0}
    \frac{\overline{C}_J\bigl( \{\Delta t\,u_i\}_{i \in J} \bigr)}{\Delta t} \,\,.
\]
\end{proposition}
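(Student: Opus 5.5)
We take the limit of the discrete-time log-likelihood of Proposition~\ref{pr:CombBernoulli_ll} as $\Delta t\to0^+$, with $N=T/\Delta t$ monitoring times and the Poisson scaling $p_I^\perp=\lambda_I^\perp\Delta t$. We then identify the leading terms, after subtracting the parameter-free normalisation $\sum\ln\Delta t$ that comes from the discretisation of event times.

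The first step is to express everything through the occurrence probabilities. Since $p_i=\sum_{I\ni i}p_I^\perp$, we have $p_i=\lambda_i\Delta t$. Using Lemma~\ref{lem:prob_active_set} and the definition of $\mathfrak{C}_J$, we get $\overline C_J(\{p_i\}_{i\in J})=\Delta t\,\mathfrak{C}_J(\{\lambda_i\}_{i\in J})+o(\Delta t)$. Hence $\lambda_I^\perp=\sum_{J\supseteq I}(-1)^{|J|-|I|}\mathfrak{C}_J(\{\lambda_i\}_{i\in J})$, which is consistent with the Poisson scaling.

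The second step splits the observations by active set.

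\emph{Zero observations.} These are all but finitely many (their number stays bounded in probability). Each one contributes $\ln C(1-p_1,\dots,1-p_d)=\ln\bigl(1-\sum_{I\neq\varnothing}p_I^\perp\bigr)=-\Delta t\sum_{I\neq\varnothing}\lambda_I^\perp+o(\Delta t)$. Multiplying by $N-O(1)$ gives $-T\sum_{I\neq\varnothing}\lambda_I^\perp$.

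\emph{Observations with active set $I\neq\varnothing$.} For these the marginal term is $\sum_{i\in I}\ln(\lambda_i\Delta t\,\psi_i(x_i))$. This yields $\sum_{i\in I}\ln(\lambda_i\psi_i)$ plus $|I|\ln\Delta t$.

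The third step, and the main obstacle, is the copula-derivative term evaluated at $u_i=F_i(x_i)$. Here $u_i=1-p_i\overline{\Psi_i}(x_i)$ for $i\in I$ and $u_j=1-p_j$ for $j\notin I$. The plan is to rewrite $C$ via inclusion--exclusion in terms of survival copulas of the restrictions, $C(\boldsymbol u)=\sum_{J}(-1)^{|J|}\overline C_J(\{1-u_j\}_{j\in J})$. Differentiating in $u_i$, $i\in I$, kills every $J\not\supseteq I$ and produces a sign $(-1)^{|I|}$ from the chain rule. The arguments then become $1-u_i=\Delta t\,\lambda_i\overline{\Psi_i}(x_i)$ for $i\in I$ and $\Delta t\,\lambda_j$ otherwise. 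By homogeneity of the scaling, $\partial^{|I|}_{u_I}\overline C_J(\Delta t\,\boldsymbol v)$ with respect to the original variables equals $\Delta t^{1-|I|}\,\partial^{|I|}_{v_I}\mathfrak{C}_J(\boldsymbol v)+o(\Delta t^{1-|I|})$. This requires interchanging the limit defining $\mathfrak{C}_J$ with differentiation, which we assume holds by uniform convergence of derivatives. With that interchange, the bracket becomes $\Delta t^{1-|I|}\zeta(\boldsymbol x)$ up to the overall sign, which we absorb: the sign convention in $\zeta$ with $(-1)^{|J|}$ must be checked to yield a positive quantity.

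We must also resolve an apparent mismatch. For $j\notin I$, the definition of $\zeta$ evaluates at $u_j=\lambda_j\overline{\Psi_j}(x_j)$ with $x_j=0$, so $\overline{\Psi_j}(0)=1$ and this equals $\lambda_j$, which is consistent with the discrete evaluation point.

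Finally, we sum up. Each event with active set $I$ contributes $\ln\zeta+\sum_{i\in I}\ln(\lambda_i\psi_i)+\ln\Delta t+o(1)$, where the power of $\Delta t$ combines as $|I|+(1-|I|)=1$. Discarding the parameter-independent $\ln\Delta t$ per event (the Jacobian of the event time), and regrouping $-T\sum_I\lambda_I^\perp$ inside the outer sum over $I$, gives \eqref{eq:Comb_Poi_ll}.
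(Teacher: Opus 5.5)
Your argument follows the paper's own route. You group observations by active set, take the Poisson limit $e^{-T\sum_{I\neq\varnothing}\lambda_I^{\perp}}$ of the no-claim factor, and use inclusion--exclusion to write the copula derivative as an alternating sum of derivatives of the restricted survival copulas. You then rescale to $\mathfrak{C}_J$ (the paper also interchanges limit and differentiation without comment) and drop one $\ln\Delta t$ per event.

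The one loose end is the sign you propose to ``absorb''. Your chain-rule factor $(-1)^{|I|}$ is genuine: for $d=2$ and $I=\{1\}$, the $\zeta$ as stated equals $-\bigl(1-\partial_{u_1}\mathfrak{C}(u_1,\lambda_2)\bigr)\le 0$ at $u_1=\lambda_1\overline{\Psi_1}(x_1)$. You should therefore say explicitly that the correct weight in $\zeta$ is $(-1)^{|J|-|I|}$, which is the $(-1)^{|I|+|J|}$ the paper's proof itself uses, rather than the $(-1)^{|J|}$ written in the statement.
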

\begin{proof}
    See Appendix \ref{sec:Proofs}
\end{proof}

%As in the discrete-time case, parameter estimation can be efficiently performed using the IFM method, which exploits the likelihood factorization to estimate marginal parameters separately from dependence parameters.
%The algorithm for simulating this type of process has been developed in \cite{tankov2006simulation} and in \cite{esmaeili2010} for a bivariate process.

\noindent
This result establishes that the discrete-time Comb-Bernoulli model converges to a Lévy copula model in continuous-time.
Notably, the obtained likelihood expression \eqref{eq:Comb_Poi_ll} recovers the likelihood of the bivariate Lévy copula model by \citet[Th.~4.1, p.~227]{esmaeili2010} as a special case, while providing a straightforward generalization to arbitrary dimension $d$.

Proposition \ref{pr:CombPoisson_ll} thus formalizes a theoretical bridge between discrete- and continuous-time dependence modeling: the Comb-Bernoulli represents a tractable discrete-time analog of Lévy copula models, and its elementary structure offers substantial computational advantages in both likelihood evaluation and model simulation. The latter aspect is the focus of the next section.

%
%In particular, while the simulation of the continuous-time model can be performed using the existing Lévy copula algorithms
%\citet[][for the general multivariate case]{tankov2006simulation}
%and \cite[][in the bivariate setting]{esmaeili2010}.

\subsection{Simulation}
\label{sec:ParametricSimulation}

The Comb-Bernoulli model can be simulated efficiently using a straightforward two-step procedure. Algorithm \ref{al:parametric} presents the complete simulation scheme for generating a time series of length $N$.

%------------------------------------------------------------------
% ORIGINAL ALGORITHM
%------------------------------------------------------------------
\iffalse
\begin{algorithm}[htbp]
	\caption{Comb-Bernoulli simulation}\label{al:parametric}
	\begin{algorithmic}
					
		\Procedure{Parametric Simulation}{$N$}       %\Comment{This is a test}
		\State
		\For{$t = 1, \ldots,N$} %\Comment{repeat for each time step}
		 \item \State DRAW $\boldsymbol{u}^{(t)} := \left[u_i^{(t)}\right]_{i=1}^d$ from the copula $C(u_1,\dots,u_d;\boldsymbol{\varrho})$
		 \Comment{Step (i): copula}
	
		\EndFor
		\State
		\For{$i = 1,\ldots,d$} 											\Comment{Step (ii): compute claims}
			\If{ $u_i^{(t)} \leq 1-p_i $}
				\State COMPUTE $\quad x_i^{(t)}$ $\gets$ $0  \,,$ %\Comment{null claim}			
			\Else
				\State COMPUTE 
$\quad x_i^{(t)}$ $\gets$ $\displaystyle \Psi_i^{-1} \left( \frac{u_i^{(t)}-(1-p_i)}{p_i};\boldsymbol{\theta}_i \right) \,.$	
			\EndIf			
		\EndFor
		\EndProcedure
					
	\end{algorithmic}
\end{algorithm}
\fi

\begin{tcolorbox}[colback=gray!5, colframe=black!75,
                 title=Algorithm~\refstepcounter{algorithm}\label{al:parametric}\thealgorithm.
                 Comb-Bernoulli Simulation,
                  fonttitle=\bfseries, sharp corners, boxrule=0.8pt]
{\setlength{\baselineskip}{1.6em}
\begin{algorithmic}[1]
    %\Procedure{ParametricSimulation}{$N$}
        \Require \textit{N} (number of simulations)
                 %$\boldsymbol{p} = [p_1,\dots,p_d]$, 
                 %copula $C(u_1,\dots,u_d;\boldsymbol{\varrho})$, 
                 %$\Psi_i^{-1}$ for each $i$

        \For{$t = 1, \dots, N$}
            \State Draw 
            $\boldsymbol{u}^{(t)} := \big[u_1^{(t)},\dots,u_d^{(t)} \big]^{\top}$ from $C(\boldsymbol{u}^{(t)};\boldsymbol{\varrho})$
            \Comment{Step 1: simulate copula}
        \For{$i = 1, \dots, d$} \Comment{Step 2: compute claims}
            \If{$u_i^{(t)} \leq 1 - p_i$}
                \State Update\, $x_i^{(t)} \gets 0$
            \Else
                \State Update\,
                $x_i^{(t)} \gets \displaystyle \Psi_i^{-1}
                \Big(\,\frac{u_i^{(t)} - (1-p_i)}{p_i}; \, \boldsymbol{\theta}_i \,\Big)$
            \EndIf
        \EndFor
        \EndFor
    %\EndProcedure
\end{algorithmic}
}
\end{tcolorbox}

\smallskip
%Let us briefly comment on the simulation procedure.
Through Step 1, we sample a $d$-dimensional vector
$\boldsymbol{u}^{(t)}$ from the copula $C(\,\bigcdot\,;\boldsymbol{\varrho})$ with dependence parameter $\boldsymbol{\varrho}$ (see Figure \ref{figure:figura_0}a for the bivariate case).
Step 2 then applies the key transformation of our model: the $i$-th component incurs a loss if and only if $u_i^{(t)}$ exceeds the threshold $1-p_i$\,; otherwise, it is set to zero. This threshold mechanism translates the copula structure into
claim occurrence and severity, 
forming the cornerstone of the Comb-Bernoulli framework.

In the bivariate case, for example, Step 2 gives rise to four distinct cases, corresponding to the four regions in Figure \ref{figure:figura_0}b: 
both claims are positive (\textit{upper-right});
a positive claim on $x_1$, zero on $x_2$ (\textit{lower-right});
a positive claim on $x_2$, zero on $x_1$ (\textit{upper-left});
both claims are zero (\textit{lower-left}).

\medskip
Algorithm \ref{al:parametric} offers three main advantages compared to the simulation algorithm of L\'evy copulas: \textit{i)} it can be easily vectorized, \textit{ii)} it requires the inversion only of univariate cdfs, \textit{iii)} it scales linearly in the number of risk classes $d$. 
Lévy copula simulation \citep[cf.][]{tankov2006simulation, esmaeili2010}, in contrast, does not present the same advantages. In particular, it scales exponentially in the number of risk classes, as it requires the simulation of $2^d-1$ separate processes, as well as the inversion of multivariate cdfs. For instance, in the trivariate case, 
one must invert three bivariate cdfs and a trivariate cdf --~a task that is computationally onerous in most cases.

A detailed quantitative assessment of these computational gains is provided in Section \ref{sec:Results}.

\begin{figure}[t]
	\centering
	\scalebox{0.95}{
		\begin{tikzpicture}

	% colored areas
	%\fill[gray!40] (1.50, 1.10) rectangle (4.00, 4.00);

	\foreach \Point in {
	(3.77,3.51), 
	(2.42,2.33), 
	(0.29,0.91), 
	(3.33,1.34), 
	(3.88,3.84), 
	(3.87,3.20), 
	(2.40,1.39), 
	(0.66,0.53),
	(0.97,0.68), 
	(3.30,3.16), 
	(0.47,0.51), 
	(0.35,0.90), 
	(1.76,0.47),
	(0.78,2.29),
	(1.19,1.88),
	(2.51,2.51),
	(2.63,3.87),
	(2.66,2.87),
	(2.95,1.93),
	(2.45,1.27),
	(1.73,0.17),
	(1.47,3.09),
	(2.31,3.01),
	(1.87,3.69),
	(3.80,3.92),
	(3.22,1.84),
	(0.46,0.81),
	(0.57,1.75),
	(2.43,1.26),
	(3.13,3.25),
	(0.72,1.40),
	(2.17,2.31),
	(0.73,0.08),
	(1.41,2.58),
	(1.63,2.44),
	(1.67,2.22),
	(2.69,2.56),
	(1.74,2.63),
	(0.64,0.96),
	(3.71,3.06),
	(2.25,1.60),
	(0.17,1.59),
	(2.35,3.22),
	(3.58,3.13),
	(0.87,1.82),
	(1.12,2.12),
	(2.89,2.59),
	(0.56,1.14),
	(0.10,0.61),
	(2.15,2.13),
	(2.94,0.97),
	(3.41,1.87),
	(1.22,0.83),
	(3.19,1.28)
	}{
    	\node[blue] at \Point {\textbullet};
	}

	% axes
    \draw[->, line width=0.3mm] (-0.5, 0) -- (5.5, 0);
    \draw[->, line width=0.3mm] (0, -0.5) -- (0, 5.5);
    \node at (5.00, 0.30) {$u_1$};
    \node at (0.35, 5.00) {$u_2$};

	% other perimeter line
 	\draw[-, line width=0.5mm] (0, 4) -- (0, 0);
    \draw[-, line width=0.5mm] (4, 0) -- (0, 0);
    \draw[-, line width=0.5mm] (0, 4) -- (4, 4);
    \draw[-, line width=0.5mm] (4, 0) -- (4, 4);

    \node at (4.00, -0.35) {$1$};
    \node at (-0.25, 4.00) {$1$};
    
	% other perimeter line
    \draw[dashed, line width=0.3mm, color=gray!60] (0, 1.1) -- (4, 1.1);
    \draw[dashed, line width=0.3mm, color=gray!60] (1.5, 0) -- (1.5, 4);

    %\node at (1.50, -0.35) {$1-p_1$};
    %\node at (-0.85, 1.10) {$1-p_2$};
    
    \node at (2, -1.25) {(a)}; 
    
\end{tikzpicture}
% End of code
	}
	\hspace{0.5cm}
	\scalebox{0.95}{
		\begin{tikzpicture}

	% colored areas
	\fill[gray!40] (1.50, 1.10) rectangle (4.00, 4.00);

	\draw [pattern={Lines[angle=45,distance=5pt]},pattern color=gray!60] 
    (1.50, 0.00) rectangle (4.00, 1.10);
	\draw [pattern={Lines[angle=45,distance=5pt]},pattern color=gray!60] 
    (0.00, 1.10) rectangle (1.50, 4.00);

	% axes
    \draw[->, line width=0.3mm] (-0.5, 0) -- (5.5, 0);
    \draw[->, line width=0.3mm] (0, -0.5) -- (0, 5.5);
    \node at (5.00, 0.30) {$u_1$};
    \node at (0.35, 5.00) {$u_2$};

	% other perimeter line
 	\draw[-, line width=0.5mm] (0, 4) -- (0, 0);
    \draw[-, line width=0.5mm] (4, 0) -- (0, 0);
    \draw[-, line width=0.5mm] (0, 4) -- (4, 4);
    \draw[-, line width=0.5mm] (4, 0) -- (4, 4);

    \node at (4.00, -0.35) {$1$};
    \node at (-0.25, 4.00) {$1$};
    
	% other perimeter line
    \draw[-, line width=0.3mm] (0, 1.1) -- (4, 1.1);
    \draw[-, line width=0.3mm] (1.5, 0) -- (1.5, 4);

    \node at (1.50, -0.35) {$1-p_1$};
    \node at (-0.85, 1.10) {$1-p_2$};

	%\draw[pattern=crosshatch dots gray, pattern color=blue] (1.50, 1.10) rectangle (4.00, 4.00);
%	\draw[pattern=north west lines, pattern color=blue] (1.50, 1.10) rectangle (4.00, 4.00);

    \node at (2, -1.25) {(b)};     
    
\end{tikzpicture}
% End of code
	}
	\caption{\small Simulation of a two-dimensional Comb-Bernoulli. 
	First, a pair $(u_1, u_2)$ is sampled from a two-dimensional copula (\textit{left}), following Step 1 of Algorithm~\ref{al:parametric}.
	Next, the claims are generated by applying Step 2:
	four cases are possible, corresponding to the four depicted regions (\textit{right}). 
We indicate the co-jump region in grey, the no-jump region in white, and 
the region with only a non-zero claim ($x_1$ or $x_2$) with a hatch pattern.
	}
	\label{figure:figura_0}
\end{figure}

\subsection{Parameter estimation} \label{ssec:estimation_of_the_copula_parameters}

This section develops the estimation methodology for the parameters of the Comb-Bernoulli model, with particular focus on the copula dependence $\boldsymbol{\varrho}$. 
We first present a technique to obtain point estimates of the model parameters, based on the IFM method, then we discuss two methodologies for the construction of the confidence intervals.

%\footnote{The method described in this Section is valid for any choices of the copula $C(\bigcdot;\boldsymbol{\varrho})$, that admits a bounded (from above) likelihood.}

\subsubsection{Point estimation} \label{sec:param_estim}

The IFM method is a two-stage estimation procedure widely used in the copula literature \citep[see, e.g.,][and references therein]{CLV,cesari2019}. 
In finance, this method is an example of cascade estimation, where the parameters related to model parts requiring greater accuracy (e.g., more liquid financial products) are calibrated first, with the remaining parameters estimated subsequently
\citep[see, e.g.][]{brigo2005empirically,baviera2026additive}.

\smallskip\noindent
In the context of the Comb-Bernoulli model, the procedure is the following.

First, for each of the $d$ marginal distributions, one estimates the parameters $p_i$ and $\boldsymbol{\theta}_i$ using the univariate maximum likelihood estimators (MLEs) of the marginal density in \eqref{eq:MarginalModel}. %\footnote{
It is straightforward to show that, for the $i$-th marginal, the univariate MLE for $p_i$ is the fraction of positive observations in the $i$-th component, while the estimator for $\boldsymbol{\theta}_i$ is the MLE restricted to the positive observations 
--~effectively fitting $\psi_i$ to the non-zero losses only.
This yields closed-form estimators for the marginal parameters in all relevant cases, rendering the calibration extremely efficient and accurate. 
%}
We denote the obtained first-stage estimates as $\widehat{\boldsymbol{p}}$ and $\widehat{\boldsymbol{\Theta}}$.

Then, one estimates the copula dependence $\boldsymbol{\varrho}$ via the IFM estimator $\rhocb$, defined as
\begin{equation}        
\label{eq:argmax_p}
    \rhocb
    :=
    \argmax_{\displaystyle\boldsymbol{\varrho}
    }
    \mathcal{L}_{\widehat{\boldsymbol{p}}}\left(\widehat{\boldsymbol{\Theta}}, \boldsymbol{\varrho} \mid \mathcal{X}\right)\,,
\end{equation}
namely as the maximizer of the log-likelihood \eqref{eq:log_likelihood_CB} once the marginal parameters have been fixed at their first-stage estimates
$\widehat{\boldsymbol{p}}$ and $\widehat{\boldsymbol{\Theta}}$.

The maximization problem in \eqref{eq:argmax_p} can be solved numerically to compute the estimate $\rhocb$ for the copula parameter. While in practice this estimation can be performed by applying the first-order conditions \citep[see, e.g.,][]{joe1997multivariate}, the well-posedness of the estimator is valid under much broader assumptions. Specifically, by framing the proposed estimator $\rhocb$ within the broad class of extremum estimators, its consistency and asymptotic normality can be established for a general parametric copula, relying on the topological properties of the parameter space and the interiority of the true parameter, following the theoretical framework outlined in \cite{newey1994large}.

\subsubsection{Confidence Intervals}

A point estimate alone is of limited use in insurance applications, as risk assessment requires reliable quantification of the associated uncertainty.
In what follows,
%In the following,
we describe two main approaches to construct CIs for the Comb-Bernoulli parameters: a parametric bootstrap and a method based on the estimating functions theory.

\smallskip
The parametric bootstrap \citep[see, e.g.,][Ch.10]{efron2021computer} is a powerful numerical technique for construction of CIs. In the Comb-Bernoulli case, its implementation is elementary thanks to the availability of the efficient simulation algorithm developed in Section \ref{sec:ParametricSimulation}. 
The parametric bootstrap procedure involves three steps:
\begin{enumerate}[noitemsep, topsep=2pt]
\item Simulate $B$ \textit{replicas} of the original time series of claims, using Algorithm \ref{al:parametric} with the estimated parameters $\widehat{\boldsymbol{p}}$, $\widehat{\boldsymbol{\Theta}}$, and $\rhocb$;
\item For each replica, re-estimate the parameters of interest via IFM to obtain $B$ estimates;
\item Construct the CIs from the empirical quantiles at levels $\alpha/2$ and $1-\alpha/2$ of the bootstrapped estimates (e.g., $\alpha = 0.05$ for a $95\%$ CI). 
If multiple parameters are of interest, the Bonferroni correction \citep[see, e.g.,][]{dunn1958estimation} replaces $\alpha$ with $\alpha/m$, where $m$ is the number of parameters.
\end{enumerate}

For example, if we only aim to infer the dependence parameters $\boldsymbol{\varrho}$
--~here assumed to be an $m$-dimensional vector $\boldsymbol{\varrho} = [\varrho_1, \ldots, \varrho_m]^{\top}$~-- the bootstrapped $\alpha$-CI for each element $\varrho_j$ is
\[
\left[ \widehat{\varrho}_{j, \, \alpha/(2m)},\; \widehat{\varrho}_{j, \, 1-\alpha/(2m)} \right],
\]
where $\widehat{\varrho}_{j, \,q}$ denotes the empirical $q$-quantile of the bootstrapped estimates for $\varrho_j$.

\bigskip
As an alternative to the parametric bootstrap, asymptotic CIs for the Comb-Bernoulli parameters can be constructed using the estimating functions theory. Based on the results in \citet{joe1997multivariate} and \citet{newey1994large}, the distribution of the IFM estimates is asymptotically normal with covariance matrix given by the inverse of the Godambe information matrix -- a generalization of the Fisher information matrix that can be computed numerically 
\citep[see, e.g.,][p.~405, eq.~(2.6)]{joe2005asymptotic}. Individual CIs for the parameters of interest are then obtained either by projecting the joint (elliptical) confidence region onto the coordinate axes, or directly from the diagonal elements of the covariance matrix, applying a Bonferroni correction if needed.

\section{The Gaussian copula case}
\label{sec:gaussian_copula}

%[Cappello]
The Gaussian copula is widely used in the financial industry. Its popularity stems from several key advantages: considerable flexibility in specifying correlation parameters, straightforward implementation, and well-established theoretical properties \citep[see, e.g.,][]{embrechts2002correlation, guegan2012aggregation,zhang2013predicting}.

%[Banking]
The Gaussian copula has found extensive applications in the banking sector. The cornerstone of credit risk capital requirements in banks’ internal models 
-- \cite{gordy2003risk} -- is a Gaussian copula model with a separable correlation matrix. This specification forms the basis of the Basel II Internal Ratings-Based (IRB) framework \citepalias[][]{baselII} and its subsequent modifications \citepalias[see, e.g,][\citeauthor{hull2012risk} \citeyear{hull2012risk}]{baselIII}. This approach generalizes earlier seminal Gaussian copula credit models \citep{vasicek1987probability,li2000default}.

%[Insurance]
In insurance modeling, the Gaussian copula is used in determining solvency capital requirements, economic capital allocation, and diversification benefits, and in pricing certain multi-risk products. It can be readily scaled to the number of risk factors typically involved in Solvency II internal models (IM) and relies on relatively elementary formulas.
%
%[Integrated Risk management]
Moreover, in integrated enterprise-wide risk management, the Gaussian copula can play an important role within insurance companies. As emphasized by \citet[p.570]{rosenberg2006general}, ``the goal of integrated risk management is to both measure and manage risk and capital across a diverse range of risks in the insurance sector.'' Such an approach requires aggregating different types of risk (market, actuarial, and operational), whose underlying characteristics may vary considerably.

In their seminal paper, \cite{ward2002practical} were the first to derive the total loss distribution for a diversified insurer, using a normal copula to aggregate a heterogeneous set of risks. The Gaussian copula framework allows for the incorporation of realistic marginal distributions that capture key empirical features, while also providing a flexible dependence structure across risks.

On this respect, \citet[p.892]{wang1998aggregation} emphasizes that ``the normal copula is very flexible as it allows any (symmetric, positive definite) matrix of rank correlation coefficients (\ldots) In many practical situations, we only have partial information about correlation parameters without knowing the exact underlying multivariate distribution. In such cases, a normal copula provides a simple method for simulating correlated variables''. The Gaussian copula therefore satisfies two key requirements for integrated risk management: \textit{i)} it allows for a rich dependence structure across risks, and \textit{ii)} it remains applicable even when information about inter-risk dependencies is limited (e.g., when only correlations are available).

A Solvency II internal model may involve a large number of risk factors -- around $20$ when considering a single risk class (e.g., operational risk), and up to $100$ in a fully integrated model. Such high dimensionality poses challenges for the existing modeling approaches, as noted in 
%the Introduction. 
Section \ref{sec:introduction}.
Addressing this issue in the context of modeling dependence in sparse time series is a key objective of this work. In this setting, the Gaussian copula provides a natural and practical choice for both fitting and simulation in internal models.

Finally, it can be useful to emphasize that the results we present in this section can be extended to any elliptical copula. The Gaussian case remains a relevant benchmark, noting that, as highlighted by \citet[p.571]{rosenberg2006general}, the choice of elliptical copula (e.g., normal versus Student-t), which determines the degree of tail dependence (for continuous marginals), often has a relatively modest impact on total risk.

\smallskip
In this Section, we present the main properties of the Comb-Bernoulli model when the dependence structure is specified by a Gaussian copula.
First, we derive a closed-form expression for the log-likelihood, and then we analyze the asymptotic behavior of the model as the probability masses at zero vanish, demonstrating consistency with classical copula theory.

\subsection{Elementary closed-form likelihood expression}
%\subsection{Closed-form likelihood expression}

In the Gaussian copula, the dependence is fully encoded by a correlation matrix $\mathbf{R} \in \mathbb{R}^{d \times d}$.
%which is symmetric and positive definite.
The generic copula parameter $\boldsymbol{\varrho}$ in \eqref{eq:copula_definition} is therefore 
identified with this matrix, 
%i.e.\ $\boldsymbol{\varrho} \equiv \mathbf{R}$, 
with the understanding that $\boldsymbol{\varrho}$ is a vector consisting of the $d(d-1)/2$ unique off-diagonal entries of $\mathbf{R}$. Moreover, we denote with $\mathcal{K}_d\subset\mathbb{R}^{d\times d}$ the set of $d\times d$ correlation matrices.\footnote{
%
%In the following, we will thus write $\boldsymbol{\varrho}\in\mathcal{K}_d$ to indicate that $\mathbf{R}\in\mathcal{K}_d$.
%\footnote{
Throughout this section, we will use $\boldsymbol{\varrho}$ as the parameter in likelihood-based inference, and we will write $\mathbf{R}$ as shorthand for $\mathbf{R}(\boldsymbol{\varrho})$ when working with matrix operations. This slight abuse of notation simplifies the exposition and is justified by the one-to-one correspondence $\boldsymbol{\varrho} \leftrightarrow \mathbf{R}$.
}
%We denote with $\mathcal{K}_d\subset\mathbb{R}^{d\times d}$ the set of $d\times d$ correlation matrices. %and
%
%
%In the following, we will thus 
%write $\boldsymbol{\varrho}\in\mathcal{K}_d$ to indicate that $\mathbf{R}\in\mathcal{K}_d$.

\smallskip
The Gaussian copula function and its density are defined in terms of the standard multivariate normal distribution. For a vector $\boldsymbol{u} = [u_1, \dots, u_d]^\top \in [0,1]^d$, the copula function is given by:
\begin{equation}
    C( u_1, \dots,  u_d; \mathbf{R} )
    =
    \Phi_d( \Phi^{-1}(u_1), \dots, \Phi^{-1}(u_d); \mathbf{R} ) \, ,
\end{equation}
where $\Phi_d(\bigcdot; \mathbf{R})$ is the cdf of the centered $d$-variate Gaussian distribution with covariance matrix $\mathbf{R}$ \citep[see, e.g.,][Sec.4.8, p.147]{CLV}.
The corresponding copula density is given by:
\begin{equation}
\label{eq:NormalCopulaDensity}
    c ( u_1, \dots, u_d; \mathbf{R}) 
    %= 
    %\frac{
    %    \varphi_d( \Phi^{-1}(u_1), \dots, \Phi^{-1}(u_d); \mathbf{R} )
    %}{
    %    \prod_{i=1}^{n}  \varphi( \Phi^{-1}(u_i) )
    %}
    =
    \frac{
        \varphi_d( \Phi^{-1}(u_1), \dots, \Phi^{-1}(u_d); \mathbf{R} )
    }{
        \varphi_d( \Phi^{-1}(u_1), \dots, \Phi^{-1}(u_d); \mathbf{I}_d )
    } \,,
\end{equation}
where $\varphi_d(\bigcdot\,; \mathbf{R})$ is the pdf of the centered $d$-variate Gaussian distribution with covariance matrix $\mathbf{R}$, and $\mathbf{I}_d$ is the $d \times d$ identity matrix.

To establish notation for subsequent results, we introduce the following conventions. For 
any index set $\mathcal{S} \subseteq \mathbb{S}$ with cardinality $\mathsf{s} := |\mathcal{S}|$, we define its complement as $\mathcal{T} := \mathcal{S}^{\,\complement} = \{1, \dots, d\} \setminus \mathcal{S}$.
Moreover, we define the vector $\mathbf{z} := \left[ \Phi^{-1}(u_1), \dots, \Phi^{-1}(u_d) \right]^\top$, and partition it into the sub-vectors:
\begin{equation}\label{eq:zs_and_zt}
    \mathbf{z}_{\mathcal{S}} := 
    [\Phi^{-1}(u_i)]_{i \in \mathcal{S}}
    \hspace{1cm}\textnormal{and}\hspace{1cm}
    \mathbf{z}_{\mathcal{T}} := 
    [\Phi^{-1}(u_i)]_{i \in \mathcal{T}} ~.
\end{equation}
The correlation matrix $\mathbf{R}$ is partitioned conformably as:
\begin{equation}\label{eq:R_partitioning}
    \mathbf{R} =
    \begin{bmatrix}
        \mathbf{R}_{\mathcal{S} \mathcal{S}} &
        \mathbf{R}_{\mathcal{S} \mathcal{T}} \\
        \mathbf{R}_{\mathcal{T} \mathcal{S}} &
        \mathbf{R}_{\mathcal{T} \mathcal{T}}
    \end{bmatrix},
\end{equation}
where, for example, $\mathbf{R}_{\mathcal{S} \mathcal{S}}$ is the $\mathsf{s} \times \mathsf{s}$ correlation sub-matrix for the components in $\mathcal{S}$.\,\footnote{
It is well known from linear algebra that, if $\mathbf{R}$ is positive definite, then its principal submatrices $\mathbf{R}_{\mathcal{S}\mathcal{S}}$ and $\mathbf{R}_{\mathcal{T}\mathcal{T}}$
are also positive definite, and hence invertible.
}

\medskip
%\subsection{Closed-form likelihood expression}
The following Lemma provides a closed form-expression of the mixed partial derivative of the Gaussian copula. This result is fundamental for computing the model log-likelihood, as it provides the key copula term appearing in \eqref{eq:log_likelihood_CB}, leading to a closed-form expression for the log-likelihood.

\begin{lemma} \label{lemma:GaussianCopulaDerivative}
Let $C(u_1, \dots, u_d; \mathbf{R})$ be a $d$-dimensional Gaussian copula. The mixed partial derivative of $C$ w.r.t.\ the arguments in the index set $\mathcal{S}$, as it appears in \eqref{eq:log_likelihood_CB}, is given by
\begin{equation} \label{eq:gaussian_mixed_drvtv}
    \frac{\partial^{\, \mathsf{s}} C(u_1, \dots, u_d; \mathbf{R})}
     {\prod_{i \in \mathcal{S}} \partial u_i}
    =
    \frac{
    \varphi_{\mathsf{s}}\left( \mathbf{z}_{\mathcal{S}}; \mathbf{R}_{\mathcal{S}\mathcal{S}} \right)
    }{
    \varphi_{\mathsf{s}}\left( \mathbf{z}_{\mathcal{S}}; \mathbf{I}_{\mathsf{s}} \right)
    }
    \,\,
    \Phi_{d-\mathsf{s}}\left(
    \mathbf{z}_{\mathcal{T}} -
    \mathbf{R}_{\mathcal{T}\mathcal{S}} \mathbf{R}_{\mathcal{S}\mathcal{S}}^{-1} \mathbf{z}_{\mathcal{S}};\,
    \mathbf{R}_{\mathcal{T}\mathcal{T}} - \mathbf{R}_{\mathcal{T}\mathcal{S}} \mathbf{R}_{\mathcal{S}\mathcal{S}}^{-1} \mathbf{R}_{\mathcal{S}\mathcal{T}}
    \right).
\end{equation}
\end{lemma}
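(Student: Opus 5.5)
Our plan is to write the copula as an integral of the multivariate normal density, differentiate under the integral in the variables indexed by $\mathcal{S}$ (chain rule through $z_i=\Phi^{-1}(u_i)$), and then factor the joint Gaussian density into a marginal on $\mathcal{S}$ times a conditional on $\mathcal{T}$.

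\textbf{Step 1: reduction to $\mathbf{z}$-space.} We write
\[
C(\boldsymbol{u};\mathbf{R})=\int_{-\infty}^{z_1}\!\!\cdots\int_{-\infty}^{z_d}\varphi_d(\mathbf{y};\mathbf{R})\,d\mathbf{y}, \qquad z_i=\Phi^{-1}(u_i).
\]
For $i\in\mathcal{S}$ we have $\partial z_i/\partial u_i = 1/\varphi(z_i)$. Since each $z_i$ depends only on $u_i$, the chain rule gives
\[
\frac{\partial^{\mathsf{s}}C}{\prod_{i\in\mathcal{S}}\partial u_i}=\frac{1}{\prod_{i\in\mathcal{S}}\varphi(z_i)}\;\frac{\partial^{\mathsf{s}}\Phi_d(\mathbf{z};\mathbf{R})}{\prod_{i\in\mathcal{S}}\partial z_i}.
\]
The prefactor equals $1/\varphi_{\mathsf{s}}(\mathbf{z}_{\mathcal{S}};\mathbf{I}_{\mathsf{s}})$. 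By the fundamental theorem of calculus, applied once in each $z_i$ with $i\in\mathcal{S}$ (justified by Fubini/Tonelli and continuity of the integrand), the mixed derivative equals
\[
\int_{\mathbf{y}_{\mathcal{T}}\le \mathbf{z}_{\mathcal{T}}}\varphi_d(\mathbf{z}_{\mathcal{S}},\mathbf{y}_{\mathcal{T}};\mathbf{R})\,d\mathbf{y}_{\mathcal{T}},
\]
where the variables are reordered so that $\mathcal{S}$ comes first, consistent with the partition \eqref{eq:R_partitioning}.

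\textbf{Step 2: Gaussian factorization.} For $\mathbf{Y}\sim N(\mathbf{0},\mathbf{R})$ we use the standard decomposition
\[
\varphi_d(\mathbf{y}_{\mathcal{S}},\mathbf{y}_{\mathcal{T}};\mathbf{R})=\varphi_{\mathsf{s}}(\mathbf{y}_{\mathcal{S}};\mathbf{R}_{\mathcal{S}\mathcal{S}})\;\varphi_{d-\mathsf{s}}\bigl(\mathbf{y}_{\mathcal{T}}-\mathbf{R}_{\mathcal{T}\mathcal{S}}\mathbf{R}_{\mathcal{S}\mathcal{S}}^{-1}\mathbf{y}_{\mathcal{S}};\boldsymbol{\Sigma}\bigr),
\qquad
\boldsymbol{\Sigma}=\mathbf{R}_{\mathcal{T}\mathcal{T}}-\mathbf{R}_{\mathcal{T}\mathcal{S}}\mathbf{R}_{\mathcal{S}\mathcal{S}}^{-1}\mathbf{R}_{\mathcal{S}\mathcal{T}}.
\]
This follows from the block ($\mathbf{LDU}$/Schur complement) factorization of $\mathbf{R}^{-1}$ together with $\det\mathbf{R}=\det\mathbf{R}_{\mathcal{S}\mathcal{S}}\det\boldsymbol{\Sigma}$. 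Integrating over $\mathbf{y}_{\mathcal{T}}\le\mathbf{z}_{\mathcal{T}}$ and shifting by the conditional mean gives $\Phi_{d-\mathsf{s}}(\mathbf{z}_{\mathcal{T}}-\mathbf{R}_{\mathcal{T}\mathcal{S}}\mathbf{R}_{\mathcal{S}\mathcal{S}}^{-1}\mathbf{z}_{\mathcal{S}};\boldsymbol{\Sigma})$. Combining this with the prefactor from Step 1 yields \eqref{eq:gaussian_mixed_drvtv}.

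\textbf{Step 3: edge cases and conventions.} We check the two extremes separately.
\begin{itemize}
\item $\mathcal{S}=\varnothing$: the density ratio is empty (equal to $1$) and the formula reduces to $\Phi_d(\mathbf{z};\mathbf{R})$.
\item $\mathcal{S}=\mathbb{S}$: $\Phi_0\equiv1$, and the formula recovers the copula density \eqref{eq:NormalCopulaDensity}.
\end{itemize}
When some $u_j=1$ for $j\in\mathcal{T}$, we have $z_j=+\infty$, and the corresponding coordinate simply drops out of $\Phi_{d-\mathsf{s}}$. This is the situation arising at $u_j=F_j(x_j;p_j)=1-p_j+\dots$ only if $p_j$ is degenerate; in general $u_j=1-p_j$ for zero claims and is finite.

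\textbf{Main obstacle.} The only delicate point is the Schur-complement factorization of the density in Step 2. We would prove it via the block inverse formula for $\mathbf{R}^{-1}$, completing the square in $\mathbf{y}_{\mathcal{T}}$ in the quadratic form $\mathbf{y}^\top\mathbf{R}^{-1}\mathbf{y}$. Positive definiteness of $\mathbf{R}_{\mathcal{S}\mathcal{S}}$ and of $\boldsymbol{\Sigma}$ guarantees that all the inverses involved exist.
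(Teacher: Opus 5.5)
Your proposal is correct and follows the same route as the paper: the chain rule through $z_i=\Phi^{-1}(u_i)$ with $\partial z_i/\partial u_i = 1/\varphi(z_i)$, then the fundamental theorem of calculus in the $\mathcal{S}$-coordinates, then factoring the Gaussian density into the marginal of $\mathbf{Z}_{\mathcal{S}}$ times the conditional law of $\mathbf{Z}_{\mathcal{T}}$ given $\mathbf{Z}_{\mathcal{S}}=\mathbf{z}_{\mathcal{S}}$. The only difference is that you derive the conditional mean and Schur-complement covariance by completing the square, whereas the paper cites this as a standard fact.
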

\begin{proof}
    See Appendix \ref{sec:Proofs}
\end{proof}

\noindent
Notably, the right-hand-side of \eqref{eq:gaussian_mixed_drvtv} can be interpreted as the product of two distinct factors:
the first term, $\displaystyle {
    \varphi_{\mathsf{s}}\left( \mathbf{z}_{\mathcal{S}}; \mathbf{R}_{\mathcal{S}\mathcal{S}} \right)
} \,/\,{
    \varphi_{\mathsf{s}}\left( \mathbf{z}_{\mathcal{S}}; \mathbf{I}_{\mathsf{s}} \right)
}$, is the copula density restricted only to the components in $\mathcal{S}$; the second term is the conditional cdf of the sub-vector $\mathbf{Z}_{\mathcal{T}}$ given that $\mathbf{Z}_{\mathcal{S}} = \mathbf{z}_{\mathcal{S}}$, i.e.\ %Specifically:
    \[
    \mathbb{P}\left( \mathbf{Z}_{\mathcal{T}} \leq \mathbf{z}_{\mathcal{T}} \mid \mathbf{Z}_{\mathcal{S}} = \mathbf{z}_{\mathcal{S}} \right) = \Phi_{d-\mathsf{s}}\left(
    \mathbf{z}_{\mathcal{T}} -
    \mathbf{R}_{\mathcal{T}\mathcal{S}} \mathbf{R}_{\mathcal{S}\mathcal{S}}^{-1} \mathbf{z}_{\mathcal{S}};\,
    \mathbf{R}_{\mathcal{T}\mathcal{T}} - \mathbf{R}_{\mathcal{T}\mathcal{S}} \mathbf{R}_{\mathcal{S}\mathcal{S}}^{-1} \mathbf{R}_{\mathcal{S}\mathcal{T}}
    \right).
    \]

\iffalse
Notably, the right-hand-side of \eqref{eq:gaussian_mixed_drvtv} can be interpreted as the product of two distinct factors:
\begin{itemize}
    \item The term $\displaystyle \frac{
    \varphi_{\mathsf{s}}\left( \mathbf{z}_{\mathcal{S}}; \mathbf{R}_{\mathcal{S}\mathcal{S}} \right)
}{
    \varphi_{\mathsf{s}}\left( \mathbf{z}_{\mathcal{S}}; \mathbf{I}_{\mathsf{s}} \right)
}$ is the copula density restricted only to the components in $\mathcal{S}$.
    \item The other term is the conditional cdf of the sub-vector $\mathbf{Z}_{\mathcal{T}}$ given that $\mathbf{Z}_{\mathcal{S}} = \mathbf{z}_{\mathcal{S}}$. Specifically:
    \[
    \mathbb{P}\left( \mathbf{Z}_{\mathcal{T}} \leq \mathbf{z}_{\mathcal{T}} \mid \mathbf{Z}_{\mathcal{S}} = \mathbf{z}_{\mathcal{S}} \right) = \Phi_{d-\mathsf{s}}\left(
    \mathbf{z}_{\mathcal{T}} -
    \mathbf{R}_{\mathcal{T}\mathcal{S}} \mathbf{R}_{\mathcal{S}\mathcal{S}}^{-1} \mathbf{z}_{\mathcal{S}};\,
    \mathbf{R}_{\mathcal{T}\mathcal{T}} - \mathbf{R}_{\mathcal{T}\mathcal{S}} \mathbf{R}_{\mathcal{S}\mathcal{S}}^{-1} \mathbf{R}_{\mathcal{S}\mathcal{T}}
    \right).
    \]
\end{itemize}
\fi

Lemma \ref{lemma:GaussianCopulaDerivative}, together with \eqref{eq:log_likelihood_CB}, provides an analytic expression for the log-likelihood in the Gaussian copula Comb-Bernoulli model.
The formula relies exclusively on the multivariate Gaussian cdf and pdf; hence, the numerical evaluation of the likelihood is highly efficient, as these functions are standard and optimized in all major mathematical software libraries.

\subsection{Asymptotic behavior and consistency with classical theory}

%A natural question arises: how does our framework relate to the classical copula setting when the probability masses at zero become negligible? This is particularly relevant for understanding the theoretical foundations of the Comb-Bernoulli approach and its connection to standard copula models with continuous marginals. We address this question by analyzing the limiting behavior of the log-likelihood as $\boldsymbol{p}$ tends to $\boldsymbol{1}$.

%The Comb-Bernoulli framework extends classical copula theory to accommodate probability masses at zero. This naturally raises the question: how does the model behave when these masses become negligible?

Having established the closed-form likelihood of the Gaussian Comb-Bernoulli model, we now turn to a fundamental theoretical question: how does the model behave when the probability masses at zero become negligible? This aspect is crucial for understanding the connection between the Comb-Bernoulli framework and standard copula models with continuous marginals. We address this question by analyzing the limiting behavior of the log-likelihood as $\boldsymbol{p}\to\boldsymbol{1}$.

\begin{proposition}\label{pr:conv_est_fcn_par}
For $\boldsymbol{p}\rightarrow\boldsymbol{1}$, the log-likelihood $\llcb(\boldsymbol{\Theta},\boldsymbol{\varrho}\mid\mathcal{X})$ in \eqref{eq:log_likelihood_CB} converges to the function
\begin{equation}
    \label{eq:limit_ll}    \llcbsp\left(\boldsymbol{\Theta},\boldsymbol{\varrho}\mid\mathcal{X}\right)
    :=
    \sum_{\boldsymbol{x}\in\mathcal{X}}
    \left[
    \ln 
    c\left(
    \Psi_1(x_1;\boldsymbol{\theta}_1),
    \dots,
    \Psi_d(x_d;\boldsymbol{\theta}_d);
    \boldsymbol{\varrho}
    \right)
    +
    \sum_{i\in\mathcal{S}(\boldsymbol{x})}
    \ln\big(
    \psi_i(x_i;\boldsymbol{\theta}_i)\big)
    \right]\,\,,
\end{equation}
almost surely for any $\boldsymbol{\varrho}\in\mathcal{K}_d$, where $c(\bigcdot;\boldsymbol{\varrho})$ is the Gaussian copula density in \eqref{eq:NormalCopulaDensity}.
\end{proposition}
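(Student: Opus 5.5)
The plan is to work observation by observation, since $\llcb$ is a finite sum over $\mathcal{X}$. We fix $\boldsymbol{\Theta}$ and $\boldsymbol{\varrho}\in\mathcal{K}_d$ (so $\mathbf{R}$ is positive definite) and take the limit $\boldsymbol{p}\to\boldsymbol{1}$ termwise.

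The key observation concerns the components that are exactly zero. For a generic observation $\boldsymbol{x}$ with active set $\mathcal{S}=\mathcal{S}(\boldsymbol{x})$ and complement $\mathcal{T}$, the marginal terms are easy: $\ln(p_i\psi_i)\to\ln\psi_i$. For the copula term we apply Lemma \ref{lemma:GaussianCopulaDerivative} with $u_i=F_i(x_i;p_i)$. For $i\in\mathcal{S}$ we have $u_i=(1-p_i)+p_i\Psi_i(x_i;\boldsymbol{\theta}_i)\to\Psi_i(x_i;\boldsymbol{\theta}_i)\in(0,1)$, so $\mathbf{z}_{\mathcal{S}}$ converges to a finite vector. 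For $j\in\mathcal{T}$ we have $u_j=1-p_j\to0$, so $z_j=\Phi^{-1}(1-p_j)\to-\infty$. By continuity of $\varphi_{\mathsf{s}}$, the first factor in \eqref{eq:gaussian_mixed_drvtv} converges to the restricted copula density evaluated at $\{\Psi_i(x_i)\}_{i\in\mathcal{S}}$. In the second factor, the argument $\mathbf{z}_{\mathcal{T}}-\mathbf{R}_{\mathcal{T}\mathcal{S}}\mathbf{R}_{\mathcal{S}\mathcal{S}}^{-1}\mathbf{z}_{\mathcal{S}}$ has every coordinate tending to $-\infty$. The conditional covariance is a fixed positive definite Schur complement, so this Gaussian cdf tends to $0$ and its logarithm tends to $-\infty$.

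It follows that, as a function of the data, the limit is finite only when every observation has full active set $\mathcal{S}(\boldsymbol{x})=\mathbb{S}$. So the ``almost surely'' qualifier must be read as follows: as $\boldsymbol{p}\to\boldsymbol{1}$, the data-generating law puts mass $C(1-p_1,\dots)$-type probabilities $\to0$ on any zero component. More precisely, $\mathbb{P}(X_j=0)=1-p_j\to0$, so by a union bound over the $N$ observations and $d$ components, $\mathbb{P}(\exists\,t,j: x^{(t)}_j=0)\le N\sum_j(1-p_j)\to0$. I will therefore argue that, with probability tending to one (equivalently, almost surely in the limiting model, whose marginals $\Psi_i$ have no atom at zero), all observations have $\mathcal{T}=\varnothing$.

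On that event the proof is immediate. With $\mathcal{T}$ empty, the second factor in \eqref{eq:gaussian_mixed_drvtv} is absent (equal to one), and the first factor is exactly the density \eqref{eq:NormalCopulaDensity} at $u_i=F_i(x_i;p_i)$. Continuity of $c(\cdot;\mathbf{R})$ on $(0,1)^d$ then gives convergence to $c(\Psi_1(x_1),\dots,\Psi_d(x_d);\boldsymbol{\varrho})$. Summing the limits yields \eqref{eq:limit_ll}, in which $\sum_{i\in\mathcal{S}(\boldsymbol{x})}$ is then the full sum.

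The main obstacle is precisely the interpretation and justification of ``almost surely'': the pointwise limit degenerates to $-\infty$ on observations containing zeros. I expect the cleanest route is the probabilistic one above. The first thing I would try is the coupling of Algorithm \ref{al:parametric}: a fixed copula draw $\boldsymbol{u}^{(t)}\in(0,1)^d$ (a.s.) eventually satisfies $u_i^{(t)}>1-p_i$ for all $i$, so along the simulated path the active set becomes full, and the corresponding $x_i^{(t)}$ converge. This gives genuine almost-sure convergence under the coupling.
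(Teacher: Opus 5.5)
Your proposal is correct and follows the same skeleton as the paper's proof, which deduces the statement from the locally uniform convergence of Proposition~\ref{pr:uniformly_convergence}. The shared steps are: split the observations by active set, discard those with a zero component because $\mathbb{P}(X_j=0)=1-p_j\to 0$, and pass to the limit in the Gaussian copula density on the fully active observations. The paper does this last step via an explicit bound on the difference of the quadratic forms, uniform over compact sets of correlation matrices, which it needs later for Theorem~\ref{prop:conv_estimators_tilde}; you use plain continuity, which suffices here. Your handling of the zero-containing observations is more careful than the paper's, which treats the factors multiplying $p_I^\perp$ as bounded when $\boldsymbol{p}\to\boldsymbol{1}$; you correctly show instead that $\ell^I_{\boldsymbol{p}}\to-\infty$ for fixed data with a zero, so ``almost surely'' must be read under the limiting continuous-marginal model or under the coupling of Algorithm~\ref{al:parametric}, as you propose.
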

\begin{proof}
	See Appendix \ref{sec:Proofs}
\end{proof}

The limiting log-likelihood $\llcbsp$ coincides with the standard log-likelihood for a Gaussian copula with continuous marginals. %thus establishing that our framework naturally encompasses the classical case. 
This convergence extends to the estimators themselves, as proved in the following theorem.

\begin{theorem}\label{prop:conv_estimators_tilde}
As $\boldsymbol{p}\to\boldsymbol{1}$, the IFM estimator $\rhocb$ \eqref{eq:argmax_p} of the Gaussian copula correlation converges almost surely to $\widehat{\boldsymbol{\varrho}}$, defined as %solution of %the following optimization problem 
\begin{equation}
    \label{eq:Asymtotic estimator}
    \widehat{\boldsymbol{\varrho}}:=\argmax_{\boldsymbol{\varrho}\in\mathcal{K}_d}\llcbsp\left(\widehat{\boldsymbol{\Theta}},\boldsymbol{\varrho}\mid\mathcal{X}\right)\,\,,
\end{equation}
where $\llcbsp$ is defined in \eqref{eq:limit_ll} and $\widehat{\boldsymbol{\Theta}}:=\left[\widehat{\boldsymbol{\theta}}_1,\ldots,\widehat{\boldsymbol{\theta}}_d\right]$ are the marginal MLEs.
\end{theorem}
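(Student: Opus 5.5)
The argument is a standard "argmax of a convergent family of criteria" argument, built on Proposition~\ref{pr:conv_est_fcn_par}. The one thing to upgrade is the mode of convergence: we need it uniform in $\boldsymbol{\varrho}$, and we need the limit to have a well-separated maximizer.

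Throughout, fix the sample $\mathcal{X}$ and the first-stage estimates $\widehat{\boldsymbol{\Theta}}$. In the limit $\boldsymbol{p}\to\boldsymbol{1}$ we also have $\widehat{\boldsymbol{p}}\to\boldsymbol{1}$: with probability one, for $\boldsymbol{p}$ close enough to $\boldsymbol{1}$ every observation is strictly positive. So eventually $\mathcal{S}(\boldsymbol{x})=\mathbb{S}$ for all $\boldsymbol{x}\in\mathcal{X}$ and $\widehat{\boldsymbol{p}}=\boldsymbol{1}$. This is the "almost surely" content inherited from Proposition~\ref{pr:conv_est_fcn_par}.

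\textbf{Step 1: the limit maximizer is unique and attained.} On the full active set, Lemma~\ref{lemma:GaussianCopulaDerivative} with $\mathcal{S}=\mathbb{S}$ shows that the copula term is the Gaussian copula density $c$. Hence
\[
\llcbsp(\widehat{\boldsymbol{\Theta}},\boldsymbol{\varrho}\mid\mathcal{X}) = -\tfrac N2\ln\det\mathbf{R}-\tfrac12\operatorname{tr}(\mathbf{R}^{-1}\mathbf{S})+\text{const},
\]
where $\mathbf{S}=\sum_t \mathbf{z}^{(t)}\mathbf{z}^{(t)\top}$ and $\mathbf{z}^{(t)}_i=\Phi^{-1}(\Psi_i(x^{(t)}_i;\widehat{\boldsymbol{\theta}}_i))$. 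For $N\ge d$ the matrix $\mathbf{S}$ is a.s.\ positive definite, since the marginals are continuous. The function above then tends to $-\infty$ as $\mathbf{R}$ approaches the singular boundary of $\mathcal{K}_d$, because $\operatorname{tr}(\mathbf{R}^{-1}\mathbf{S})\ge\lambda_{\min}(\mathbf{S})\operatorname{tr}\mathbf{R}^{-1}$ dominates $-\ln\det\mathbf{R}$. Consequently its superlevel sets are compact subsets of the interior of $\mathcal{K}_d$, and a maximizer $\widehat{\boldsymbol{\varrho}}$ exists. Uniqueness of $\widehat{\boldsymbol{\varrho}}$ is an assumption implicit in the statement's use of $\argmax$; I will take it as given, or argue it generically. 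Because the superlevel set is compact, the maximizer is automatically well-separated: $\sup_{\|\boldsymbol{\varrho}-\widehat{\boldsymbol{\varrho}}\|\ge\epsilon}\llcbsp<\llcbsp(\widehat{\boldsymbol{\varrho}})$.

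\textbf{Step 2: uniform convergence on compacts, and control near the boundary.} Once all components are active, $\mathcal{L}_{\widehat{\boldsymbol{p}}}$ differs from $\llcbsp$ in two ways: the copula density is evaluated at $F_i(x_i;p_i)=1-p_i+p_i\Psi_i(x_i)$ instead of at $\Psi_i(x_i)$, and there are additional terms $\sum\ln p_i$. On any compact $K$ in the interior of $\mathcal{K}_d$, the map $(\boldsymbol{u},\mathbf{R})\mapsto\ln c(\boldsymbol{u};\mathbf{R})$ is jointly continuous near the finitely many interior points $\Psi_i(x_i)\in(0,1)$. Therefore convergence is uniform over $\boldsymbol{\varrho}\in K$.

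To keep maximizers from escaping toward the boundary, I would use the same determinant/trace bound from Step 1 with $\mathbf{S}$ replaced by its perturbed version $\mathbf{S}_{\boldsymbol{p}}$. Since $\mathbf{S}_{\boldsymbol{p}}\to\mathbf{S}\succ 0$, we have $\lambda_{\min}(\mathbf{S}_{\boldsymbol{p}})$ bounded away from zero. This gives a single compact $K$ containing all maximizers of $\mathcal{L}_{\widehat{\boldsymbol{p}}}$ for $\boldsymbol{p}$ near $\boldsymbol{1}$.

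\textbf{Step 3: conclude.} Apply the standard argmax theorem (as in \cite{newey1994large}, Thm.~2.1, in its deterministic form). Uniform convergence on $K$ together with the well-separated maximum yields $\rhocb\to\widehat{\boldsymbol{\varrho}}$. Since the reduction in the first paragraph holds with probability one, the convergence is almost sure.

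I expect the main obstacle to be Step 2's boundary control. Proposition~\ref{pr:conv_est_fcn_par} only gives pointwise convergence, and $\mathcal{K}_d$ is not compact in the relevant sense, because the likelihood blows down at singular $\mathbf{R}$. The explicit Gaussian form, via the bound $\operatorname{tr}(\mathbf{R}^{-1}\mathbf{S}_{\boldsymbol{p}})-N\ln\det\mathbf{R}^{-1}\to\infty$ uniformly in $\boldsymbol{p}$ near $\boldsymbol{1}$, is what I would rely on.
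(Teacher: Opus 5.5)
Your proposal follows the paper's proof: coercivity of the log-likelihood at the singular boundary of $\mathcal{K}_d$ traps the maximizers in one compact set, convergence is locally uniform there, and the limiting maximizer is unique, with your appeal to the deterministic argmax theorem of Newey--McFadden doing the job of the paper's subsequence argument. Your explicit bound $\operatorname{tr}(\mathbf{R}^{-1}\mathbf{S}_{\boldsymbol p})\ge\lambda_{\min}(\mathbf{S}_{\boldsymbol p})\operatorname{tr}\mathbf{R}^{-1}$ with $\mathbf{S}_{\boldsymbol p}\to\mathbf{S}\succ 0$ actually supplies the uniform-in-$\boldsymbol p$ confinement that the paper only attributes to locally uniform convergence (Proposition~\ref{pr:uniformly_convergence}), which by itself does not stop maximizers drifting to the boundary. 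One correction: with the sample fixed, the $\boldsymbol p$ in your Step~2 must be the parameter plugged into the likelihood, as in the paper's proof, because $\widehat{\boldsymbol p}=\boldsymbol 1$ would give $\mathcal{L}_{\widehat{\boldsymbol p}}\equiv\llcbsp$ outright.
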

\begin{proof}
	See Appendix \ref{sec:Proofs}
\end{proof}

Theorem \ref{prop:conv_estimators_tilde} is a major result of this paper: it establishes that the Comb-Bernoulli framework is a genuine generalization of the standard Gaussian copula with continuous marginals. When probability masses at zero vanish, our methodology recovers classical copula theory, ensuring consistency and compatibility with existing methods.

\medskip
Finally, the connection to the classical setting also allows us to benchmark the IFM estimator against another well-known dependence measure: Spearman's rank correlation. 
In the classical setting, Spearman's correlation provides an elegant and computational efficient route to estimating the entries of the copula matrix $\mathbf{R}$, as detailed in the following Proposition. 
%\citep[see, e.g.,][pp.502-505]{dodge2008concise}.

\begin{proposition}\label{cor:consis_with_clv}
Let $X_1,\ldots,X_d$ be random variables with continuous marginal distribution functions $\Psi_i(\cdot;\boldsymbol{\theta}_i)$ and suppose their dependence structure is governed by a Gaussian copula with correlation matrix $\mathbf{R} \in \mathcal{K}_d$. Then both the IFM estimator $\widehat{\boldsymbol{\varrho}}$ \eqref{eq:Asymtotic estimator} and the estimator
\[
\widehat{\mathrm{R}}_{ij} := 2\sin\left(\frac{\pi}{6}\widehat{\rho}_{ij}\right), \qquad i,j = 1,\ldots,d,
\]
where $\widehat{\rho}_{ij}$ denotes the sample Spearman correlation, are consistent for the entries of $\mathbf{R}$.
\end{proposition}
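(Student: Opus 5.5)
The two consistency claims are independent, and I will prove them separately. Throughout, I assume the data are an i.i.d.\ sample $\mathcal{X}=\{\boldsymbol{x}^{(t)}\}_{t=1}^N$ with $N\to\infty$, continuous marginals $\Psi_i(\cdot;\boldsymbol{\theta}_i)$, and a Gaussian copula with true correlation $\mathbf{R}_0\in\mathcal{K}_d$ (interior, i.e.\ positive definite).

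\emph{IFM estimator.} I will treat $\widehat{\boldsymbol{\varrho}}$ in \eqref{eq:Asymtotic estimator} as a two-step extremum estimator and invoke the consistency theorem of \cite{newey1994large} (Th.~2.1 with the two-step extension, Sec.~6).

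First step. Under standard regularity conditions on the parametric families $\psi_i$, the marginal MLEs $\widehat{\boldsymbol{\theta}}_i$ are consistent. For the lognormal case \eqref{eq:lognormal_marginals} this is elementary, since the estimators are sample means and variances of $\ln x_i$.

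Second step. Define the population criterion $Q_0(\boldsymbol{\Theta},\boldsymbol{\varrho}) := \mathbb{E}\big[\ln c(\Psi_1(X_1;\boldsymbol{\theta}_1),\dots,\Psi_d(X_d;\boldsymbol{\theta}_d);\boldsymbol{\varrho})\big]$. At the true $\boldsymbol{\Theta}_0$, the vector $\mathbf{Z}=(\Phi^{-1}(\Psi_i(X_i)))_i$ is $N(0,\mathbf{R}_0)$. Hence $Q_0(\boldsymbol{\Theta}_0,\boldsymbol{\varrho})$ equals $-\tfrac12\ln\det\mathbf{R}-\tfrac12\mathrm{tr}(\mathbf{R}^{-1}\mathbf{R}_0)$ plus a constant. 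By the Kullback--Leibler inequality, this is uniquely maximized at $\mathbf{R}=\mathbf{R}_0$, which gives identification.

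The remaining ingredient is uniform convergence of $N^{-1}\llcbsp(\widehat{\boldsymbol{\Theta}},\cdot)$ to $Q_0(\boldsymbol{\Theta}_0,\cdot)$. I expect this to be the main obstacle, for two reasons. First, $\mathcal{K}_d$ is not compact: the log-density diverges as $\mathbf{R}$ approaches singularity. Second, the plug-in $\widehat{\boldsymbol{\Theta}}$ enters inside $\Phi^{-1}$, which is unbounded.

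I will handle compactness by restricting to $\{\mathbf{R}\in\mathcal{K}_d:\lambda_{\min}(\mathbf{R})\ge\epsilon\}$ for small $\epsilon$. This is justified because the Gaussian log-likelihood is concave in $\mathbf{R}^{-1}$ and tends to $-\infty$ near the boundary (Newey--McFadden Th.~2.7, the concave case). Alternatively, I can note that $\widehat{\boldsymbol{\varrho}}$ is explicitly the correlation normalization of the sample covariance of the pseudo-observations $\widehat{\mathbf{z}}^{(t)}$. That observation reduces the whole problem to showing $N^{-1}\sum_t \widehat{\mathbf{z}}^{(t)}\widehat{\mathbf{z}}^{(t)\top}\to\mathbf{R}_0$.

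For the plug-in, I will use a mean-value expansion in $\boldsymbol{\theta}$ with a dominating envelope $\mathbb{E}\sup_{\boldsymbol{\theta}\in\mathcal{N}}\|\partial_{\boldsymbol{\theta}}\mathbf{z}\,\mathbf{z}^\top\|<\infty$ on a neighbourhood $\mathcal{N}$ of $\boldsymbol{\theta}_0$. In the lognormal case this is immediate, because $z_i=(\ln x_i-\mu_i)/\sigma_i$ is affine in $\ln x_i$, so the envelope reduces to Gaussian second moments. Combined with the uniform law of large numbers, this yields $\widehat{\boldsymbol{\varrho}}\to\boldsymbol{\varrho}_0$ in probability, and almost surely by the strong law.

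\emph{Spearman estimator.} Spearman's rho is invariant under increasing transformations of the margins. The sample Spearman correlation $\widehat{\rho}_{ij}$ is a function of ranks, so it can be written as a U-statistic plus an $O(N^{-1})$ correction. By the strong law for U-statistics (Hoeffding), it converges almost surely to the population Spearman $\rho_S = 12\,\mathbb{E}[U_iU_j]-3$, where $U_i=\Psi_i(X_i)$.

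For the Gaussian copula, a classical computation gives $\rho_S=\tfrac{6}{\pi}\arcsin(\mathrm{R}_{ij}/2)$; I will cite \cite{McNeil2005} for it. It can also be checked directly: $\mathbb{E}[\Phi(Z_i)\Phi(Z_j)]=\mathbb{P}(W_i\le Z_i,\,W_j\le Z_j)$ with independent standard normal $W$'s, and Sheppard's orthant formula applies to the correlation $\mathrm{R}_{ij}/2$ of $Z_i-W_i$ and $Z_j-W_j$.

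The map $\rho\mapsto 2\sin(\pi\rho/6)$ is continuous and inverts this relation. By the continuous mapping theorem, $\widehat{\mathrm{R}}_{ij}\to\mathrm{R}_{ij}$ almost surely, which completes the proof.
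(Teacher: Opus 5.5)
\textbf{The compactness step is not justified.} Your overall architecture is sound. For the IFM part it is the paper's own route (the paper simply cites Joe's IFM theory and Newey--McFadden), written out in more detail, and your Kullback--Leibler identification step is correct. The step that fails is the restriction to a compact set, because both justifications you offer break down.

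\emph{Theorem 2.7.} Newey--McFadden Th.~2.7 needs a concave criterion on a convex parameter set. The Gaussian log-likelihood is concave in $\mathbf{K}=\mathbf{R}^{-1}$ over the whole positive definite cone. However, $\widehat{\boldsymbol{\varrho}}$ is constrained to unit-diagonal $\mathbf{R}$, and $\{\mathbf{R}^{-1}:\mathbf{R}\in\mathcal{K}_d\}$ is not convex: for $d=2$ it is the curve $a^2-b^2=a$, with $a=K_{11}=K_{22}$ and $b=K_{12}$. Nor is the criterion concave in $\varrho$. For $d=2$ the score equation is the cubic $\varrho^3-S_{12}\varrho^2+(S_{11}+S_{22}-1)\varrho-S_{12}=0$, and when $S_{12}=0$, $S_{11}+S_{22}<1$ the criterion has two maxima, at $\pm(1-S_{11}-S_{22})^{1/2}$.

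\emph{Explicit formula.} Your alternative is also false in general. The maximizer over $\mathcal{K}_d$ is not the correlation normalization of $S_N:=N^{-1}\sum_t\widehat{\mathbf{z}}^{(t)}\widehat{\mathbf{z}}^{(t)\top}$ unless $S_N$ already has unit diagonal. That does happen in the lognormal case, where the MLE standardization makes $\widehat{\mathbf{R}}$ exactly the sample correlation matrix of the log-claims. It does not hold for general $\Psi_i$.

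\textbf{How to repair it.} The repair uses your own reduction. Up to a term free of $\mathbf{R}$, the copula part of the criterion is $-\tfrac{N}{2}\bigl[\ln\det\mathbf{R}+\mathrm{tr}(\mathbf{R}^{-1}S_N)\bigr]$, and your envelope argument gives $S_N\to\mathbf{R}_0$ a.s. Eventually $\lambda_{\min}(S_N)\ge c:=\lambda_{\min}(\mathbf{R}_0)/2$, so $\mathrm{tr}(\mathbf{R}^{-1}S_N)\ge c/\lambda_{\min}(\mathbf{R})$. Also $-\ln\det\mathbf{R}\le d\ln\bigl(1/\lambda_{\min}(\mathbf{R})\bigr)$.

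Hence, for a fixed small $\epsilon$ and all large $N$, the normalized criterion on $\{\lambda_{\min}(\mathbf{R})<\epsilon\}$ lies below its value at $\mathbf{R}_0$, which converges to $-\tfrac12(\ln\det\mathbf{R}_0+d)$. So every maximizer eventually lies in the compact set $\{\mathbf{R}\in\mathcal{K}_d:\lambda_{\min}(\mathbf{R})\ge\epsilon\}$. On that set the normalized criterion is affine in $S_N$ with coefficients continuous in $\mathbf{R}$, so it converges uniformly to $Q_0(\boldsymbol{\Theta}_0,\cdot)$, and Th.~2.1 applies.

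\textbf{Spearman part.} Your argument is correct but takes a different route from the paper. The paper writes exactly $\widehat{\rho}_{ij}=\frac{N+1}{N-1}\bigl(12\iint C_{N,ij}-3\bigr)$ and invokes Deheuvels' uniform a.s.\ convergence of the empirical copula, so consistency follows by integrating. You instead use Hoeffding's decomposition into a bounded degree-3 U-statistic plus an $O(N^{-1})$ Kendall-$\tau$ term, together with the strong law for U-statistics. The paper's route covers any sup-norm-continuous copula functional at once. Yours is more elementary, and the Hoeffding projection also yields asymptotic normality.

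For the population identity, your Sheppard orthant argument is shorter than the paper's, which differentiates $\mathcal{I}(\mathrm{R}_{ij})$ and then integrates. Your argument rests on the fact that $Z_i-W_i$ and $Z_j-W_j$ have correlation $\mathrm{R}_{ij}/2$.
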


\begin{proof}
See Appendix \ref{sec:Proofs}.
\end{proof}

It is worth observing that this result does not extend to the Comb-Bernoulli framework:
the presence of probability masses at zero leads to the occurrence of tied observations, affecting the ranking used in Spearman's $\rho$ \citep[see, e.g.,][]{baviera2026spearman}. This renders it unsuitable when dealing with sparse financial time series and, consequently, for the estimation of the Gaussian Comb-Bernoulli parameters. 
The IFM procedure described in 
Section~\ref{ssec:estimation_of_the_copula_parameters} therefore emerges as the natural estimation route for the correlation in the Gaussian Comb-Bernoulli model;
its advantages over Spearman are demonstrated empirically in the next section.

%\medskip
%In summary, this Section has demonstrated that the Gaussian Comb-Bernoulli model possesses several important theoretical properties. First, the mixed partial derivatives admit closed-form expressions (Lemma \ref{lemma:GaussianCopulaDerivative}), enabling efficient computation of the log-likelihood. Second, as probability masses at zero vanish, both the log-likelihood and the resulting estimators converge to their classical counterparts (Proposition \ref{pr:conv_est_fcn_par} and Theorem \ref{prop:conv_estimators_tilde}). Third, the parametric IFM estimator maintains consistency with non-parametric rank-based alternatives in the continuous case (Corollary \ref{cor:consis_with_clv}). 

\section{Results}
\label{sec:Results}

In this section, we present an application of the Comb-Bernoulli model to one of the best-known insurance dataset, comparing its performance against standard benchmarks. After describing the dataset employed in our analysis, we discuss the results of parameter estimation, and finally assess the computational efficiency of our approach.

%In this Section, we present the empirical results obtained by applying the Comb-Bernoulli model to a real-world insurance dataset. The primary goal of this analysis is to validate the performance and robustness of the proposed methodology through a rigorous comparison against %two 
%standard benchmarks within the literature. %: the zero-mixed model and Spearman's rank correlation.

%\smallskip
%The Section is organized as follows. Section \ref{sec:dataset} details the properties and descriptive statistics of the dataset employed in our analysis. Section \ref{sec:estimation} presents the principal results on the estimation of the model parameters. Finally, Section \ref{sec:simulation} assesses the computational efficiency of the framework, providing a direct comparison between the simulation performance of the Comb-Bernoulli and L\'evy copula models.

\subsection{Dataset}\label{sec:dataset}
We consider the well-known Danish fire insurance dataset, which comprises claims (in Danish Krone, DKK) recorded between $1980$ and $1990$, gathered from the Copenhagen Reinsurance Company. %, resulting from fire insurance claims each exceeding 1 million Danish Krone (DKK)
%\textcolor{blue}{[Are we sure? Why is min smaller?]}. 
These claims are provided on a daily basis (a total of $4018$ days) and divided into three categories: ``Building", ``Contents", and ``Profit" claims. In Table \ref{tab:descr_stat}, we report descriptive statistics for the dataset; we refer the reader interested in a more detailed description of the dataset to the textbook of \citet{McNeil2005}.

%\begin{table}[H]
%\centering
%\begin{tabular}{|c|c|c|c|c|c|}
%\hline
% & \textbf{Observations} & \textbf{Average} & \textbf{Min} & \textbf{Max} & \textbf{Non-zero entries} \\
%\hline
%\textbf{Building} & $1541$ & $0.33$ Mln\euro  & $7.91\,K$\euro & $25.43$ Mln\euro & $38\%$ \\
%\textbf{Contents} & $1363$ & $0.27$ Mln\euro & $1.36\,K$\euro & $17.16$ Mln\euro & $34\%$ \\
%\textbf{Profits} & $561$ & $0.12$ Mln\euro & $0.53\,K$\euro & $8.05$ Mln\euro & $14\%$ \\
%\hline
%\end{tabular}
%\caption{Descriptive statistics for the three operational risk categories analysed.}
%\label{tab:descr_stat}
%\end{table}

\begin{table}[H]
\centering
\begin{tabular}{ccccc}
\toprule
 & \textbf{Observations} & \textbf{Mean} & \textbf{Min} & \textbf{Max} \\
\midrule
\textbf{Building} & $1541$ ($38\%$) & $2.56$ Mln DKK  & $60.85\,K$ DKK & $158.93$ Mln DKK \\
\textbf{Contents} & $1363$ ($34\%$) & $2.10$ Mln DKK & $10.47\,K$ DKK & $132.01$ Mln DKK \\
\textbf{Profits} & $561$ ($14\%$) & $0.94$ Mln DKK & $4.08\,K$ DKK & $61.93$ Mln DKK \\
\bottomrule
\end{tabular}
\caption{Descriptive statistics for the three risk categories: number of non-zero observations (with percentage out of 4018 days), along with mean, minimum, and maximum claim sizes.}
\label{tab:descr_stat}
\end{table}

Table \ref{tab:jumps_subtables} reports, for each bivariate time series, the number of no-jumps (both series equal to zero) and co-jumps (both series strictly positive). Table \ref{tab:jumps_tri} shows the corresponding counts for the trivariate series. 
%We observe that the number of no-jumps in the trivariate time series equals the number of no-jumps in the bivariate series ``Building"-``Contents". This means that, when these two series observe at the same time a zero-claim, the series ``Profits" also has a zero-claim. 
Interestingly, the number of no-jumps in the trivariate series matches the number of no-jumps in the bivariate series ``Building''–``Contents''. This indicates that whenever both the ``Building'' and ``Contents'' series record a zero claim simultaneously, the ``Profits'' series also records a zero claim.

\begin{table}[H]
\centering
%\begin{subtable}[t]{0.48\textwidth}
\centering
\begin{tabular}{cc c cc c c}
\toprule
 & \multicolumn{3}{c}{$\boldsymbol{\#}$ \textbf{No-jumps}} & \multicolumn{3}{c}{$\boldsymbol{\#}$ \textbf{Co-jumps}} \\
 \cmidrule(lr){2-4}\cmidrule(lr){5-7}
 & \textbf{Building} & \textbf{Contents} & \textbf{Profits} & \textbf{Building} & \textbf{Contents} & \textbf{Profits} \\ 
\midrule
\textbf{Building} &  & $2373$ & $2435$ & & $1259$ & $519$ \\
\textbf{Contents} &  &  & $2650$ & & & $556$ \\
\textbf{Profits}  &  &  & &  &  & \\
\bottomrule
\end{tabular}
%\caption{No-jumps}
%\end{subtable}
\iffalse
\hfill
\begin{subtable}[t]{0.48\textwidth}
\centering
\begin{tabular}{|c|c c c|}
\hline
 & \textbf{Building} & \textbf{Contents} & \textbf{Profits} \\ 
\hline
\textbf{Building} & & $1259$ & $519$ \\
\textbf{Contents} & & & $556$ \\
\textbf{Profits}  & & & \\
\hline
\end{tabular}
\caption{Co-jumps}
\end{subtable}
\fi
\caption{Number of no-jumps and co-jumps in each bivariate time series.}
\label{tab:jumps_subtables}
\end{table}

%
%In Tables \ref{tab:jumps_subtables}a and \ref{tab:jumps_subtables}b, we report, for each pair of risk categories, the number of days on which both series recorded zero claims (no-jumps) and the number of days on which both series recorded positive claims (co-jumps), respectively.

%
\iffalse
%
\begin{table}[H]
\centering
\begin{tabular}{|c | c c c|}
\hline
 & \textbf{Building} & \textbf{Contents} & \textbf{Profits} \\ 
\hline
\textbf{Building} & & $2373$ & $2435$ \\
\textbf{Contents} & & & $2650$ \\
\textbf{Profits}  & & & \\
\hline
\end{tabular}
\caption{Number of no-jumps in each bivariate time series.}
\label{tab:no_jumps}
\end{table}

\begin{table}[H]
\centering
\begin{tabular}{|c | c c c|}
\hline
 & \textbf{Building} & \textbf{Contents} & \textbf{Profits} \\ 
\hline
\textbf{Building} & & $1259$ & $519$ \\
\textbf{Contents} & & & $556$ \\
\textbf{Profits}  & & & \\
\hline
\end{tabular}
\caption{Number of co-jumps in each bivariate time series.}
\label{tab:co_jumps}
\end{table}
%
\fi
%
%: this fact is reasonable because to observe a claim due to lost profits, some damages either to the building or the content should have been observed.

\begin{table}[H]
    \centering
    \begin{tabular}{cc}
    \toprule
        $\boldsymbol{\#}$ \textbf{No-jumps} & $\boldsymbol{\#}$ \textbf{Co-jumps} \\
     \midrule
        $2373$ & $514$ \\
    \bottomrule
    \end{tabular}
    \caption{
    Number of no-jumps and co-jumps in the trivariate time series.
    } 
    %The number of no-jumps in the trivariate time series equals the number of no-jumps in the bivariate series ``Building"-``Contents".
    \label{tab:jumps_tri}
\end{table}

\subsection{Parameters' estimation} \label{sec:estimation}

We consider a trivariate Comb-Bernoulli model with lognormal marginals \eqref{eq:lognormal_marginals} and Gaussian copula (see Section~\ref{sec:gaussian_copula}).
Below, we present the estimated parameters and their confidence intervals obtained via parametric bootstrap.
Model calibration is performed using the IFM approach in Section \ref{sec:model}: we first fit the three marginal distributions, then the copula. 
For the explicit log-likelihood formulas, we refer the reader to Appendix~\ref{sec:triv_case}. 

%We then compare these results against two benchmarks from the literature: the zero-mixed model and the Gaussian copula correlation estimator derived from Spearman's rank correlation (see Corollary~\ref{cor:consis_with_clv}), discussing the advantages of our proposed method.

%In this Section, we present the estimated parameters of the Comb-Bernoulli model and their CIs obtained via a parametric bootstrap --~for the explicit log-likelihood formulas used in the trivariate case, we refer the reader to Appendix \ref{sec:triv_case}. Moreover, we compare these results against two benchmarks introduced previously: the zero-mixed model and the Gaussian copula correlation estimator derived from Spearman's rank correlation (see Corollary \ref{cor:consis_with_clv}).
%In this Section, we show the estimated values of the models' parameters. We show first the marginal parameters of the Comb-Bernoulli model, then we show the calibrated values of the Gaussian copula correlation
%in both models, and in both the bivariate and trivariate cases.
% \textcolor{blue}{Rimando ad appendix B per le formule, ricordare con cosa confrontiamo.}

%\smallskip
Table \ref{tab:marginal_params} collects the estimated marginal parameters: the jump probabilities $\widehat{\boldsymbol{p}}$, 
%Poisson intensities $\widehat{\boldsymbol{\lambda}}$ 
and the two parameters of the lognormal distribution, the vector of means $\widehat{\boldsymbol{\mu}}$ and the standard deviations $\widehat{\boldsymbol{\sigma}}$, for the three marginal distributions.

\begin{table}[H]
\centering
\begin{tabular}{lccc}
\toprule
 & \multicolumn{3}{c}{\textbf{Parameter estimates}} \\
\cmidrule(lr){2-4}
\textbf{Category} & $\widehat{\boldsymbol{p}}$ & \hspace{0.4cm}$\widehat{\boldsymbol{\mu}}$ & $\widehat{\boldsymbol{\sigma}}$ \\
\midrule
\textbf{Building} & $0.38$ & \hspace{0.4cm}$0.55$ & $0.81$ \\
\textbf{Contents} & $0.34$ & \hspace{0.4cm}$-0.21$ & $1.30$ \\
\textbf{Profits} & $0.14$ & \hspace{0.4cm}$-1.19$ & $1.43$ \\
\bottomrule
\end{tabular}
\caption{Estimated parameters of the Comb-Bernoulli marginal models: jump probabilities ($\widehat{\boldsymbol{p}}$), and lognormal means ($\widehat{\boldsymbol{\mu}}$) and standard deviations ($\widehat{\boldsymbol{\sigma}}$).}
\label{tab:marginal_params}
\end{table}

\smallskip
Once fitted the marginals, we estimate the correlation matrix of the Gaussian copula for the complete trivariate time series. Table \ref{tab:corr_cb_td} shows the point estimates of the dependence parameters along with their $95\%$ CIs. These intervals are obtained via a parametric bootstrap with $B=10^3$ replicas, using Algorithm \ref{al:parametric}.
The estimated correlation values indicate a strong positive dependence across all risk pairs, ranging from $0.667$ to $0.789$. Notably, the $95\%$ CIs for the correlation parameters are relatively narrow: these tight bounds suggest that the correlation estimates provided by the Comb-Bernoulli model are statistically precise and highly reliable, despite the high sparsity of the dataset.
\begin{table}[H]
\centering
\begin{tabular}{ccccccc}
\toprule
 & \multicolumn{3}{c}{\textbf{Correlation estimates}} &
 \multicolumn{3}{c}{\textbf{Confidence Intervals}}\\
\cmidrule(lr){2-4}\cmidrule(lr){5-7}

 & \textbf{Building} & \textbf{Contents} & \textbf{Profits} &\textbf{Building} & \textbf{Contents} & \textbf{Profits} \\
\midrule
\textbf{Building}  & & $0.761$ & $0.667$ &  & $[0.739;\,0.783]$ & $[0.629;\,0.699]$ \\
\textbf{Contents}  &  & & $0.789$ & & & $[0.757;\,0.813]$ \\
\textbf{Profits} & & & & & & \\
\bottomrule
\end{tabular}
\iffalse
\begin{tabular}{|c | c c c||}
\hline
 & \textbf{Building} & \textbf{Contents} & \textbf{Profits} \\ 
\hline
\textbf{Building} & & $0.761$ & $0.667$ \\
\textbf{Contents} & & & $0.789$ \\
\textbf{Profits}  & & & \\
\hline
\end{tabular}
\hspace{-0.3cm}
\begin{tabular}{| c c c|}
\hline
  \textbf{Building} & \textbf{Contents} & \textbf{Profits} \\ 
\hline
 & $[0.739;0.783]$ & $[0.629;0.699]$ \\
 & & $[0.757;0.813]$ \\
 & & \\
\hline
\end{tabular}
\fi
\caption{Gaussian copula correlation estimates ($\hat{\boldsymbol{\varrho}}_{\boldsymbol{p}}$) and corresponding $95\%$ confidence intervals for the Comb-Bernoulli model. The confidence intervals are calculated via a parametric bootstrap with $B=10^3$ replicas.}
\label{tab:corr_cb_td}
\end{table}

\subsubsection{Comparison with benchmarks}
To evaluate the effectiveness of the Comb-Bernoulli model, we compare our results against two key benchmarks: the zero-mixed model and Spearman's correlation. We focus in particular on the estimate of the dependence structure

\smallskip

First, we calibrate the dependence structure using the zero-mixed model. The calibration results, presented in Table \ref{tab:zero_mixed_full}, explicitly highlight the severe limitations of this approach in higher dimensions. From a parsimony perspective, the zero-mixed model is highly parameter-intensive; in a simple trivariate case ($d=3$), it requires estimating $14$ distinct parameters only to describe the occurrence and dependence structures. Furthermore, estimating these parameters accurately is difficult due to data sparsity. Because co-events in sparse time series are rare, calibration relies on very few data points, thereby severely destabilizing the estimation process. This sparsity also manifests in the precision of the estimates: the 95\% confidence intervals for the zero-mixed model are remarkably wide 
(see, e.g., $\hat{\varrho}_{13}^{(2)}$ in Table \ref{tab:zero_mixed_full}).

\begin{table}[t]
\centering
\begin{tabular}{clll}
\toprule
\textbf{Active Set} $\mathcal{S}$ & \textbf{Description} & \textbf{Prob. $\hat{p}_{\mathcal{S}}$ (95\% CI)} & \textbf{Correlation (95\% CI)} \\
\midrule
\multicolumn{4}{c}{\textit{Zero and Univariate Jumps (No dependence structure)}} \\
\midrule
$\varnothing$ & \textbf{No jumps} & $0.591$ $[0.568; 0.613]$ & \multicolumn{1}{c}{--} \\
$\{1\}$ & \textbf{Building} & $0.069$ $[0.057; 0.081]$ & \multicolumn{1}{c}{--} \\
$\{2\}$ & \textbf{Contents} & $0.015$ $[0.010; 0.021]$ & \multicolumn{1}{c}{--} \\
$\{3\}$ & \textbf{Profits} & $0.000$ $[0.000; 1.000]$ & \multicolumn{1}{c}{--} \\
\midrule
\multicolumn{4}{c}{\textit{Bivariate Co-jumps}} \\
\midrule
$\{1, 2\}$ & \textbf{Building \& Contents} & $0.185$ $[0.168; 0.203]$ & $\hat{\varrho}_{12}^{(2)} = $ $0.259$ $[0.157; 0.356]$ \\
$\{1, 3\}$ & \textbf{Building \& Profits}  & $0.001$ $[0.000; 0.003]$ & $\hat{\varrho}_{13}^{(2)} = $ $-0.376$ $[-0.985; 0.931]$ \\
$\{2, 3\}$ & \textbf{Contents \& Profits}  & $0.011$ $[0.006; 0.015]$ & $\hat{\varrho}_{23}^{(2)} = $ $0.599$ $[0.221; 0.820]$ \\
\midrule
\multicolumn{4}{c}{\textit{Trivariate Co-jumps}} \\
\midrule
\multirow{3}{*}{$\{1, 2, 3\}$} & \multirow{3}{*}{\textbf{All three classes}} & \multirow{3}{*}{$0.128$ $[0.113; 0.143]$} & $\hat{\varrho}_{12}^{(3)} = $ $0.329$ $[0.210; 0.439]$ \\
 & & & $\hat{\varrho}_{13}^{(3)} = $ $0.289$ $[0.167; 0.402]$ \\
 & & & $\hat{\varrho}_{23}^{(3)} = $ $0.639$ $[0.556; 0.709]$ \\
\bottomrule
\end{tabular}
\caption{Calibration results for the $3$-dimensional zero-mixed model. For each possible active set $\mathcal{S}$ (i.e., the set of classes presenting a claim), we report the estimated probability $\hat{p}_{\mathcal{S}}$. For sets with $|\mathcal{S}|\ge 2$, we also report the corresponding Gaussian copula correlation parameters, distinguished by superscript ($2$) for bivariate copulas and ($3$) for the trivariate copula, alongside their $95\%$ confidence intervals. This explicitly highlights the parameter-intensive nature of the zero-mixed approach, requiring $14$ parameters just for the occurrence and dependence structure in $d=3$, and demonstrates the high uncertainty in estimation—reflected in the exceptionally wide confidence intervals—caused by data sparsity.}
\label{tab:zero_mixed_full}
\end{table}% to check the consistency of the results in the continuous-time limit. 

\medskip
As a second benchmark, we consider the correlation estimator obtained from Spearman's rank correlation. As discussed in Section \ref{sec:gaussian_copula} (cf. Proposition \ref{cor:consis_with_clv}), when the marginals are strictly continuous, the Gaussian copula correlation can be estimated via the transformation $\widehat{\mathrm{R}}_{i,j} = 2 \sin(\frac{\pi}{6}\widehat{\rho}_{i,j})$, where $\widehat{\rho}_{i,j}$ is the sample Spearman's correlation. One might be tempted to use this transformed estimator because computing Spearman's correlation is straightforward and readily available in standard statistical libraries.

Exact minimum and maximum equivalent estimates of Spearman’s correlation can be obtained via elementary algorithms 
\citep[cf.][]{baviera2026spearman}. By applying the continuous monotonic transformation $2\sin(\frac{\pi}{6}\widehat{\rho}_{i,j})$ from Proposition \ref{cor:consis_with_clv} to these bounds, we can directly obtain the corresponding minimum and maximum equivalent estimations for the Gaussian copula correlation $\widehat{\mathrm{R}}_{i,j}$.

In Table \ref{tab:spearman}, we show these minimum-maximum equivalent estimations for the transformed Spearman’s correlations; we observe that in some cases the range is almost $1.5$, quite close to $2$, the maximum allowed range for a correlation. Let us underline that this fact is observed in one of the most studied datasets in the insurance sector.
\begin{table}[H]
\centering
\begin{tabular}{c c c c}
\toprule
 & \textbf{Building} & \textbf{Contents} & \textbf{Profits} \\ 
\midrule
\textbf{Building} & & $[0.169;0.942]$ & $[-0.496;0.968]$ \\
\textbf{Contents} & & & $[-0.411;0.985]$ \\
\textbf{Profits}  & & & \\
\bottomrule
\end{tabular}
\caption{Minimum and maximum equivalent estimations for the Gaussian copula correlation parameter $\widehat{\mathrm{R}}_{i,j}$. The bounds are derived by applying the monotonic transformation $\widehat{\mathrm{R}}_{i,j} = 2 \sin(\frac{\pi}{6}\widehat{\rho}_{i,j})$ to the exact lower and upper bounds of Spearman's rank correlation.}
\label{tab:spearman}
\end{table}

Furthermore, the observed narrowness in Comb-Bernoulli CIs is a remarkable result, especially if these CIs are compared to the interval between the minimum and maximum possible values of the Spearman-based correlation, presented in Table \ref{tab:spearman}.

\subsubsection{Additional comments on the bivariate case}
As a further comparison, we also estimate the three bivariate ($d=2$) time series separately to investigate whether the results align with those of the complete trivariate model ($d=3$). Table \ref{tab:corr_cb_bd} presents both the point estimates and the corresponding $95\%$ confidence intervals for the Gaussian copula correlations, calibrated for each pair of time series separately.

\begin{table}[H]
\centering
\begin{tabular}{ccccccc}
\toprule
 & \multicolumn{3}{c}{\textbf{Correlation estimates}} &
 \multicolumn{3}{c}{\textbf{Confidence Intervals}}\\
\cmidrule(lr){2-4}\cmidrule(lr){5-7}

 & \textbf{Building} & \textbf{Contents} & \textbf{Profits} &\textbf{Building} & \textbf{Contents} & \textbf{Profits} \\
\midrule
\textbf{Building}  & & $0.763$ & $0.634$ &  & $[0.740;\,0.786]$ & $[0.592;\,0.675]$ \\
\textbf{Contents}  &  & & $0.783$ & & & $[0.754;\,0.811]$ \\
\textbf{Profits} & & & & & & \\
\bottomrule
\end{tabular}
\iffalse
\begin{tabular}{|c | c c c|}
\hline
 & \textbf{Building} & \textbf{Contents} & \textbf{Profits} \\ 
\hline
\textbf{Building} & & $0.763$ & $0.634$ \\
\textbf{Contents} & & & $0.783$ \\
\textbf{Profits}  & & & \\
\hline
\end{tabular}
\fi
\caption{Gaussian copula correlation estimates ($\hat{\boldsymbol{\varrho}}_{\boldsymbol{p}}$) and corresponding $95\%$ confidence intervals for the bivariate Comb-Bernoulli models. The estimation is performed by calibrating the three bivariate time series separately. Both point estimates and confidence intervals are consistent with those obtained from the complete trivariate model.}
\label{tab:corr_cb_bd}
\end{table}
\iffalse
The correlation estimates and CIs obtained from these separate bivariate calibrations are remarkably consistent with those of the trivariate case (Table \ref{tab:corr_cb_td}); in fact, they appear almost indistinguishable. This equivalence demonstrates the empirical stability of the Comb-Bernoulli model to lower dimensions, maintaining the quality of the estimation without a significant loss of information.

%CIs
The CIs in the bivariate cases, considering each couple of time series at a time, are considered in Table \ref{tab:ci_cb_bd}.
\iffalse
\begin{table}[H]
\centering
\begin{tabular}{|c | c c c|}
\hline
 & \textbf{Building} & \textbf{Contents} & \textbf{Profits} \\ 
\hline
\textbf{Building} & & $[0.740;0.786]$ & $[0.592;0.675]$ \\
\textbf{Contents} & & & $[0.754;0.811]$ \\
\textbf{Profits}  & & & \\
\hline
\end{tabular}
\caption{$95\%$ CI for Gaussian copula correlation $\rhocb$ in the bivariate Comb-Bernoulli. Obtained from $B=10^3$ replicas of the dataset.}
\label{tab:ci_cb_bd}
\end{table}
\fi
%\subsection{Simulation and Confidence Intervals}
\fi
Both point estimations and CIs are statistically equivalent in the two considered cases: the complete model ($d=3$, see Table \ref{tab:corr_cb_td}) and $3$ separate two-dimensional models, considering a couple of time series at a time.  In extensive numerical experiments, we have observed that the estimation of the correlation parameters achieves similar results.

It can be useful to state a conjecture: estimating the complete Gaussian copula model or several low-dimensional models leads to similar results. This fact can have relevant practical consequences because the fit of a low-dimensional Gaussian copula can be fast and elementary in most cases.
This result is very relevant in practice and deserves a comment.

\subsection{Simulation}
\label{sec:simulation}

In practical applications, it is crucial to be able to simulate replicas of the original time series according to a given model, i.e.\ generating synthetic time series data with the same parameters as the original one according to a given model.
The parametric bootstrap presented in Section \ref{ssec:estimation_of_the_copula_parameters} is only an example of the broad use of simulation methods.

In Table \ref{tab:sim_times}, we compare the simulation times of the Comb-Bernoulli and the L\'evy copula models.\footnote{Computations were performed using Python 3.9 on a Windows 10 system with an Intel Core i5-9600K CPU (3.70~GHz) and 16~GB RAM.} Simulation time for the Comb-Bernoulli is dramatically faster than the L\'evy copula model, particularly as the dimensionality $d$ increases. As shown in Table \ref{tab:sim_times}, for $d=2$ the Comb-Bernoulli simulation time is similar to that of the L\'evy copula model, but it becomes more than twenty times faster when simulating trivariate time series ($d=3$). This result substantiates the claim that the Comb-Bernoulli process offers a straightforward simulation algorithm that overcomes the scalability limitations of the L\'evy copula approach, which requires simulating $2^d-1$ processes. The model achieves its goal of providing a parsimonious and computationally efficient tool for modelling dependent sparse time series.
\begin{table}[H]
    \centering
    \begin{tabular}{ccccc}
\toprule
  & $\boldsymbol{d=2}$ & $\boldsymbol{d=3}$ & $\boldsymbol{d=20}$ & $\boldsymbol{d=100}$ \\
\midrule
        \textbf{Comb-Bernoulli} & $2.27$ & $3.33$ & $20.37$ & $110.51$ \\
        \textbf{L\'evy copula} & $2.32$ & $76.30$ & - & - \\
\bottomrule
\end{tabular}
\iffalse
    \begin{tabular}{|c|c c c c|}
    \hline
         & $\boldsymbol{d=2}$ & $\boldsymbol{d=3}$ & $\boldsymbol{d=20}$ & $\boldsymbol{d=100}$ \\
    \hline
           \textbf{Comb-Bernoulli} & $2.27$ & $3.33$ & $20.37$ & $110.51$ \\
        \textbf{L\'evy copula} & $2.12$ & $76.30$ & - & - \\
    \hline
    \end{tabular}
    \fi
    \caption{Comparison of average simulation times (in seconds) for $B=10^3$ replicas of the Comb-Bernoulli and L\'evy copula models across different dimensions ($d=2$, $3$, $20$, $100$). The simulations are performed using Student-t copulas, with model parameters randomly drawn for each run. The reported times represent the average over $100$ independent experiments. Missing values for the L\'evy copula at $d=20$ and $d=100$ indicate that the simulation becomes computationally unfeasible due to the exponential scaling of the algorithm.}
    \label{tab:sim_times}
\end{table}

%\begin{table}[H]
%    \centering
%    \begin{tabular}{|c|}
%    \hline
%         $\boldsymbol{d=20}$  \\
%    \hline
%         $10.32$ \\
%          \\
%    \hline
%    \end{tabular}
%    \caption{Prova}
%    \label{tab:prova}
%\end{table}
This result is of extreme importance in the insurance sector. As an example, consider an internal model in Solvency II for capital requirements. Generally, the dependence between a large number of different risk classes should be taken into account; thus, a fast simulation scheme as that of the Comb-Bernoulli model can be very useful. For the proposed model, even for a large number of risk classes, the simulation time remains negligible. Indeed, considering $d=20$ and $d=100$ risk classes, % (roughly the number of Operational Risk factors in Solvency II), 
the average times for the simulation of $B=10^3$ replicas of the Comb-Bernoulli are respectively $20.37$ and $110.51$ seconds, which confirms the linear growth of computational time with the dimension $d$.

\section{Conclusions}
\label{sec:conclusions}

This paper introduces the \textit{Comb-Bernoulli model}, a novel multivariate framework designed to model sparse time series with mixed discrete-continuous marginals -- a common feature in insurance claims data where periods without claims (zeros) alternate with occasional positive claims. 
The model effectively bridges the gap between two established approaches: 
the flexible but parameter-intensive zero-mixed models, common in discrete-time settings, 
and the parsimonious but simulation-challenging L\'evy copula models used in continuous-time. 

The Comb-Bernoulli model is constructed by specifying each marginal as a mixture of a point mass at zero (with probability \(1-p_i\)) and a continuous severity distribution for positive claims (with probability \(p_i\)). The dependence structure across different risk classes is then introduced via a traditional copula function applied to these mixed marginal distributions. 

\smallskip

There are three main contributions of this paper. Firstly, we formally define the Comb-Bernoulli model and derive its closed-form likelihood function, enabling efficient parameter estimation via the IFM method. 
%\textcolor{red}{The model's two-step simulation algorithm scales linearly with dimensionality, requiring only $d$ draws from the chosen copula.}

Secondly, we provide a detailed analysis of the Gaussian copula specification. We prove that the IFM estimator of the correlation matrix converges almost surely to a well-defined limit as occurrence probabilities approach one. 
%and demonstrate that this limit coincides with the classical Spearman correlation estimator – establishing theoretical consistency with standard non-parametric methods.

Lastly, we validate the model empirically using the Danish fire insurance dataset. The results show not only that the model provides an accurate correlation estimate with tight confidence intervals, but also exceptional computational advantages. While the simulation algorithm for L\'evy copula models scales exponentially --~rendering simulations completely unfeasible for high dimensions such as $d=20$ or $d=100$~-- the Comb-Bernoulli model scales linearly. For instance, simulating $10^3$ replicas for a 20-dimensional portfolio (typical of Solvency II operational risk factors) takes only about $20$ seconds. This makes the Comb-Bernoulli model a highly practical, scalable, and computationally efficient tool for pricing and risk management in the insurance sector.

%\clearpage
%\newpage
\bibliography{sources}
\bibliographystyle{tandfx}

\section*{Acknowledgements}
We would like to thank all participants to the 1st ASTIN Bulletin Conference (Zurich), and
Workshop in Economics and Financial Mathematics (Verona), in particular Eric Andr\'e, Giuseppe
Buccheri, Corrado De Vecchi and Lorenzo Frattarolo for helpful suggestions.

\appendix

\bigskip\bigskip\bigskip
\noindent
{\LARGE\bfseries{Appendices}}

\renewcommand{\theequation}{\thesection.\arabic{equation}}
\renewcommand{\thetheorem}{\thesection.\arabic{theorem}}
\renewcommand{\thedefinition}{\thesection.\arabic{definition}}

\renewcommand{\thesection}{\Alph{section}}
\setcounter{section}{0} % reset section counter
\setcounter{lemma}{0}
\setcounter{equation}{0}

\vspace{0.6cm}

\noindent
The following appendices provide proofs and supporting material for the results presented in the paper. Appendices \ref{sec:biv_case} and \ref{sec:triv_case} illustrate, respectively, the formulas for the Comb-Bernoulli model in the bivariate and the trivariate case, while Appendix~\ref{sec:Proofs} contains the proofs.

\section{Comb-Bernoulli in the bivariate case}
\label{sec:biv_case}

The bivariate case ($d=2$) provides the simplest illustration of the Comb-Bernoulli framework. 
The following result, a direct corollary of Proposition \ref{pr:CombBernoulli_ll}, presents the specific form of the likelihood for this special case.

\begin{corollary} \label{prop:BivariateLikelihood}
In the bivariate case, the per-observation log-likelihood is given by:
\begin{equation} 
\label{eq:OriginalLikelihood}
\begin{split}
\ln\ell_{\boldsymbol{p}} (\boldsymbol{\Theta},\boldsymbol{\varrho}\mid x_1, x_2)
& =
\ln C(1-p_1, 1-p_2; \boldsymbol{\varrho}) 
    \mathbbm{1}_{\{x_1=0, \, x_2=0\}} ~~ +
\\[0.15cm]
& +
\ln \left[
    \frac{\partial C (u_1, 1-p_2; \boldsymbol{\varrho}) }{\partial u_1} 
    \right]_{u_1=F_1 \left(x_1;\,p_1\right)} ~ 
    \mathbbm{1}_{\{x_1>0, \, x_2=0\}} ~~ + \\
& +
\ln \left[
    \frac{\partial C (1-p_1, u_2; \boldsymbol{\varrho}) }{\partial u_2} 
    \right]_{u_2=F_2 \left(x_2;\,p_2\right)} ~
    \mathbbm{1}_{\{x_1=0, \, x_2>0\}} ~~ + \\[0.15cm]
& +
\ln c \left( F_1 (x_1;p_1),F_2 (x_2;p_2); \boldsymbol{\varrho} \right) \mathbbm{1}_{\{x_1>0,\, x_2>0\}} ~~ + \\[0.20cm]
& +
\ln \left[p_1 \psi_1 \left( x_1;\boldsymbol{\theta}_1 \right) \right] \mathbbm{1}_{\{x_1>0\}} +
\ln \left[p_2 \psi_2 \left( x_2;\boldsymbol{\theta}_2 \right) \right] \mathbbm{1}_{\{x_2>0\}} 
~,
\end{split}
\end{equation}
where
$F_1( \bigcdot;p_1 )$ and $F_2( \bigcdot;p_2 )$ 
are the unidimensional cdfs for the claims, as described in \eqref{eq:MarginalModel},   
and $\psi_1(\bigcdot;\boldsymbol{\boldsymbol{\theta}}_1)$ and $\psi_2(\bigcdot;\boldsymbol{\boldsymbol{\theta}}_2)$ are the marginal pdfs for the non-null claim sizes.
\end{corollary}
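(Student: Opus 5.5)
This corollary follows by specializing Proposition \ref{pr:CombBernoulli_ll} to $d=2$. The plan is to enumerate the four possible active sets $\mathcal{S}(\boldsymbol{x}) \in \{\varnothing, \{1\}, \{2\}, \{1,2\}\}$. Each observation has exactly one active set, so the per-observation log-likelihood can be written as a sum of four terms. Each term is multiplied by the indicator of the corresponding event $\{x_1=0,x_2=0\}$, $\{x_1>0,x_2=0\}$, $\{x_1=0,x_2>0\}$, $\{x_1>0,x_2>0\}$. For each event I then evaluate the mixed-derivative factor in \eqref{eq:log_likelihood_CB}. The marginal sum $\sum_{i\in\mathcal{S}(\boldsymbol{x})}\ln(p_i\psi_i)$ is rewritten as $\ln[p_1\psi_1(x_1;\boldsymbol{\theta}_1)]\mathbbm{1}_{\{x_1>0\}}+\ln[p_2\psi_2(x_2;\boldsymbol{\theta}_2)]\mathbbm{1}_{\{x_2>0\}}$, which is exactly the last line of \eqref{eq:OriginalLikelihood}.

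The key computation is evaluating $F_i$ at zero. By \eqref{eq:MarginalModel}, $F_i(0;p_i)=(1-p_i)+p_i\Psi_i(0)\mathbbm{1}_{\{0>0\}}=1-p_i$. The four cases then go as follows.

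\begin{itemize}
\item For $\mathcal{S}=\varnothing$, no derivative is taken, which gives $\ln C(1-p_1,1-p_2;\boldsymbol{\varrho})$, as already noted after Proposition \ref{pr:CombBernoulli_ll}.
\item For $\mathcal{S}=\{1\}$, we differentiate in $u_1$ only and then substitute $u_2=F_2(0;p_2)=1-p_2$. Since $u_2$ is not a differentiation variable, it may be set to $1-p_2$ before differentiating, which yields $\bigl[\partial_{u_1}C(u_1,1-p_2;\boldsymbol{\varrho})\bigr]_{u_1=F_1(x_1;p_1)}$.
\item The case $\mathcal{S}=\{2\}$ is symmetric.
\item For $\mathcal{S}=\{1,2\}$, the mixed derivative $\partial^2 C/\partial u_1\partial u_2$ is the copula density $c$, because $C$ is assumed absolutely continuous. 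This gives $\ln c(F_1(x_1;p_1),F_2(x_2;p_2);\boldsymbol{\varrho})$.
\end{itemize}

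No real obstacle is expected. The only point requiring a moment's care is justifying that fixing the non-differentiated argument before differentiating is legitimate. This is immediate, because partial differentiation in $u_1$ treats $u_2$ as a constant. Summing the four indicator-weighted contributions together with the marginal terms reproduces \eqref{eq:OriginalLikelihood}.
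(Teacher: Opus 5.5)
Your proposal is correct and follows the paper's route: the paper gives no separate proof and presents the result as a direct specialization of Proposition \ref{pr:CombBernoulli_ll} to $d=2$. You carry this out by splitting over the four active sets, using $F_i(0;p_i)=1-p_i$, and identifying $\partial^2 C/\partial u_1\partial u_2$ with the copula density $c$ (for a general absolutely continuous copula this holds almost everywhere, and pointwise when $c$ is continuous, as in the Gaussian case).
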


\iffalse
    The following identity holds in the 2-d case:
\[
    \overline{C}(u_1, u_2; \boldsymbol{\varrho} ) 
    + u_1 + u_2 - 
    C(u_1, u_2; \boldsymbol{\varrho} ) = 1
\]
Thus, \eqref{eq:OriginalLikelihood} can also be written as
%
\begin{equation}
\begin{split}
\ln\ell_{\boldsymbol{p}} (\boldsymbol{\Theta},\boldsymbol{\varrho}\mid x_1, x_2)
& =
\ln C(1-p_1, 1-p_2; \boldsymbol{\varrho}) 
    \mathbbm{1}_{\{x_1=0, \, x_2=0\}} ~~ +
\\[0.15cm]
& +
\ln \left[ 1 +
    \frac{\partial \overline{C} (u_1, 1-p_2; \boldsymbol{\varrho}) }{\partial u_1} 
    \right]_{u_1=F_1 \left(x_1;\,p_1\right)} ~ 
    \mathbbm{1}_{\{x_1>0, \, x_2=0\}} ~~ + \\
& +
\ln \left[ 1 + 
    \frac{\partial \overline{C} (1-p_1, u_2; \boldsymbol{\varrho}) }{\partial u_2} 
    \right]_{u_2=F_2 \left(x_2;\,p_2\right)} ~
    \mathbbm{1}_{\{x_1=0, \, x_2>0\}} ~~ + \\[0.15cm]
& +
\ln c \left( F_1 (x_1;p_1),F_2 (x_2;p_2); \boldsymbol{\varrho} \right) \mathbbm{1}_{\{x_1>0,\, x_2>0\}} ~~ + \\[0.20cm]
& +
\ln \left[p_1 \psi_1 \left( x_1;\boldsymbol{\theta}_1 \right) \right] \mathbbm{1}_{\{x_1>0\}} +
\ln \left[p_2 \psi_2 \left( x_2;\boldsymbol{\theta}_2 \right) \right] \mathbbm{1}_{\{x_2>0\}} 
~.
\end{split}
\end{equation}
%
\fi

\smallskip
In the Gaussian copula case, all the formulas above can be written explicitly. We recall that in the bidimensional Gaussian Comb-Bernoulli, the dependence is encoded in the correlation matrix
\[
\mathbf{R}:=\left[\begin{array}{cc}
    1 & \varrho \\
    \varrho & 1
\end{array}\right]\,\,.
\]
 The following Corollary provides the per-observation log-likelihood formula in the bidimensional Gaussian Comb-Bernoulli.
\begin{corollary}
    In the bivariate Gaussian Comb-Bernoulli, the per-observation log-likelihood is given by:
\begin{equation*} 
\label{eq:BidimGaussianLikelihood}
\begin{split}
\ln\ell_{\boldsymbol{p}} (\boldsymbol{\Theta},\boldsymbol{\varrho}\mid x_1, x_2)
& =
\ln \Phi_2(\Phi^{-1}(1-p_1), \Phi^{-1}(1-p_2); \mathbf{R}) 
    \mathbbm{1}_{\{x_1=0, \, x_2=0\}} ~~ +
\\[0.15cm]
& +
\ln \Phi\left(\frac{\Phi^{-1}(1-p_2)-\varrho\,\Phi^{-1}(F_1(x_1;p_1))}{\sqrt{1-\varrho^2}}\right) ~ 
    \mathbbm{1}_{\{x_1>0, \, x_2=0\}} ~~ + \\
& +
\ln \Phi\left(\frac{\Phi^{-1}(1-p_1)-\varrho\,\Phi^{-1}(F_2(x_2;p_2))}{\sqrt{1-\varrho^2}}\right) ~
    \mathbbm{1}_{\{x_1=0, \, x_2>0\}} ~~ - \\[0.15cm]
& 
-\frac{1}{2}\left(\ln(1-\varrho^2)+\frac{\varrho^2(\Phi^{-1}(F_1(x_1;p_1))^2+\Phi^{-1}(F_2(x_2;p_2))^2)}{(1-\varrho^2)} \right)\mathbbm{1}_{\{x_1>0,\, x_2>0\}} ~~ + \\[0.20cm]
&+\frac{\varrho \, \Phi^{-1}(F_1(x_1;p_1)) \, \Phi^{-1}(F_2(x_2;p_2))}{(1-\varrho^2)} \mathbbm{1}_{\{x_1>0,\, x_2>0\}}~~ + \\[0.15cm]
& +
\ln \left[p_1 \psi_1 \left( x_1;\boldsymbol{\theta}_1 \right) \right] \mathbbm{1}_{\{x_1>0\}} +
\ln \left[p_2 \psi_2 \left( x_2;\boldsymbol{\theta}_2 \right) \right] \mathbbm{1}_{\{x_2>0\}} 
~.
\end{split}
\end{equation*}
%\ln \varphi_2 \left( F_1 (x_1;p_1),F_2 (x_2;p_2); \mathbf{R} \right)
\end{corollary}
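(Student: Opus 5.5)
The plan is to specialize Corollary~\ref{prop:BivariateLikelihood} to the Gaussian copula by computing each of its four copula terms with Lemma~\ref{lemma:GaussianCopulaDerivative}, taking $d=2$ and $\mathbf{R}$ with off-diagonal entry $\varrho$. The marginal terms $\ln[p_i\psi_i]$ are unchanged and carry over directly. So the work reduces to evaluating the copula factor on each of the four active sets $\mathcal{S}=\varnothing,\{1\},\{2\},\{1,2\}$.

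\textbf{Empty active set.} For $\mathcal{S}=\varnothing$, no derivative is taken. The copula is evaluated at $u_i=F_i(0;p_i)=1-p_i$, which gives $\Phi_2(\Phi^{-1}(1-p_1),\Phi^{-1}(1-p_2);\mathbf{R})$ directly from the definition of the Gaussian copula.

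\textbf{One-point active sets.} For $\mathcal{S}=\{1\}$, $\mathcal{T}=\{2\}$, apply Lemma~\ref{lemma:GaussianCopulaDerivative} with $\mathbf{R}_{\mathcal{S}\mathcal{S}}=1$. The first factor becomes $\varphi_1(z_1;1)/\varphi_1(z_1;1)=1$. The conditional mean is $\varrho z_1$ and the conditional variance is $1-\varrho^2$. The remaining factor is therefore
\[
\Phi\Bigl(\bigl(\Phi^{-1}(1-p_2)-\varrho\,\Phi^{-1}(F_1(x_1;p_1))\bigr)/\sqrt{1-\varrho^2}\Bigr),
\]
using $z_2=\Phi^{-1}(F_2(0;p_2))=\Phi^{-1}(1-p_2)$. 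The case $\mathcal{S}=\{2\}$ follows by symmetry.

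\textbf{Full active set.} For $\mathcal{S}=\{1,2\}$, the $\Phi_0$ factor is $1$ and the term is the bivariate copula density $\varphi_2(\mathbf{z};\mathbf{R})/\varphi_2(\mathbf{z};\mathbf{I}_2)$. I expand it with $\det\mathbf{R}=1-\varrho^2$ and
\[
\mathbf{R}^{-1}=\frac{1}{1-\varrho^2}\begin{bmatrix}1&-\varrho\\-\varrho&1\end{bmatrix}.
\]
Taking the logarithm gives
\[
-\tfrac12\ln(1-\varrho^2)-\tfrac12\,\mathbf{z}^\top(\mathbf{R}^{-1}-\mathbf{I})\mathbf{z}.
\]
The key simplification is
\[
\mathbf{z}^\top(\mathbf{R}^{-1}-\mathbf{I})\mathbf{z}=\frac{\varrho^2(z_1^2+z_2^2)-2\varrho z_1z_2}{1-\varrho^2},
\]
which yields exactly the two displayed lines with $z_i=\Phi^{-1}(F_i(x_i;p_i))$.

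All steps are routine. The only place needing care is this last quadratic-form algebra, including its sign conventions. Substituting the four expressions into Corollary~\ref{prop:BivariateLikelihood} then concludes.
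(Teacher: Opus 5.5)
Your proposal is correct and takes the same route the paper intends: the paper gives no written proof of this corollary beyond presenting it as a direct specialization of Corollary~\ref{prop:BivariateLikelihood} via Lemma~\ref{lemma:GaussianCopulaDerivative}. Your evaluation of the four active-set terms supplies exactly those routine details, and it checks out, including the quadratic form $\mathbf{z}^\top(\mathbf{R}^{-1}-\mathbf{I}_2)\mathbf{z}$ in the co-jump case and the standardization by $\sqrt{1-\varrho^2}$ in the single-jump cases.
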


%\quad\\
\clearpage\newpage
\section{Comb-Bernoulli in the trivariate case}
\label{sec:triv_case}
In this Appendix, we illustrate the formulas in the three-dimensional case, noting that they do not become significantly more involved compared with the bivariate case. The following result, a direct corollary of Proposition \ref{pr:CombBernoulli_ll}, presents the specific form of the likelihood for this case.

\begin{corollary} \label{prop:TrivariateLikelihood}
In the trivariate case, the per-observation log-likelihood is given by:
\begin{equation} 
\label{eq:TridimOriginalLikelihood}
\begin{split}
\ln\ell_{\boldsymbol{p}} (\boldsymbol{\Theta},\boldsymbol{\varrho}&\mid x_1, x_2, x_3)
 =
\ln C(1-p_1, 1-p_2, 1-p_3; \boldsymbol{\varrho}) 
    \mathbbm{1}_{\{x_1=0, \, x_2=0, \, x_3=0\}} ~~ +
\\[0.15cm]
& +
\ln \left[
    \frac{\partial C (u_1, 1-p_2, 1-p_3; \boldsymbol{\varrho}) }{\partial u_1} 
    \right]_{u_1=F_1 \left(x_1;\,p_1\right)} ~ 
    \mathbbm{1}_{\{x_1>0, \, x_2=0,\,x_3=0\}} ~~ + \\
& +
\ln \left[
    \frac{\partial C (1-p_1, u_2,1-p_3; \boldsymbol{\varrho}) }{\partial u_2} 
    \right]_{u_2=F_2 \left(x_2;\,p_2\right)} ~
    \mathbbm{1}_{\{x_1=0, \, x_2>0,\,x_3=0\}} ~~ + \\[0.15cm]
& +
\ln \left[
    \frac{\partial C (1-p_1, 1-p_2,u_3; \boldsymbol{\varrho}) }{\partial u_3} 
    \right]_{u_3=F_3 \left(x_3;\,p_3\right)} ~
    \mathbbm{1}_{\{x_1=0, \, x_2=0,\,x_3>0\}} ~~ + \\[0.15cm]
& +
\ln \left[
    \frac{\partial^2 C (u_1, u_2,1-p_3; \boldsymbol{\varrho}) }{\partial u_1\partial u_2} 
    \right]_{u_1=F_1 \left(x_1;\,p_1\right),\,u_2=F_2 \left(x_2;\,p_2\right)} ~
    \mathbbm{1}_{\{x_1>0, \, x_2>0,\,x_3=0\}} ~~ + \\[0.15cm]
& +
\ln \left[
    \frac{\partial^2 C (u_1, 1-p_2,u_3; \boldsymbol{\varrho}) }{\partial u_1\partial u_3} 
    \right]_{u_1=F_1 \left(x_1;\,p_1\right),\,u_3=F_3 \left(x_3;\,p_3\right)} ~
    \mathbbm{1}_{\{x_1>0, \, x_2=0,\,x_3>0\}} ~~ + \\[0.15cm]
& +
\ln \left[
    \frac{\partial^2 C (1-p_1, u_2,u_3; \boldsymbol{\varrho}) }{\partial u_2\partial u_3} 
    \right]_{u_2=F_2 \left(x_2;\,p_2\right),\,u_3=F_3 \left(x_3;\,p_3\right)} ~
    \mathbbm{1}_{\{x_1=0, \, x_2>0,\,x_3>0\}} ~~ + \\[0.15cm]
& +
\ln c \left( F_1 (x_1;p_1),F_2 (x_2;p_2),F_3(x_3;p_3); \boldsymbol{\varrho} \right) \mathbbm{1}_{\{x_1>0,\, x_2>0,\,x_3>0\}} ~~ + \\[0.20cm]
& +
\sum_{i=1}^3\ln \left[p_i \psi_i \left( x_i;\boldsymbol{\theta}_i \right) \right] \mathbbm{1}_{\{x_i>0\}}
~,
\end{split}
\end{equation}
where
$F_1( \bigcdot;p_1 )$, $F_2( \bigcdot;p_2 )$ and $F_3(\bigcdot;p_3)$
are the unidimensional cdfs for the claims, as described in \eqref{eq:MarginalModel},   
and $\psi_1(\bigcdot;\boldsymbol{\theta}_1)$, $\psi_2(\bigcdot;\boldsymbol{\theta}_2)$ and $\psi_3(\bigcdot;\boldsymbol{\theta}_3)$ are the marginal pdfs for the non-null claim sizes.
\end{corollary}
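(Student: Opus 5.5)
The trivariate formula is Proposition \ref{pr:CombBernoulli_ll} specialized to $d=3$, with the sum over observations replaced by a single observation. The plan is therefore to enumerate the $2^3=8$ possible active sets $\mathcal{S}(\boldsymbol{x})\subseteq\{1,2,3\}$ and, for each one, read off the corresponding term of \eqref{eq:log_likelihood_CB}. The case split is made explicit with indicators: since the events $\{\mathcal{S}(\boldsymbol{x})=I\}$ for $I\subseteq\{1,2,3\}$ partition the sample space, the per-observation log-likelihood equals $\sum_I \mathbbm{1}_{\{\mathcal{S}(\boldsymbol{x})=I\}}$ times the term for that active set.

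For each $I$, I will evaluate the mixed derivative $\partial^{|I|}C/\prod_{i\in I}\partial u_i$ at $u_i=F_i(x_i;p_i)$. For components $j\notin I$ we have $x_j=0$, and by \eqref{eq:MarginalModel} $F_j(0;p_j)=1-p_j$. These arguments are therefore frozen at $1-p_j$ and are not differentiated. This produces the eight copula terms:
\begin{itemize}
\item $I=\varnothing$ gives $C(1-p_1,1-p_2,1-p_3)$, with no derivative, as in the remark after Proposition \ref{pr:CombBernoulli_ll};
\item the three singletons give first partial derivatives;
\item the three pairs give second mixed derivatives;
\item $I=\{1,2,3\}$ gives $\partial^3 C/\partial u_1\partial u_2\partial u_3=c$, the copula density, by absolute continuity of $C$.
\end{itemize}

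The marginal term $\sum_{i\in\mathcal{S}(\boldsymbol{x})}\ln(p_i\psi_i)$ is collected separately as $\sum_{i=1}^3 \ln(p_i\psi_i)\mathbbm{1}_{\{x_i>0\}}$, because $i\in\mathcal{S}(\boldsymbol{x})$ if and only if $x_i>0$. Summing gives the displayed expression.

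There is no real obstacle, since everything is inherited from Proposition \ref{pr:CombBernoulli_ll}. The only points needing care are the following:
\begin{itemize}
\item Check that $F_j(0;p_j)=1-p_j$, using the convention $\mathbbm{1}_{\{x\ge 0\}}$ with $\Psi_j(0)=0$.
\item Check that the frozen arguments may be substituted before differentiating. This holds because differentiation in $u_i$, $i\in I$, commutes with fixing the other coordinates.
\item Identify the full third mixed derivative with $c$ almost everywhere.
\end{itemize}
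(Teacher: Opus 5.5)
Your proposal is correct and follows the paper, which states this result without a separate proof as the direct $d=3$ specialization of Proposition~\ref{pr:CombBernoulli_ll}. Enumerating the eight active sets, freezing the inactive coordinates at $F_j(0;p_j)=1-p_j$, identifying the full third mixed derivative with $c$, and writing the marginal sum with the indicators $\mathbbm{1}_{\{x_i>0\}}$ is exactly that specialization.
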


As in the bidimensional Gaussian Comb-Bernoulli case, it is possible to write explicitly all the formulas above also for the tridimensional Gaussian Comb-Bernoulli. We recall that, in a three-dimensional Gaussian copula, the dependence is encoded in the correlation matrix
\[
\mathbf{R}:=\left[\begin{array}{ccc}
    1 & \varrho_{12} & \varrho_{13} \\
    \varrho_{12} & 1 & \varrho_{23} \\
    \varrho_{13} & \varrho_{23} & 1
\end{array}\right]\,\,.
\]
The following Corollary provides the per-observation log-likelihood formula in the tridimensional Gaussian Comb-Bernoulli.
\begin{corollary}
    In the trivariate Gaussian Comb-Bernoulli, the per-observation log-likelihood is:
    \begin{equation*} 
\label{eq:TridimGaussianLikelihood}
\begin{split}
\ln\,&\,\ell_{\boldsymbol{p}} (\boldsymbol{\Theta},\boldsymbol{\varrho}\mid x_1, x_2, x_3)
 =
\ln \Phi_3(\Phi^{-1}(1-p_1), \Phi^{-1}(1-p_2), \Phi^{-1}(1-p_3); \mathbf{R}) 
    \mathbbm{1}_{\{x_1=0, \, x_2=0, \, x_3=0\}} ~~ +
\\[0.15cm]
& +
\ln \Phi_2\left(\left[\begin{array}{c}
    \Phi^{-1}(1-p_2)-\varrho_{12}\Phi^{-1}(F_1(x_1;p_1)) \\
    \Phi^{-1}(1-p_3)-\varrho_{13}\Phi^{-1}(F_1(x_1;p_1))
\end{array}\right];\left[\begin{array}{cc}
    1-\varrho_{12}^2 & \varrho_{23}-\varrho_{12}\varrho_{13} \\
    \varrho_{23}-\varrho_{12}\varrho_{13} & 1-\varrho_{13}^2
\end{array}\right]\right) ~ 
    \mathbbm{1}_{\{x_1>0, \, x_2=0,\,x_3=0\}} ~~ + \\
& +
\ln \Phi_2\left(\left[\begin{array}{c}
    \Phi^{-1}(1-p_1)-\varrho_{12}\Phi^{-1}(F_2(x_2;p_2)) \\
    \Phi^{-1}(1-p_3)-\varrho_{23}\Phi^{-1}(F_2(x_2;p_2))
\end{array}\right];\left[\begin{array}{cc}
    1-\varrho_{12}^2 & \varrho_{13}-\varrho_{12}\varrho_{23} \\
    \varrho_{13}-\varrho_{12}\varrho_{23} & 1-\varrho_{23}^2
\end{array}\right]\right) ~
    \mathbbm{1}_{\{x_1=0, \, x_2>0,\,x_3=0\}} ~~ + \\%[0.15cm]
& +
\ln \Phi_2\left(\left[\begin{array}{c}
    \Phi^{-1}(1-p_1)-\varrho_{13}\Phi^{-1}(F_3(x_3;p_3)) \\
    \Phi^{-1}(1-p_2)-\varrho_{23}\Phi^{-1}(F_3(x_3;p_3))
\end{array}\right];\left[\begin{array}{cc}
    1-\varrho_{13}^2 & \varrho_{12}-\varrho_{13}\varrho_{23} \\
    \varrho_{12}-\varrho_{13}\varrho_{23} & 1-\varrho_{23}^2
\end{array}\right]\right) ~
    \mathbbm{1}_{\{x_1=0, \, x_2=0,\,x_3>0\}} ~~ + \\%[0.15cm]
& +
\ln  c_2\left(\left[\begin{array}{c}
    \Phi^{-1}(F_1(x_1;p_1)) \\
    \Phi^{-1}(F_2(x_2;p_2))
\end{array}\right];\mathbf{R}_{\{1,2\},\{1,2\}}\right)~
    \mathbbm{1}_{\{x_1>0, \, x_2>0,\,x_3=0\}} ~~ + \\%[0.15cm]
& +
\ln \Phi\left(\frac{\Phi^{-1}(1-p_3)-\mathbf{R}_{\{3\},\{1,2\}}\mathbf{R}^{-1}_{\{1,2\},\{1,2\}}\left[\begin{array}{c}
    \Phi^{-1}(F_1(x_1;p_1)) \\
    \Phi^{-1}(F_2(x_2;p_2))
\end{array}\right]}{\sqrt{1-\mathbf{R}_{\{3\},\{1,2\}}\mathbf{R}^{-1}_{\{1,2\},\{1,2\}}\mathbf{R}_{\{1,2\},\{3\}}}}\right)~
    \mathbbm{1}_{\{x_1>0, \, x_2>0,\,x_3=0\}} ~~ + \\%[0.15cm]
& +
\ln c_2\left(\left[\begin{array}{c}
    \Phi^{-1}(F_1(x_1;p_1)) \\
    \Phi^{-1}(F_3(x_3;p_3))
\end{array}\right];\mathbf{R}_{\{1,3\},\{1,3\}}\right) ~
    \mathbbm{1}_{\{x_1>0, \, x_2=0,\,x_3>0\}} ~~ + \\%[0.15cm]
& +
\ln \Phi\left(\frac{\Phi^{-1}(1-p_2)-\mathbf{R}_{\{2\},\{1,3\}}\mathbf{R}^{-1}_{\{1,3\},\{1,3\}}\left[\begin{array}{c}
    \Phi^{-1}(F_1(x_1;p_1)) \\
    \Phi^{-1}(F_3(x_3;p_3))
\end{array}\right]}{\sqrt{1-\mathbf{R}_{\{2\},\{1,3\}}\mathbf{R}^{-1}_{\{1,3\},\{1,3\}}\mathbf{R}_{\{1,3\},\{2\}}}}\right) ~
    \mathbbm{1}_{\{x_1>0, \, x_2=0,\,x_3>0\}} ~~ + \\%[0.15cm]
& +
\ln c_2\left(\left[\begin{array}{c}
    \Phi^{-1}(F_2(x_2;p_2)) \\
    \Phi^{-1}(F_3(x_3;p_3))
\end{array}\right];\mathbf{R}_{\{2,3\},\{2,3\}}\right) ~
    \mathbbm{1}_{\{x_1=0, \, x_2>0,\,x_3>0\}} ~~ + \\%[0.15cm]
& +
\ln \Phi\left(\frac{\Phi^{-1}(1-p_1)-\mathbf{R}_{\{1\},\{2,3\}}\mathbf{R}^{-1}_{\{2,3\},\{2,3\}}\left[\begin{array}{c}
    \Phi^{-1}(F_2(x_2;p_2)) \\
    \Phi^{-1}(F_3(x_3;p_3))
\end{array}\right]}{\sqrt{1-\mathbf{R}_{\{1\},\{2,3\}}\mathbf{R}^{-1}_{\{2,3\},\{2,3\}}\mathbf{R}_{\{2,3\},\{1\}}}}\right) ~
    \mathbbm{1}_{\{x_1=0, \, x_2>0,\,x_3>0\}} ~~ + \\%[0.15cm]
& +
\ln c_3 \left( \Phi^{-1}(F_1 (x_1;p_1)),\Phi^{-1}(F_2 (x_2;p_2)),\Phi^{-1}(F_3(x_3;p_3)); \mathbf{R} \right) ~\mathbbm{1}_{\{x_1>0,\, x_2>0,\,x_3>0\}} ~~ + \\%[0.20cm]
& +
\sum_{i=1}^3\ln \left[p_i \psi_i \left( x_i;\boldsymbol{\theta}_i \right) \right] \mathbbm{1}_{\{x_i>0\}}
~,
\end{split}
\end{equation*}
where $c_d$ denotes the $d$-variate normal copula density, and $\mathbf{R}_{I,J}$ denotes to the sub-matrix of $\mathbf{R}$ formed by the rows indexed by $I$ and the columns indexed by $J$.\footnote{For example, $\mathbf{R}_{\{1,2\},\{1,2\}}$ is the square sub-matrix composed of the first two rows and columns of $\mathbf{R}$, while $\mathbf{R}_{\{3\},\{1,2\}}$ is the row vector comprising the elements in the first two columns of the third row.}
\end{corollary}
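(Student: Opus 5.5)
This corollary is the trivariate Gaussian Comb-Bernoulli log-likelihood. The plan is to specialize Corollary \ref{prop:TrivariateLikelihood} term by term using Lemma \ref{lemma:GaussianCopulaDerivative}. The eight indicator terms correspond to the eight active sets $\mathcal{S}\subseteq\{1,2,3\}$. For each one we substitute $u_i=F_i(x_i;p_i)$ for $i\in\mathcal{S}$ and $u_j=1-p_j$ for $j\in\mathcal{T}$. This is consistent with \eqref{eq:MarginalModel}, since $F_j(0;p_j)=1-p_j$. Hence $\mathbf{z}_{\mathcal{S}}=[\Phi^{-1}(F_i(x_i;p_i))]_{i\in\mathcal{S}}$ and $\mathbf{z}_{\mathcal{T}}=[\Phi^{-1}(1-p_j)]_{j\in\mathcal{T}}$. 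The marginal sum $\sum_i \ln[p_i\psi_i]\mathbbm{1}_{\{x_i>0\}}$ passes through unchanged.

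Next we treat the terms by the size of $\mathsf{s}$.

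\emph{Case $\mathsf{s}=0$.} The derivative is $C$ itself, which by definition of the Gaussian copula equals $\Phi_3(\Phi^{-1}(1-p_1),\Phi^{-1}(1-p_2),\Phi^{-1}(1-p_3);\mathbf{R})$.

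\emph{Case $\mathsf{s}=1$, say $\mathcal{S}=\{1\}$.} The density ratio in \eqref{eq:gaussian_mixed_drvtv} is $\varphi_1(z_1;1)/\varphi_1(z_1;1)=1$, so only the conditional cdf survives. We compute $\mathbf{R}_{\mathcal{T}\mathcal{S}}\mathbf{R}_{\mathcal{S}\mathcal{S}}^{-1}=[\varrho_{12},\varrho_{13}]^\top$, which gives the mean shift $\Phi^{-1}(1-p_j)-\varrho_{1j}z_1$. The Schur complement is $\mathbf{R}_{\mathcal{T}\mathcal{T}}-[\varrho_{12},\varrho_{13}]^\top[\varrho_{12},\varrho_{13}]$. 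Its diagonal entries are $1-\varrho_{12}^2$ and $1-\varrho_{13}^2$, and its off-diagonal entry is $\varrho_{23}-\varrho_{12}\varrho_{13}$, exactly as displayed. The cases $\{2\}$ and $\{3\}$ follow by permuting indices.

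\emph{Case $\mathsf{s}=2$, say $\mathcal{S}=\{1,2\}$.} The density ratio $\varphi_2(\mathbf{z}_{\mathcal{S}};\mathbf{R}_{\mathcal{S}\mathcal{S}})/\varphi_2(\mathbf{z}_{\mathcal{S}};\mathbf{I}_2)$ is the bivariate Gaussian copula density $c_2$ with matrix $\mathbf{R}_{\{1,2\},\{1,2\}}$, evaluated at the normal scores. Here we read the displayed argument of $c_2$ as the $z$-scores, a notational convention. The remaining factor is a univariate $\Phi_1$ with scalar variance $1-\mathbf{R}_{\{3\},\{1,2\}}\mathbf{R}^{-1}_{\{1,2\},\{1,2\}}\mathbf{R}_{\{1,2\},\{3\}}$. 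Standardizing it gives $\Phi$ of the centered argument divided by the square root of that variance. Taking the logarithm splits the product into the two displayed lines.

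\emph{Case $\mathsf{s}=3$.} $\mathcal{T}=\varnothing$, the conditional factor is absent, and only $\ln c_3$ remains.

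The only step needing care is the $\mathsf{s}=1$ Schur-complement bookkeeping, getting the right entries in the right positions. The rest is direct substitution. No limiting arguments arise, because $p_i\in(0,1)$ keeps every $\Phi^{-1}$ finite, except when $p_i=1$, where those indicator terms are vacuous.
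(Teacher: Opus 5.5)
Your proposal is correct and follows the paper's implicit argument: the paper states this corollary without proof as the specialization of Corollary~\ref{prop:TrivariateLikelihood} via Lemma~\ref{lemma:GaussianCopulaDerivative}, and you substitute into each of the eight active-set terms and do the Schur-complement bookkeeping correctly. Your reading of the arguments of $c_2$ and $c_3$ as normal scores, i.e.\ $c_2(\mathbf{z};\mathbf{R}_{\mathcal{S}\mathcal{S}})=\varphi_2(\mathbf{z};\mathbf{R}_{\mathcal{S}\mathcal{S}})/\varphi_2(\mathbf{z};\mathbf{I}_2)$, is the right way to reconcile the display with the $u$-argument definition of the copula density in \eqref{eq:NormalCopulaDensity}.
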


\clearpage\newpage

\clearpage\newpage
\section{Proofs}
\label{sec:Proofs}

\textbf{Proof of Proposition \ref{pr:CombBernoulli_ll}.}
\noindent
Let $\boldsymbol{x}$ be an observation of the Comb-Bernoulli random vector $\boldsymbol{X}$, and let
$\ellcp(\boldsymbol{\Theta},\boldsymbol{\varrho}\mid\boldsymbol{x})$ denote its contribution to the log-likelihood.
Due to the structure of the model, each component of the random vector $\boldsymbol{X}$ has a mixed discrete–continuous distribution. %with an atom at $0$ and an absolutely continuous density on $(0,\infty)$.

By the Lebesgue decomposition theorem, the joint law can be split into atomic and absolutely continuous parts along each component \citep[cf., e.g.,][Thm.~6.10, p.~121]{rudin1987real}. 
Consequently, for any observation
$\boldsymbol{x}\in[0,\infty)^d$ with active set $\mathcal{S}(\boldsymbol{x})$,
the likelihood is obtained by differentiating
the joint cdf w.r.t.\ the continuous coordinates
and evaluating the remaining coordinates at the atom $0$
\citep[cf., e.g.,][Thm.~A1.4, p.~456]{kallenberg2002foundations}. Thus, the log-likelihood reads
\[
    \ellcp(\boldsymbol{\Theta},\boldsymbol{\varrho}\mid\boldsymbol{x})
    =
    \ln\left(
    \, 
    \frac{\partial^{\,|\mathcal{S}(\boldsymbol{x})|}}{\prod_{i \in \mathcal{S}(\boldsymbol{x})} \partial x_i} \;
    \mathbb{P}\Big(
        \{X_i \le x_i \,\, 
        \forall i \in 
        \mathcal{S}(\boldsymbol{x}) \, \} \cap
        \{X_j = 0 \,\, 
        \forall j \in \mathcal{S}(\boldsymbol{x})^\complement\, \}
    \Big) \,
    \right) \,.
\]

\noindent

In the Comb-Bernoulli model, the joint cdf is given by the copula representation \eqref{eq:copula_definition}:
\[
\mathbb{P}(\boldsymbol{X} \le \boldsymbol{x}) = 
C\big(F_1(x_1;p_1), \dots, F_d(x_d;p_d); \boldsymbol{\varrho}\big),
\]
where $F_i(\,\bigcdot\,; p_i) = 1-p_i + p_i \Psi_i(\bigcdot)$, with $\Psi_i(\bigcdot)$ absolutely continuous with density $\psi_i(\bigcdot)$.
Hence, by applying the chain rule of derivatives, the per-observation log-likelihood becomes
\[
    \ellcp(\boldsymbol{\Theta},\boldsymbol{\varrho}\mid\boldsymbol{x})
    =
    \ln\left(\left[
    \frac{\partial^{\,|\mathcal{S}(\boldsymbol{x})|}}{\prod_{i \in \mathcal{S}(\boldsymbol{x})} \partial u_i} 
    C(u_1, \dots, u_d; \boldsymbol{\varrho})
    \right]_{u_i = F_i(x_i;p_i)}\right) 
    +
    \sum_{i \in \mathcal{S}(\boldsymbol{x})} \ln\left(p_i \, \psi_i(x_i;\boldsymbol{\theta}_i)\right)
    \,.
\]

%\noindent
%Moreover, if $\mathcal{S} = \varnothing$ (i.e.\ all coordinates at the discrete atom $0$), the likelihood reduces to
%\[
%    \ellcp(\boldsymbol{\Theta},\boldsymbol{\varrho}\mid\boldsymbol{x})
%    =
%    \ln\left(C(1-p_1, \dots, 1-p_d; \boldsymbol{\varrho})\right) \,.
%\]

\noindent
The log-likelihood $\mathcal{L}_{\boldsymbol{p}}\left( \boldsymbol{\Theta}, \boldsymbol{\varrho} \mid \mathcal{X}\right)$ is obtained by summing all contributions in the time series.
$\hfill\square$

\bigskip
\noindent
\textbf{Proof of Lemma \ref{lem:prob_active_set}.}
    For any active set $I \subseteq \mathbb{S}$, define
    \(A := \bigcap_{i \in I}\{X_i > 0\}\)
    and
    \(B_j := \{X_j > 0\}\)\,.
    Then, the event of interest reads
    \[
    E 
    := 
    \Big(\, \bigcap_{i \in I}\{X_i > 0\} \Big)
    \cap 
    \Big(\, \bigcap_{j \in I^{\complement}}\{X_j = 0\} \Big)
    = 
    A 
    \cap
    \bigcap_{j \in I^{\complement}} B_j^{\,\complement}
    = 
    A
    \cap 
    \Big(\, \bigcup_{j \in I^{\complement}} B_j\Big)^{\complement} 
    =
    A
    \setminus
    \Bigg(
    A
    \cap 
    \, \bigcup_{j \in I^{\complement}} B_j
    \Bigg)
    \,.
    \]
    %where the last equality follows from De Morgan's laws.
    %
    By the inclusion--exclusion principle \citep[see, e.g.,][p.~99]{Feller},
    \begin{align*}
    \mathbb{P} \Big( A \cap \bigcup_{j \in I^{\complement}} B_j \Big) &=
    \mathbb{P} \Big( \bigcup_{j \in I^{\complement}} 
    \big( A \cap B_j \big) \Big) 
    =
    \\
    &=
    \sum_{\begin{subarray}{l}
        J \subseteq I^{\complement} \\[0.05cm]
        J \neq \varnothing
    \end{subarray}}
    (-1)^{|J|+1} \,\,
    \mathbb{P}\left( \bigcap_{j \in J} (A \cap B_j)\right)
    =
    \sum_{\begin{subarray}{l}
        J \subseteq I^{\complement} \\[0.05cm]
        J \neq \varnothing
    \end{subarray}}
    (-1)^{|J|+1} \, \mathbb{P} \Big( A \cap \bigcap_{j \in J} B_j \Big) \,.
    \end{align*}

    \noindent
    Therefore, using the identity $\mathbb{P}(E) = \mathbb{P}(A) - \mathbb{P}\big(A \cap \, \bigcup_{j \in I^{\complement}} \, B_j\big)$, we obtain
    \[
    \mathbb{P}(E) = 
    \sum_{ J \subseteq I^{\complement} }
    (-1)^{|J|} \, \mathbb{P} \Big( A \cap \bigcap_{j \in J} B_j \Big)
    =
    \sum_{ J \supseteq I }
    (-1)^{|J| - |I|} \, \mathbb{P} \Big( \bigcap_{j \in J} B_j \Big)
    \,.
    \]
    %where we adopt the convention that the intersection over the empty set is the entire sample space, so that the $J = \varnothing$ term yields $\mathbb{P}(A)$.

    \noindent
    Finally, under the copula representation of the joint distribution of $\mathbf{X}$, each probability satisfies
    \[
    \mathbb{P} \Big( \bigcap_{j \in J} B_j \Big)
    = 
    \overline{C}_{J} \Big( \{ \overline{F}_j(0) \}_{j \in J} \Big)
    =
    \overline{C}_{J} \Big( \{ p_j \}_{j \in J} \Big)
    \]
    where $\overline{C}_{J}$ denotes the restriction of the survival copula $\overline{C}$ to the coordinates in $J$.
$\hfill\square$

\bigskip
\noindent
\textbf{Proof of Proposition \ref{pr:CombPoisson_ll}.}
Given a time series $\mathcal{X}$, we rearrange the total likelihood, defined as
\[
\mathrm{L}_{\boldsymbol{p}}\left(\boldsymbol{\Theta},\boldsymbol{\varrho}\mid\mathcal{X}\right):=\exp\left\{\mathcal{L}_{\boldsymbol{p}}\left(\boldsymbol{\Theta},\boldsymbol{\varrho}\mid\mathcal{X}\right)\right\}\,\,,
\]
by grouping the observations according to their active set $\mathcal{S}(\boldsymbol{x})$. Taking the exponential of equation \eqref{eq:log_likelihood_CB} for the discrete likelihood $\mathrm{L}_{\boldsymbol{p}}$, we can rewrite it as:
\begin{equation}
\label{eq:ll_cb_rewritten}
    \mathrm{L}_{\boldsymbol{p}}(\boldsymbol{\Theta},\boldsymbol{\varrho}\mid\mathcal{X}) = (p_{\varnothing}^{\perp})^{n_{\varnothing}^{\perp}} \prod_{I \neq \varnothing} \left[ (p_I^{\perp})^{n_I^{\perp}} \left( \prod_{t=1}^N \psi_I^{\perp}(\boldsymbol{x}^{(t)}) \mathbb{I}_{\{\mathcal{S}(\boldsymbol{x}^{(t)}) = I\}} \right) \right]\,\,,
\end{equation}
where, thanks to the inclusion-exclusion principle,
\[
\psi_{I}^{\perp}( \boldsymbol{x})
    :=
    \frac{1}{p_I^{\perp} }
    \sum_{J \supseteq I}
    (-1)^{|J|}
    \;
    \frac{
        \partial^{\, |I|}
    }{
        \prod_{i \in I} \partial x_i
    }
    \overline{C}_J \left(\left\{ \overline{F}_i(x_i)\right\}_{i \in {J}} \right)
    \,\,.
\]
Since $\sum_{I \subseteq \mathbb{S}} p_I^{\perp} = 1$ and the total number of observations is $\sum_{I \subseteq \mathbb{S}} n_I^{\perp} = N$, the likelihood can be separated into three main factors:
\begin{equation}
\label{eq:ll_cb_rewritten2}
\mathrm{L}_{\boldsymbol{p}}(\boldsymbol{\Theta},\boldsymbol{\varrho}\mid\mathcal{X}) = \left(1 - \sum_{I \neq \varnothing} p_I^{\perp}\right)^N \left[ \prod_{I \neq \varnothing} \left( \frac{p_I^{\perp}}{1 - \sum_{I \neq \varnothing} p_I^{\perp}} \right)^{n_I^{\perp}} \right] \left[ \prod_{I \neq \varnothing} \left( \prod_{t=1}^N \psi_I^{\perp}\left(\boldsymbol{x}^{(t)}\right) \mathbb{I}_{\{\mathcal{S}(x^{(t)}) = I\}} \right) \right]\,\,.
\end{equation}

\bigskip
We set $p_I^{\perp} = \lambda_I^{\perp} \Delta t$ for every non-empty active set $I \neq \varnothing$, and we define $\lambda_i := \sum_{I \subseteq \mathbb{S}: i \in I} \lambda_I^{\perp}$.
We then consider the three terms in \eqref{eq:ll_cb_rewritten2} separately as $\Delta t \rightarrow 0$.
For the first term, it holds that
\[
\left(1 - \sum_{I \neq \varnothing} p_I^{\perp}\right)^N = \left(1 - \Delta t \sum_{I \neq \varnothing} \lambda_I^{\perp}\right)^{\frac{T}{\Delta t}} = \exp\left\{-T \sum_{I \neq \varnothing} \lambda_I^{\perp}\right\} + o(1)\,\,,
\]
while, for the second term, we have
\[
\prod_{I \neq \varnothing} \left( \frac{p_I^{\perp}}{1 - \sum_{I \neq \varnothing} p_I^{\perp}} \right)^{n_I^{\perp}} = \prod_{I \neq \varnothing} \left( \frac{\Delta t \lambda_I^{\perp}}{1 - \Delta t \sum_{I \neq \varnothing} \lambda_I^{\perp}} \right)^{n_I^{\perp}} = \left( \prod_{I \neq \varnothing} (\Delta t \lambda_I^{\perp})^{n_I^{\perp}} \right) (1+o(1))\,\,.
\]
For the last term, we consider the expression for $\psi_I^{\perp}(\boldsymbol{x})$, and knowing that the survival function can be written as $\overline{F}_i(x_i) = \Delta t\,\lambda_i\,\overline{\Psi}_i(x_i)$, we express the limit through the L\'evy copula $\mathfrak{C}_J$ as
\[
\psi_I^{\perp}(x) = \frac{1}{\lambda_I^{\perp}} \left( \prod_{i \in I} \lambda_i \psi_i(x_i) \right) \sum_{J \supseteq I} (-1)^{|I|+|J|} \left[ \frac{\partial^{|I|}}{\prod_{i \in I} \partial u_i} \mathfrak{C}_J\left(\{u_i\}_{i \in J}\right) \right]_{u_i = \lambda_i \overline{\Psi}_i(x_i)}
\]
Substituting these three limits back into equation \eqref{eq:ll_cb_rewritten2} yields the expansion
%all $\lambda_I^{\perp}$ at the denominator in $\psi_I^{\perp}$ simplify with the numerators from the second term. 
%The limit behavior of the likelihood is thus given by:
\begin{equation}
\label{eq:limit_probability_cb}
    %\mathrm{L}_{\Delta t\boldsymbol{\lambda}}(\boldsymbol{\Theta},\boldsymbol{\varrho}\mid\mathcal{X}) = 
    \prod_{I \neq \varnothing} \left[ e^{-T \lambda_I^{\perp}} \left( \prod_{i \in I} \lambda_i \right)^{n_I^{\perp}} \prod_{t=1}^{n_I^{\perp}} \left( \zeta_I\left(\boldsymbol{x}^{(t)}\right) \prod_{i \in I} \psi_i\left(x_i^{(t)}\right) \right) \right] (\Delta t)^{\sum_{I \neq \varnothing} n_I^{\perp}} + o(1)\,\,.
\end{equation}
When passing to continuous-time, 
%the probability of observing an exact sequence of jumps over an infinitely fine grid vanishes (represented by the isolated $(\Delta t)$ multiplier). Instead, 
the density of the process obtained is exactly the component within the square brackets in \eqref{eq:limit_probability_cb}, verifying the equation:
\[
\mathrm{L}_{\lambda}(\boldsymbol{\Theta},\boldsymbol{\varrho}\mid\mathcal{X}) 
= 
\prod_{I \neq \varnothing} 
\left[ \,
e^{-T \lambda_I^{\perp}} \left( \prod_{i \in I} \lambda_i \right)^{n_I^{\perp}} \, \prod_{t=1}^{n_I^{\perp}} \left( \zeta_I(\boldsymbol{x}^{(t)}) \prod_{i \in I} \psi_i\left(x_i^{(t)}\right) \right) 
\right]\,\,.
\]
Finally, by taking the logarithm, we get
the log-likelihood $\mathcal{L}_{\boldsymbol{\lambda}}$ in equation \eqref{eq:Comb_Poi_ll}.
$\hfill\square$

\bigskip
\noindent
\textbf{Proof of Lemma \ref{lemma:GaussianCopulaDerivative}.}
By the definition of the Gaussian copula, 
\[
    C(u_1, \dots, u_d; \mathbf{R}) = 
    \mathbb{P}(Z_1 \leq \Phi^{-1}(u_1), \dots, Z_d \leq \Phi^{-1}(u_d)) \,,
\]
where $\mathbf{Z} = [Z_1, \dots, Z_d]^\top$ is a $d$-dimensional st.n.\ random vector with correlation matrix $\mathbf{R}$.

\smallskip
For any $i=1,\dots,d$, let $z_i := \Phi^{-1}(u_i)$; the derivative of this transformation is
$\partial z_i / \partial u_i = 1 / \varphi(z_i)$.
Applying the chain rule for the partial derivative w.r.t.\ the variables $u_i$ for $i \in \mathcal{S}$ yields:
\begin{equation} \label{eq:partial_derivative_gaussian_proof}
    \frac{\partial^{\, \mathsf{s}} C(u_1, \dots, u_d; \mathbf{R})}
    {\prod_{i \in \mathcal{S}} \partial u_i}
    =
    \frac{1}{\prod_{i \in \mathcal{S}} \varphi(z_i)} \,
    \frac{\partial^{\, \mathsf{s}} \,
    \mathbb{P}(Z_1 \leq z_i, \dots, Z_d \leq z_d)}
    {\prod_{i \in \mathcal{S}} \partial z_i} \,\,.
\end{equation}

%\label{eq:R_partitioning}\label{eq:zs_and_zt}
\noindent
We partition the index set $\{1,\dots,d\}$ into the set $\mathcal{S}$ and its complement $\mathcal{T}$, and correspondingly partition the random vector and the correlation matrix as in 
\eqref{eq:zs_and_zt} and \eqref{eq:R_partitioning}:
\begin{equation*}
    \mathbf{Z} 
    =
    \begin{bmatrix}
        \mathbf{Z}_{\mathcal{S}} \\ \mathbf{Z}_{\mathcal{T}}
    \end{bmatrix}
    \quad
    \textnormal{and}
    \quad
    \mathbf{R} =
    \begin{bmatrix}
        \mathbf{R}_{\mathcal{S} \mathcal{S}} &
        \mathbf{R}_{\mathcal{S} \mathcal{T}} \\
        \mathbf{R}_{\mathcal{T} \mathcal{S}} &
        \mathbf{R}_{\mathcal{T} \mathcal{T}}
    \end{bmatrix} 
    \,.
\end{equation*}
By the fundamental theorem of calculus,
the joint derivative in \eqref{eq:partial_derivative_gaussian_proof} can be expressed as the joint density of $\mathbf{Z}_\mathcal{S}$ times the conditional probability of 
$\mathbf{Z}_\mathcal{T} = \mathbf{z}_\mathcal{T}$, i.e.,\ as
\[
    \frac{\partial \, 
        \mathbb{P}(\mathbf{Z}_\mathcal{S} \leq \mathbf{z}_\mathcal{S}, 
        \mathbf{Z}_\mathcal{T} \leq \mathbf{z}_\mathcal{T})}
    {\partial \mathbf{z}_\mathcal{S}}
    =
    \varphi_{\, \mathsf{s}}(\mathbf{z}_\mathcal{S}; 
    	\mathbf{R}_{\mathcal{S}\mathcal{S}}) \,
    \mathbb{P}(
    	\mathbf{Z}_\mathcal{T} \leq \mathbf{z}_\mathcal{T} 
    	\,\vert\, 
    	\mathbf{Z}_\mathcal{S} = \mathbf{z}_\mathcal{S}
    ) \,.
\]

\noindent
Moreover, the conditional distribution of $\mathbf{Z}_\mathcal{T}$ given
$\mathbf{Z}_\mathcal{S} = \mathbf{z}_\mathcal{S}$ satisfies the well-known formula
\[
    \mathbf{Z}_\mathcal{T} \,\vert\, 
    \mathbf{Z}_\mathcal{S} = 
    \mathbf{z}_\mathcal{S} \ 
    \sim\ 
    N_{d-\mathsf{s}}\big( 
        \mathbf{R}_{\mathcal{T}\mathcal{S}} 
        \mathbf{R}_{\mathcal{S}\mathcal{S}}^{-1} 
        \mathbf{z}_{\mathcal{S}}, 
        \mathbf{R}_{\mathcal{T}\mathcal{T}} - 
        \mathbf{R}_{\mathcal{T}\mathcal{S}} 
        \mathbf{R}_{\mathcal{S}\mathcal{S}}^{-1} 
        \mathbf{R}_{\mathcal{S}\mathcal{T}} \big) \,,
\]
and therefore
\begin{equation} \label{eq:conditional_prob_gaussian_proof}
    \mathbb{P}(
        \mathbf{Z}_\mathcal{T} \leq \mathbf{z}_\mathcal{T} 
    	\,\vert\,  
    	\mathbf{Z}_\mathcal{S} = \mathbf{z}_\mathcal{S}
    ) 
    = 
    \Phi_{d-\mathsf{s}}
    \big( 
    \mathbf{z}_\mathcal{T} - 
    \mathbf{R}_{\mathcal{T}\mathcal{S}} 
    \mathbf{R}_{\mathcal{S}\mathcal{S}}^{-1} 
    \mathbf{z}_{\mathcal{S}};\ 
    \mathbf{R}_{\mathcal{T}\mathcal{T}} - 
    \mathbf{R}_{\mathcal{T}\mathcal{S}} 
    \mathbf{R}_{\mathcal{S}\mathcal{S}}^{-1} 
    \mathbf{R}_{\mathcal{S}\mathcal{T}} 
    \big) \,.
\end{equation}

\noindent
By combining \eqref{eq:partial_derivative_gaussian_proof} and \eqref{eq:conditional_prob_gaussian_proof}, we obtain the stated expression. 
$\hfill\square$

\bigskip
\noindent
\textbf{Proof of Proposition \ref{pr:conv_est_fcn_par}}
Since, by Proposition \ref{pr:uniformly_convergence} below, the log-likelihood $\llcb\left(\boldsymbol{\Theta},\boldsymbol{\varrho}\mid\mathcal{X}\right)$ converges locally uniformly almost surely to $\llcbsp\left(\boldsymbol{\Theta},\boldsymbol{\varrho}\mid\mathcal{X}\right)$, then it also converges almost surely for all $\boldsymbol{\varrho}\in\mathcal{K}_d$.
$\hfill\square$

\smallskip
\noindent
\paragraph{Proof of Proposition \ref{prop:conv_estimators_tilde}} 
It is straightforward to prove that 
\[
\lim_{\boldsymbol{\varrho}\to\partial\mathcal{K}_d}\llcb\left(\widehat{\boldsymbol{\Theta}},\boldsymbol{\varrho}\mid\mathcal{X}\right)=-\infty\,\,,\quad\lim_{\boldsymbol{\varrho}\to\partial\mathcal{K}_d}\llcbsp\left(\widehat{\boldsymbol{\Theta}},\boldsymbol{\varrho}\mid\mathcal{X}\right)=-\infty\,\,,
\]
and that $\llcb(\widehat{\boldsymbol{\Theta}},\boldsymbol{\varrho}\mid\mathcal{X}),\,\llcbsp(\widehat{\boldsymbol{\Theta}},\boldsymbol{\varrho}\mid\mathcal{X})>-\infty$ $\forall\,\boldsymbol{\varrho}\in\mathcal{K}_d$, thus, $\forall\,M$, the sets
\[
K_{\boldsymbol{p},M}:=\left\{\boldsymbol{\varrho}\in\mathcal{K}_d\,|\,\llcb\left(\widehat{\boldsymbol{\Theta}},\boldsymbol{\varrho}\mid\mathcal{X}\right)\geq M\right\}\,\,,\quad K_M:=\left\{\boldsymbol{\varrho}\in\mathcal{K}_d\,|\,\llcbsp\left(\widehat{\boldsymbol{\Theta}},\boldsymbol{\varrho}\mid\mathcal{X}\right)\geq M\right\}
\]
are compact subsets of $\mathcal{K}_d$. 
Thanks to Proposition \ref{pr:uniformly_convergence} below, $\exists\,K\in\mathcal{K}_d,\,\overline{\boldsymbol{p}}$ such that $\rhocb\in K$ for $\boldsymbol{p}$ such that $p_i\geq\overline{p}_i$ for all $i=1,\ldots,d$.
Since, for $\boldsymbol{p}$ close to $\boldsymbol{1}$, the sequence $\{\rhocb\}_{\boldsymbol{p}}$ is contained in the compact set $K$, then every subsequence $\{\widehat{\boldsymbol{\varrho}}_{\boldsymbol{p}_m}\}_{m}$ converges to a $\tilde{\boldsymbol{\varrho}} \in K$. By definition of the $\arg\max$,
\[
\mathcal{L}_{\boldsymbol{p}_m}\left(\widehat{\boldsymbol{\Theta}},\widehat{\boldsymbol{\varrho}}_{\boldsymbol{p}_m}\mid\mathcal{X}\right) = \max_{\boldsymbol{\varrho} \in K} \mathcal{L}_{\boldsymbol{p}_m}\left(\widehat{\boldsymbol{\Theta}},\boldsymbol{\varrho}\mid\mathcal{X}\right) \ge \llcb\left(\widehat{\boldsymbol{\Theta}},\widehat{\boldsymbol{\varrho}}\mid\mathcal{X}\right)\,\,.
\]
Passing to the limit using locally uniform convergence on compacts, we obtain
\[
\llcbsp\left(\widehat{\boldsymbol{\Theta}},\tilde{\boldsymbol{\varrho}}\mid\mathcal{X}\right) \ge \llcbsp\left(\widehat{\boldsymbol{\Theta}},\widehat{\boldsymbol{\varrho}}\mid\mathcal{X}\right)\,\,,
\]
and uniqueness of the maximum implies $\tilde{\boldsymbol{\varrho}} = \widehat{\boldsymbol{\varrho}}$. Since every subsequence converges to the same $\widehat{\boldsymbol{\varrho}}$, the full sequence converges almost surely to $\widehat{\boldsymbol{\varrho}}$:
\[
    \phantom{\square}
    \hspace{6.5cm}
    \boldsymbol{\varrho}_p \overset{\text{a.s.}}{\longrightarrow} \widehat{\boldsymbol{\varrho}} \quad \text{as }\boldsymbol{p}\to\boldsymbol{1} \,.
    \hspace{6.5cm}
    \square
\]

\iffalse
\paragraph{Proof of Proposition \ref{prop:speraman_rho_copula}}
Let us define
\[
    u_{i} := \Psi_i \left(x_i,\boldsymbol{\theta}_i\right)\quad\forall\,i=1,...,d\,\,,
\]
and
\[
    \xi_{i} := \Phi^{-1} \left(\Psi_i \left(x_i,\boldsymbol{\theta}_i\right)\right)\quad\forall\,i=1,...,d\,\,,
\]
On the one hand, if the marginals are continuous, then the $u_i$ are uniform rvs while all the $ \xi_i $ are st.n.;
on the other hand, because the variables $[X_1,...,X_d]^\top$ are coupled with a Gaussian copula with correlation matrix $\mathbf{R}$, thanks to the Sklar's Theorem, 
\[
C(u_1,...,u_d; \mathbf{R}) = \Phi_d (\xi_{1},...,\xi_{d};\mathbf{R}) \,\,,
\]
that is equivalent to stating that
\[
    \left[ \begin{matrix} \xi_{1} \\\vdots\\ \xi_{d} \end{matrix} \right]
    %\left[\xi_{x},\,\xi_{y}\right]^\top
    \sim \mathcal{N}_d \left( \boldsymbol{0}, \mathbf{R} \right)  \,\,.
\]
Moreover, because each $\xi_{i}$ is a monotonic function of $x_i$ $i=1,...,d$,
the ranks of all $x_i$ and of $\xi_{i}$ coincide; thus, also their Spearman correlation. 
Thanks to the result from \citet[Ch.9, points (9.15, 10.15)]{kendall1990rank}, we prove the thesis
$\hfill\square$
\fi

%\paragraph{Proof of Proposition \ref{prop:speraman_rho_copula}}

%\[
%\boldsymbol{\rho}_{i,j}=\frac{6}{\pi}\arcsin\left(\frac{\mathrm{R}_{i,j}}{2}\right)
%\]
%$\hfill\square$
%\smallskip
\noindent
\paragraph{Proof of Proposition \ref{cor:consis_with_clv}}
The consistency of $\widehat{\boldsymbol{\varrho}}$ comes from the properties of IFM estimators \citep[see, e.g.][Sec.10.1]{joe1997multivariate} and of extemum estimators \citep[see, e.g.,][when the estimator $\widehat{\boldsymbol{\varrho}}$ does not satisfy the first order conditions]{newey1994large}.\\
To prove the consistency of $\widehat{\rho}_{ij}$ $\forall\,i,j=1,\ldots,d$ $i\neq j$, let us introduce the empirical copula \citep[][]{deheuvels1979fonction}
\begin{equation*}
    C_{N,ij}(u_i,u_j):=\frac{1}{N}\sum_{t=1}^N\mathbbm{1}_{\left\{\frac{Rank\left[x_i^{(t)}\right]}{N+1}\leq u_i\,;\,\frac{Rank\left[x_j^{(t)}\right]}{N+1}\leq u_j\right\}}\,\,,
\end{equation*}
where the rank is defined as
\[
Rank\left[x_i^{(t)}\right]=\sum_{k=1}^N\mathbbm{1}_{\left\{x_i^{(k)}\leq x_i^{(t)}\right\}}\,\,.
\]
It's straightforward to prove that
\begin{equation*}
    \widehat{\rho}_{ij}=\frac{N+1}{N-1}\left(12\int_0^1\int_0^1C_{N,ij}(u_i,u_j)\mathrm{d}u_i\mathrm{d}u_j-3\right)\,\,.
\end{equation*}
Thanks to Theorem 3.1 from \cite{deheuvels1979fonction}
\begin{equation*}
    \sup_{u_i,u_j\in[0,1]}\left|C_{N,ij}(u_i,u_j)-C_{ij}(u_i,u_j;\varrho_{ij})\right|\overset{\text{a.s}}{=}0\,\,,
\end{equation*}
where $C_{ij}(\bigcdot;\varrho_{ij})$ denotes the copula between the rvs $X_i$ and $X_j$.\\
Thus, considering the definition of Spearman's $\rho$ \citep[cf.,][Def.3.3, p.100]{CLV}, the Spearman correlation estimator $\rho_{ij}$ is consistent, i.e.,
\begin{equation*}
    \widehat{\rho}_{ij}\overset{\text{a.s.}}{\longrightarrow}\rho_{ij}=12\int_0^1\int_0^1C_{ij}(u_i,u_j;\varrho_{ij})\mathrm{d}u_i\mathrm{d}u_j-3\,\,,
\end{equation*}
thus, thanks to Proposition \ref{prop:speraman_rho_copula} below
\[
\phantom{\square} \hspace{4.5cm}
\displaystyle\widehat{\mathrm{R}}_{i,j}=2\sin\left(\frac{\pi}{6}\widehat{\rho}_{i,j}\right)\overset{\text{a.s.}}{\longrightarrow}2\sin\left(\frac{\pi}{6}\rho_{i,j}\right)=\mathrm{R}_{i,j}.
\hspace{4.5cm} \square
\]

\medskip
\section*{Auxiliary results}
In the following, we report two auxiliary results needed for the proofs in this Appendix.

\begin{proposition}\label{pr:uniformly_convergence}
    The log-likelihood $\llcb(\boldsymbol{\Theta},\boldsymbol{\varrho}\mid\mathcal{X})$ converges locally uniformly almost surely to $\llcbsp(\boldsymbol{\Theta},\boldsymbol{\varrho}\mid\mathcal{X})$, i.e.,
    \begin{equation*}
        \forall\,K\subset \mathcal{K}_d,\,K\;\text{compact}\quad\sup_{\boldsymbol{\varrho}\in K}\left|\llcbsp(\boldsymbol{\Theta},\boldsymbol{\varrho}\mid\mathcal{X})-\llcb(\boldsymbol{\Theta},\boldsymbol{\varrho}\mid\mathcal{X})\right|\overset{\text{a.s.}}{\longrightarrow}0\quad\text{as}\,\,\boldsymbol{p}\to\boldsymbol{1}\,\,.
    \end{equation*}
\end{proposition}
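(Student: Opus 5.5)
The plan is to fix the sample $\mathcal{X}$, which is finite with $N$ observations, and reduce the statement to a uniform estimate for each observation separately; uniform convergence of a finite sum follows from uniform convergence of each summand. The interpretation I would adopt is the one behind the limit \eqref{eq:limit_ll}: as $\boldsymbol{p}\to\boldsymbol{1}$ the atoms at zero vanish, so almost surely every observation has full active set, $\mathcal{S}(\boldsymbol{x})=\mathbb{S}$. In addition, a.s. $x_i$ lies in the interior of the support of $\psi_i$, so that $\Psi_i(x_i;\boldsymbol{\theta}_i)\in(0,1)$ strictly. Both are probability-one properties of the limiting continuous model. The "almost surely" in the statement is discharged exactly here, since it guarantees every evaluation point is finite.

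With $\mathcal{S}(\boldsymbol{x})=\mathbb{S}$, the set $\mathcal{T}$ in Lemma \ref{lemma:GaussianCopulaDerivative} is empty and the $\Phi_{d-\mathsf{s}}$ factor equals $1$. The summand of \eqref{eq:log_likelihood_CB} therefore becomes
\[
\ln c\big(F_1(x_1;p_1),\dots,F_d(x_d;p_d);\mathbf{R}\big)+\sum_{i=1}^d \ln p_i+\sum_{i=1}^d\ln\psi_i(x_i;\boldsymbol{\theta}_i).
\]
The term $\sum_i\ln p_i$ does not depend on $\boldsymbol{\varrho}$ and tends to $0$, so it converges uniformly trivially. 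Using \eqref{eq:NormalCopulaDensity}, write
\[
\ln c(\boldsymbol{u};\mathbf{R})=-\tfrac12\ln\det\mathbf{R}-\tfrac12\,\mathbf{z}^\top(\mathbf{R}^{-1}-\mathbf{I}_d)\mathbf{z},
\qquad z_i^{\boldsymbol{p}}:=\Phi^{-1}\big(1-p_i+p_i\Psi_i(x_i)\big).
\]
Because $\Psi_i(x_i)\in(0,1)$ and $\Phi^{-1}$ is continuous on $(0,1)$, we have $z_i^{\boldsymbol{p}}\to z_i^{\infty}:=\Phi^{-1}(\Psi_i(x_i))$, which is finite. In particular $\sup_{\boldsymbol{p}\text{ near }\boldsymbol{1}}|\mathbf{z}^{\boldsymbol{p}}|<\infty$.

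Uniformity in $\boldsymbol{\varrho}$ then comes from compactness. The term $-\tfrac12\ln\det\mathbf{R}$ is common to $\llcb$ and $\llcbsp$ and cancels in the difference. For the quadratic form, set $A(\mathbf{R}):=\mathbf{R}^{-1}-\mathbf{I}_d$. On a compact $K\subset\mathcal{K}_d$ of positive definite correlation matrices, $\det\mathbf{R}$ is bounded away from $0$, so $M_K:=\sup_{K}\|A(\mathbf{R})\|<\infty$. The elementary bound
\[
\big|\mathbf{z}^\top A\mathbf{z}-\mathbf{w}^\top A\mathbf{w}\big|\le \|A\|\,(|\mathbf{z}|+|\mathbf{w}|)\,|\mathbf{z}-\mathbf{w}|
\]
then gives, for each observation,
\[
\sup_{\boldsymbol{\varrho}\in K}|\text{summand}_{\boldsymbol{p}}-\text{summand}_{\boldsymbol{1}}|\le \tfrac12 M_K(|\mathbf{z}^{\boldsymbol{p}}|+|\mathbf{z}^\infty|)|\mathbf{z}^{\boldsymbol{p}}-\mathbf{z}^\infty|+\Big|\sum_i\ln p_i\Big|\to0 .
\]
Summing over the $N$ observations gives the claim.

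The main obstacle is conceptual rather than computational: justifying that zero observations can be excluded. If an observation had $x_j=0$ for some $j$, then $z_j=\Phi^{-1}(1-p_j)\to-\infty$. The conditional factor $\Phi_{d-\mathsf{s}}$ would then go to $0$ and the log-likelihood to $-\infty$, and the statement would fail. I would therefore state explicitly that the "a.s." refers to the event, of probability one under the limiting continuous-marginal model, that all entries are strictly positive and lie in the interior of the supports. I would then check that the bounds $M_K$ and $\sup_{\boldsymbol{p}}|\mathbf{z}^{\boldsymbol{p}}|$ do not depend on $\boldsymbol{\varrho}$, which is what makes the convergence locally uniform rather than merely pointwise.
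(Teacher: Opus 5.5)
Your proof is correct and follows the same route as the paper's. Both reduce to observations with full active set, cancel $\ln\det\mathbf{R}$, and bound the difference of the quadratic forms by $\sup_{K}\|\mathbf{R}^{-1}-\mathbf{I}_d\|$ (the paper uses $\lambda_{\max}(\mathbf{R}^{-1})+1$, finite on $K$ by continuity) times $\|\mathbf{z}^{\boldsymbol{p}}-\mathbf{z}^{\infty}\|$ and a factor that stays bounded.

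The only difference is how observations with zeros are removed. The paper decomposes over active sets and argues that the indicators of proper active sets vanish a.s.\ because $p_I^\perp\to0$. You instead restrict up front to the probability-one event that all entries are positive, which is cleaner and also makes explicit the $\boldsymbol{\varrho}$-independent term $\sum_i\ln p_i$ that the paper passes over.
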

\begin{proof}
    For all $\boldsymbol{x}\in\mathcal{X}$, we can write the per-observation log-likelihood as
    \[
    \ellcp(\boldsymbol{\Theta},\boldsymbol{\varrho}\mid\boldsymbol{x})=\sum_{I \subseteq \mathbb{S}} \mathbbm{1}_{\{I=\mathcal{S}(\boldsymbol{x})\}} \, 
    \left(\sum_{i \in I} \ln \left[ p_i \psi_i(x_i;\boldsymbol{\theta}_i) \right]
            +
            \ln \left[
                \frac{\partial^{\, |I|} \,
                    C(u_1, \dots, u_d; \boldsymbol{\varrho})}
                {\prod_{i \in I} \, \partial u_i} \,
            \right]_{u_i = F_i(x_i;p_i)}\right)\,\,.
    \]
    Thus, for a sample $\mathcal{X}$, we can write
    \begin{equation}
    \label{eq:inequalities_sup}
    \begin{aligned}
        \sup_{\boldsymbol{\varrho}\in K}&\left|\llcbsp(\boldsymbol{\Theta},\boldsymbol{\varrho}\mid\boldsymbol{x})-\llcb(\boldsymbol{\Theta},\boldsymbol{\varrho}\mid\boldsymbol{x})\right|\leq\sum_{\boldsymbol{x}\in\mathcal{X}}\sup_{\boldsymbol{\varrho}\in K}\left|\ell(\boldsymbol{\Theta},\boldsymbol{\varrho}\mid\boldsymbol{x})-\ell_{\boldsymbol{p}}(\boldsymbol{\Theta},\boldsymbol{\varrho}\mid\boldsymbol{x})\right|\leq\\&\leq\sum_{\boldsymbol{x}\in\mathcal{X}}\sup_{\boldsymbol{\varrho}\in K}\left|\ell(\boldsymbol{\Theta},\boldsymbol{\varrho}\mid\boldsymbol{x})-\mathbbm{1}_{\{\mathbb{S}=\boldsymbol{\mathcal{S}}(\boldsymbol{x})\}}\ell_{\boldsymbol{p}}^\mathbb{S}(\boldsymbol{\Theta},\boldsymbol{\varrho}\mid\boldsymbol{x})\right|+\sum_{\boldsymbol{x}\in\mathcal{X}}\sup_{\boldsymbol{\varrho}\in K}\left|\sum_{I\subset \mathbb{S}}\mathbbm{1}_{\{I=\boldsymbol{\mathcal{S}}(\boldsymbol{x})\}}\ell_{\boldsymbol{p}}^I(\boldsymbol{\Theta},\boldsymbol{\varrho}\mid\boldsymbol{x})\right|\,\,,
            \end{aligned}
    \end{equation}
    where
    \begin{equation*}
        \ell_{\boldsymbol{p}}^I(\boldsymbol{\Theta},\boldsymbol{\varrho}\mid\boldsymbol{x}):=\sum_{i \in I} \ln \left[ p_i \psi_i(x_i;\boldsymbol{\theta}_i) \right]
            +
            \ln \left[
                \frac{\partial^{\, |I|} \,
                    C(u_1, \dots, u_d; \boldsymbol{\varrho})}
                {\prod_{i \in I} \, \partial u_i} \,
            \right]_{u_i = F_i(x_i;p_i)}\,\,.
    \end{equation*}
    The first addend in the last inequality in \eqref{eq:inequalities_sup} goes to $0$, indeed
    \begin{align*}
\sum_{\boldsymbol{x}\in\mathcal{X}}&\sup_{\boldsymbol{\varrho}\in K}\left|\sum_{I\subset \mathbb{S}}\mathbbm{1}_{\{I=\boldsymbol{\mathcal{S}}(\boldsymbol{x})\}}\ell_{\boldsymbol{p}}^I(\boldsymbol{\Theta},\boldsymbol{\varrho}\mid\boldsymbol{x})\right|\leq\sum_{\boldsymbol{x}\in\mathcal{X}}\sum_{I\subset \mathbb{S}}\sup_{\boldsymbol{\varrho}\in K}\left|\mathbbm{1}_{\{I=\boldsymbol{\mathcal{S}}(\boldsymbol{x})\}}\ell_{\boldsymbol{p}}^I(\boldsymbol{\Theta},\boldsymbol{\varrho}\mid\boldsymbol{x})\right|=\\
&=\sum_{\boldsymbol{x}\in\mathcal{X}}\sum_{I\subset \mathbb{S}}\sup_{\boldsymbol{\varrho}\in K}\left|\frac{\mathbbm{1}_{\{I=\boldsymbol{\mathcal{S}}(\boldsymbol{x})\}}}{p_I^\perp}\,p_I^\perp\,\ell_{\boldsymbol{p}}^I(\boldsymbol{\Theta},\boldsymbol{\varrho}\mid\boldsymbol{x})\right|\leq\sum_{\boldsymbol{x}\in\mathcal{X}}\sum_{I\subset \mathbb{S}}\sup_{\boldsymbol{\varrho}\in K}\left|\frac{\mathbbm{1}_{\{I=\boldsymbol{\mathcal{S}}(\boldsymbol{x})\}}}{p_I^\perp}\right|\,\sup_{\boldsymbol{\varrho}\in K}\left|p_I^\perp\right|\,\sup_{\boldsymbol{\varrho}\in K}\left|\ell_{\boldsymbol{p}}^I(\boldsymbol{\Theta},\boldsymbol{\varrho}\mid\boldsymbol{x})\right|\,\,,
    \end{align*}
    It's straightforward to prove that
    \[
    p_I^\perp\leq 1-p_i\quad\forall\,i\in I\quad\forall\,\boldsymbol{\varrho}\in K\,\,,
    \]
    thus,
    \[
    \sup_{\boldsymbol{\varrho}\in K}\,p_I^\perp\leq 1-p_i\quad\forall\,i\in I\,\,,
    \]
    this implies that
    \begin{equation}
    \label{eq:unif_conv_addend}
    \sum_{\boldsymbol{x}\in\mathcal{X}}\sum_{I\subset \mathbb{S}}\sup_{\boldsymbol{\varrho}\in K}\left|\frac{\mathbbm{1}_{\{I=\boldsymbol{\mathcal{S}}(\boldsymbol{x})\}}}{p_I^\perp}\right|\,\sup_{\boldsymbol{\varrho}\in K}\left|p_I^\perp\right|\,\sup_{\boldsymbol{\varrho}\in K}\left|\ell_{\boldsymbol{p}}^I(\boldsymbol{\Theta},\boldsymbol{\varrho}\mid\boldsymbol{x})\right|\overset{\text{a.s.}}{\longrightarrow}0
    \end{equation}
    for $\boldsymbol{p}\to\boldsymbol{1}$, since
    \begin{equation*}
        \left|\frac{\mathbbm{1}_{\{I=\boldsymbol{\mathcal{S}}(\boldsymbol{x})\}}}{p_I^\perp}\right|\quad\text{and} \quad\left|\ell_{\boldsymbol{p}}^I(\boldsymbol{\Theta},\boldsymbol{\varrho}\mid\boldsymbol{x})\right|\,\,,
    \end{equation*}
    are bounded functions of $\boldsymbol{\varrho}\in K$.\footnote{The functions $1/p_I^\perp$ and $\ell_{\boldsymbol{p}}^I(\boldsymbol{\Theta},\boldsymbol{\varrho}\mid\boldsymbol{x})$ are continuous functions of $\boldsymbol{\varrho}\in K$, thus, for the Weierstrass Theorem, they assume a (finite) maximum over the compact set $K\subset\mathcal{K}_d$ almost surely. The function $\displaystyle\mathbbm{1}_{\{I=\boldsymbol{S}(\boldsymbol{x})\}}$ is surely bounded.}\\
    Let us now prove that the first addend in \eqref{eq:inequalities_sup} goes a.s. to $0$ as $\boldsymbol{p}\to\boldsymbol{1}$, indeed
    \begin{align*}
    \sup_{\boldsymbol{\varrho}\in K}\left|\ell(\boldsymbol{\Theta},\boldsymbol{\varrho}\mid\boldsymbol{x})-\mathbbm{1}_{\{\mathbb{S}=\boldsymbol{\mathcal{S}}(\boldsymbol{x})\}}\ell_{\boldsymbol{p}}^\mathbb{S}(\boldsymbol{\Theta},\boldsymbol{\varrho}\mid\boldsymbol{x})\right|&\leq\sup_{\boldsymbol{\varrho}\in K}\left|\ell(\boldsymbol{\Theta},\boldsymbol{\varrho}\mid\boldsymbol{x})(1-\mathbbm{1}_{\{\mathbb{S}=\boldsymbol{\mathcal{S}}(\boldsymbol{x})\}})\right|+\\&+\sup_{\boldsymbol{\varrho}\in K}\left|\mathbbm{1}_{\{\mathcal{S}=\boldsymbol{\mathbb{S}}(\boldsymbol{x})\}}(\ell(\boldsymbol{\Theta},\boldsymbol{\varrho}\mid\boldsymbol{x})-\ell_{\boldsymbol{p}}^\mathbb{S}(\boldsymbol{\Theta},\boldsymbol{\varrho}\mid\boldsymbol{x}))\right|\,\,.
    \end{align*}
    First, we observe that
    \begin{equation*}
        \sup_{\boldsymbol{\varrho}\in K}\left|\ell(\boldsymbol{\Theta},\boldsymbol{\varrho}\mid\boldsymbol{x})(1-\mathbbm{1}_{\{\mathbb{S}=\boldsymbol{\mathcal{S}}(\boldsymbol{x})\}})\right|\leq\sup_{\boldsymbol{\varrho}\in K}\left|\ell(\boldsymbol{\Theta},\boldsymbol{\varrho}\mid\boldsymbol{x})\right|\sup_{\boldsymbol{\varrho}\in K}\left|(1-\mathbbm{1}_{\{\mathbb{S}=\boldsymbol{\mathcal{S}}(\boldsymbol{x})\}})\right|\,\,,
    \end{equation*}
    and, similarly to \eqref{eq:unif_conv_addend}
    \[
    \sup_{\boldsymbol{\varrho}\in K}\left|1-\mathbbm{1}_{\{\mathbb{S}=\boldsymbol{\mathcal{S}}(\boldsymbol{x})\}}\right|=\sup_{\boldsymbol{\varrho}\in K}\left|\sum_{I\subset\mathbb{S}}\mathbbm{1}_{\{I=\boldsymbol{\mathcal{S}}(\boldsymbol{x})\}}\right|\leq\sum_{I\subset\mathbb{S}}\sup_{\boldsymbol{\varrho}\in K}\left|\frac{\mathbbm{1}_{\{I=\boldsymbol{\mathcal{S}}(\boldsymbol{x})\}}}{p_I^\perp}\right|\sup_{\boldsymbol{\varrho}\in K}\left|p_I^\perp\right|\overset{\text{a.s.}}{\longrightarrow}0
    \]
    uniformly in $\boldsymbol{\varrho}\in K$, for $\boldsymbol{p}\to\boldsymbol{1}$. Moreover,
    \[
    \sup_{\boldsymbol{\varrho}\in K}\left|\mathbbm{1}_{\{\mathbb{S}=\boldsymbol{\mathcal{S}}(\boldsymbol{x})\}}(\ell(\boldsymbol{\Theta},\boldsymbol{\varrho}\mid\boldsymbol{x})-\ell_{\boldsymbol{p}}^\mathbb{S}(\boldsymbol{\Theta},\boldsymbol{\varrho}\mid\boldsymbol{x}))\right|\leq    \sup_{\boldsymbol{\varrho}\in K}\left|\mathbbm{1}_{\{\mathbb{S}=\boldsymbol{\mathcal{S}}(\boldsymbol{x})\}}\right|\,\sup_{\boldsymbol{\varrho}\in K}\left|\ell(\boldsymbol{\Theta},\boldsymbol{\varrho}\mid\boldsymbol{x})-\ell_{\boldsymbol{p}}^\mathbb{S}(\boldsymbol{\Theta},\boldsymbol{\varrho}\mid\boldsymbol{x})\right|\,\,,
    \]
    and this last term goes to $0$ a.s., indeed,
    \begin{align*}
            \sup_{\boldsymbol{\varrho}\in K}\left|\ell(\boldsymbol{\Theta},\boldsymbol{\varrho}\mid\boldsymbol{x})-\ell_{\boldsymbol{p}}^\mathbb{S}(\boldsymbol{\Theta},\boldsymbol{\varrho}\mid\boldsymbol{x})\right|\leq\sup_{\boldsymbol{\varrho}\in K}\left(\lambda_{\max}\left(\mathbf{R}^{-1}\right)+1\right)&\,\left|\left|\Phi^{-1}(\Psi(\boldsymbol{x});\boldsymbol{\theta})-\Phi^{-1}(F(\boldsymbol{x};\boldsymbol{p}))\right|\right|_2\times\\&\times\left|\left|\Phi^{-1}(\Psi(\boldsymbol{x});\boldsymbol{\theta})+\Phi^{-1}(F(\boldsymbol{x};\boldsymbol{p}))\right|\right|_2\overset{\text{a.s.}}{\longrightarrow}0\,\,,
    \end{align*}
    because
    \[
        \left|\left|\Phi^{-1}(\Psi(\boldsymbol{x});\boldsymbol{\theta})-\Phi^{-1}(F(\boldsymbol{x};\boldsymbol{p}))\right|\right|_2\overset{\text{a.s.}}{\longrightarrow}0
    \]
    and
    \begin{equation*}
        \mathbb{P}\left(\lim_{\boldsymbol{p}\to\boldsymbol{1}}\left|\left|\Phi^{-1}(\Psi(\boldsymbol{x});\boldsymbol{\theta})+\Phi^{-1}(F(\boldsymbol{x};\boldsymbol{p}))\right|\right|_2<+\infty\right)=1
    \end{equation*}
    moreover, the map $\mathbf{R}\mapsto\left(\lambda_{\max}\left(\mathbf{R}^{-1}\right)+1\right)$ is a continuous map, thus, for the Weierstrass Theorem, $\exists\,\Lambda>0$ and finite such that
    \[
        \lambda_{\max}\left(\mathbf{R}^{-1}\right)+1\leq \Lambda\quad\forall\,\mathbf{R}\in K\text{ compact}
        \qedhere
    \]
\end{proof}

\begin{proposition}\label{prop:speraman_rho_copula}
Let $X_1, \dots, X_d$ be r.v.s with continuous marginals $\Psi_i(\bigcdot;\boldsymbol{\theta}_i)$, and suppose their dependence structure is given by a Gaussian copula with correlation matrix $\mathbf{R}$. 
Then, for each pair $X_i$ and $X_j$, the Spearman rank correlation $\rho_{ij}$ satisfies
\[
\rho_{ij} = \frac{6}{\pi} \arcsin \left( \frac{\mathrm{R}_{i,j}}{2} \right)
\quad\quad\quad \forall\, i,j = 1,\dots,d \,.
\]
\end{proposition}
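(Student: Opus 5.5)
The plan is to reduce to the bivariate Gaussian case and compute Spearman's $\rho$ through its Pearson-correlation-of-ranks representation. Since the marginals $\Psi_i$ are continuous, $U_i:=\Psi_i(X_i;\boldsymbol{\theta}_i)$ is uniform on $[0,1]$. Spearman's rank correlation is $\rho_{ij}=\mathrm{corr}(U_i,U_j)=12\,\mathbb{E}[U_iU_j]-3$, which is equivalent to the integral formula $12\int\!\!\int C_{ij}\,\mathrm{d}u_i\mathrm{d}u_j-3$ recalled in the proof of Proposition \ref{cor:consis_with_clv}. Under the Gaussian copula we may write $U_i=\Phi(Z_i)$ and $U_j=\Phi(Z_j)$, where $(Z_i,Z_j)$ is bivariate standard normal with correlation $r:=\mathrm{R}_{i,j}$; this is the $(i,j)$ restriction of the $d$-variate copula. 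The other coordinates therefore play no role, and the claim becomes $12\,\mathbb{E}[\Phi(Z_i)\Phi(Z_j)]-3=\frac{6}{\pi}\arcsin(r/2)$.

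The key step is to compute $\mathbb{E}[\Phi(Z_i)\Phi(Z_j)]$ probabilistically. Introduce $W_i,W_j$ standard normal, independent of each other and of $(Z_i,Z_j)$. Then $\Phi(Z_i)=\mathbb{P}(W_i\le Z_i\mid Z_i)$, and similarly for $j$, so
\[
\mathbb{E}[\Phi(Z_i)\Phi(Z_j)]=\mathbb{P}\bigl(W_i-Z_i\le 0,\;W_j-Z_j\le 0\bigr).
\]
The pair $(W_i-Z_i,\,W_j-Z_j)$ is centered Gaussian with variances $2$ and covariance $r$, hence correlation $r/2$. The proof is then finished by the classical orthant formula (Sheppard's formula) for a centered bivariate normal with correlation $\tau$:
\[
\mathbb{P}(Y_1\le0,Y_2\le0)=\tfrac14+\tfrac{1}{2\pi}\arcsin\tau.
\]
If a self-contained argument is wanted, this formula follows by writing $Y=A\,\xi$ with $\xi$ standard bivariate normal and using rotational invariance: the probability is the angle of a wedge divided by $2\pi$, and that angle equals $\pi-\arccos\tau$. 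Substituting $\tau=r/2$ gives $12\bigl(\tfrac14+\tfrac{1}{2\pi}\arcsin(r/2)\bigr)-3=\tfrac{6}{\pi}\arcsin(r/2)$.

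The main obstacle is bookkeeping rather than analysis. Two points need care. First, one must justify that Spearman's $\rho$ depends only on the bivariate copula $C_{ij}$, which uses the continuity of the marginals so that the $U_i$ are exactly uniform. Second, the correlation must be reduced correctly to $r/2$: it is the variance of $2$ in the difference variables that produces the factor $1/2$ inside the arcsine. The final identity $\rho_{ij}=\frac{6}{\pi}\arcsin(\mathrm{R}_{i,j}/2)$ is also equivalent to $\mathrm{R}_{i,j}=2\sin(\frac{\pi}{6}\rho_{ij})$, which is the form used in Proposition \ref{cor:consis_with_clv}.
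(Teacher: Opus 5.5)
Your proof is correct, and it evaluates the key integral by a different route from the paper. Both arguments start from the same representation, $\rho_{ij}=12\,\mathbb{E}[\Phi(Z_i)\Phi(Z_j)]-3$, where $(Z_i,Z_j)$ is standard bivariate normal with correlation $r=\mathrm{R}_{i,j}$. The paper writes this as $12\iint u_iu_j\,c_{ij}\,\mathrm{d}u_i\mathrm{d}u_j-3$ and substitutes $v=\Phi^{-1}(u)$.

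The paper then treats $\mathcal{I}(r)=\mathbb{E}[\Phi(Z_i)\Phi(Z_j)]$ as a function of the correlation, states $\mathcal{I}'(r)=1/(2\pi\sqrt{4-r^2})$, and integrates using $\mathcal{I}(0)=\tfrac14$. That derivative is asserted, not derived. It implicitly rests on Plackett's identity $\partial_r\varphi_2=\partial_{v_i}\partial_{v_j}\varphi_2$ and a double integration by parts, which give $\mathcal{I}'(r)=\mathbb{E}[\varphi(Z_i)\varphi(Z_j)]$, a Gaussian integral equal to $1/(2\pi\sqrt{4-r^2})$. It also needs a justification for differentiating under the integral sign.

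You instead randomize each $\Phi$ with an independent standard normal. This turns the expectation into an orthant probability for a centered pair with correlation $r/2$, and you finish with Sheppard's formula via the wedge-angle argument.

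Your route is self-contained and purely probabilistic. It needs no differentiation under the integral sign (only the tower property for bounded indicators), and it shows why the argument of the arcsine is $r/2$: the difference variables have variance $2$.

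The paper's ODE-in-$r$ technique is more mechanical and extends to general expectations $\mathbb{E}[g(Z_i)h(Z_j)]$ through $\frac{\mathrm{d}}{\mathrm{d}r}\mathbb{E}[g(Z_i)h(Z_j)]=\mathbb{E}[g'(Z_i)h'(Z_j)]$, where no orthant representation is available. As written, however, it leaves its central computation to the reader.
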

\begin{proof}
    From the definition of Spearman's $\rho_{ij}$ \citep[cf.,][Def.3.3, p.100]{CLV}
\begin{equation}
    \label{eq:def_spearman_CLV}
    \rho_{ij}:=12\iint_{[0,1]^2}u_iu_jc_{i,j}(u_i,u_j;\mathrm{R}_{i,j})\mathrm{d}u_i\mathrm{d}u_j-3\,\,,
\end{equation}
where $c_{ij}(\bigcdot;\mathrm{R}_{i,j})$ denotes the copula density between the rvs $X_i$ and $X_j$.\\
With the change of variables $v_i=\Phi^{-1}(u_i)$, $v_j=\Phi^{-1}(u_j)$, equation \eqref{eq:def_spearman_CLV} becomes
\begin{equation*}
\rho_{ij}:=12\iint_{[0,1]^2}\Phi(v_i)\Phi(v_j)\varphi_2(v_i,v_j;\mathrm{R}_{i,j})\mathrm{d}u_i\mathrm{d}u_j-3=12\;\mathcal{I}(\mathrm{R}_{i,j})-3\,\,,
\end{equation*}
with the following definition of $\mathcal{I}(R_{i,j})$,
\[
\mathcal{I}(\mathrm{R}_{i,j}):=\iint_{[0,1]^2}\Phi(v_i)\Phi(v_j)\varphi_2(v_i,v_j;\mathrm{R}_{i,j})\mathrm{d}u_i\mathrm{d}u_j\,\,.
\]
If we take the derivative of $\mathcal{I}(\mathrm{R}_{i,j})$ w.r.t.\ $\mathrm{R}_{i,j}$ we get that
\begin{equation}
\label{eq:derivative_spearman_rho}
    \frac{\mathrm{d}\,\mathcal{I}(\mathrm{R}_{i,j})}{\mathrm{d}\,\mathrm{R}_{i,j}}=\frac{1}{2\pi\sqrt{4-\mathrm{R}_{i,j}^2}}\,\,.
\end{equation}
By integrating \eqref{eq:derivative_spearman_rho} w.r.t.\ $\mathrm{R}_{i,j}$, and imposing that $\rho_{i,j}=0$ if $\mathrm{R}_{i,j}=0$, we get the thesis.
\end{proof}

%\clearpage
%\newpage

% Reset counters for lemmas and equations
\setcounter{lemma}{0}
\setcounter{equation}{0}

\section*{Notation and shorthands}	

\begin{tabular} {|c|l|}
		\toprule
		\textbf{Symbol}& \textbf{Description}\\ \bottomrule
		$X_i$ & marginal rv in Comb-Bernoulli \\
        $\boldsymbol{X}$& vector $[X_1,\dots,X_d]^\top$\\
        $\boldsymbol{x}$ & realization of $\mathbf{X}$\\
        $d$ & dimension of Comb-Bernoulli model \\
        $\mathbb{S}$ & set of indices $\{1,\dots,d\}$ \\
        $p_i$ &  probability of the event $\{X_i>0\}$\\        
        $\boldsymbol{p}$ & vector of probabilities $p_i$ \\          $\lambda_i$ &  frequency in continuous-time model \\     
        $\boldsymbol{\lambda}$ & vector of intensities $\lambda_i$ \\
        $f_i(\bigcdot;p_i), F_i(\bigcdot;p_i)$ & pdf and cdf of $X_i$ \\
		$\boldsymbol{\theta}_i$ & set of parameters of the distribution of $X_i|X_i>0$\\ 
        $\boldsymbol{\Theta}$ & matrix $[\boldsymbol{\theta}_1,\ldots,\boldsymbol{\theta}_d]$ of parameters for the marginal severity distributions \\ 
        $\psi  ( \bigcdot;\boldsymbol{\theta}_i ), {\Psi} ( \bigcdot;\boldsymbol{\theta}_i )$ &  generic pdf and cdf of the marginal distribution of $X_i|X_i>0$\\
        $C(\bigcdot;\boldsymbol{\varrho})$ & copula with dependence parameters $\boldsymbol{\varrho}$ \\
		$c(\bigcdot;\boldsymbol{\varrho})$ & copula density \\
        $\overline{C}(\bigcdot;\boldsymbol{\varrho})$ & survival copula \\
        $\mathfrak{C}$ & L\'evy copula \\
        $\mathfrak{C}_{I}$ & L\'evy copula restricted to set of indices $I\subseteq\mathbb{S}$ \\
        $\mathcal{S}(\boldsymbol{x})$ & active set of $\boldsymbol{x}$, $\{i\in\{1,\dots,d\}:x_i>0\}$ \\
        $\mathsf{s}(\boldsymbol{x})$ & cardinality of the active set $\mathcal{S}(\boldsymbol{x})$ \\
        $\ell_{\boldsymbol{p}}(\boldsymbol{\Theta},\boldsymbol{\varrho}\mid\boldsymbol{x})$ & per-observation Comb-Bernoulli likelihood \\
        $\lcb(\boldsymbol{\Theta},\boldsymbol{\varrho}\mid\mathcal{X})$ & Comb-Bernoulli likelihood \\
        $\llcb(\boldsymbol{\Theta},\boldsymbol{\varrho}\mid\mathcal{X})$ & Comb-Bernoulli log-likelihood \\
        $\lcp(\boldsymbol{\Theta},\boldsymbol{\varrho}\mid\mathcal{X})$ & L\'evy copula model likelihood \\
        $\llcp(\boldsymbol{\Theta},\boldsymbol{\varrho}\mid\mathcal{X})$ & L\'evy copula model log-likelihood \\
        $\llcbsp(\boldsymbol{\Theta},\boldsymbol{\varrho}\mid\mathcal{X})$ & Comb-Bernoulli log-likelihood in the full time series limit $(\boldsymbol{p} \to \boldsymbol{1})$ \\
        $\ell(\boldsymbol{\Theta},\boldsymbol{\varrho}\mid\boldsymbol{x})$ & per-observation Comb-Bernoulli log-likelihood in the full time series limit \\        $\mathcal{X}$ & time series $[\boldsymbol{x}^{(t)}]_{t=1}^N$ \\
        $N$ & time series length \\
		$t$ & time series index \\
        $n_i$ & number of positive realizations of the rv $X_i$\\
        $n_I^\perp$ & number of realizations of the event $\{X_i>0\;\forall\,i\in I\}\,\cap\,\{X_j=0\;\forall\,j\notin I\}$\\
        $p_I^\perp$ & probability of the event $\{X_i>0\;\forall\,i\in I\}\,\cap\,\{X_j=0\;\forall\,j\notin I\}$\\
        $\mathcal{K}_d$ & set of $d\times d$ correlation matrices \\
        ${\varphi} ( \bigcdot ), {\Phi} ( \bigcdot )$ & pdf and cdf of a standard normal rv \\
        $\boldsymbol{\varrho}$ & copula dependence parameters \\
		\bottomrule		
\end{tabular}

\bigskip

\begin{tabular} {|c|l|}
		\toprule
		\textbf{Symbol}& \textbf{Description}\\ \bottomrule
        $\varphi_d(\bigcdot;\boldsymbol{\Sigma}),\,\Phi_d(\bigcdot;\boldsymbol{\Sigma})$ & pdf and cdf of a $d$-variate st.n. distribution with covariance matrix $\boldsymbol{\Sigma}$ \\
		$\boldsymbol{Y}\sim\Phi(\boldsymbol{\mu},\boldsymbol{\Sigma})$ & vector of rvs $\boldsymbol{Y}$ distributed as a normal with mean $\boldsymbol{\mu}$ and covariance $\boldsymbol{\Sigma}$ \\
		$\rhocb$ & dependence parameter estimator from the log-likelihood $\llcb(\boldsymbol{\Theta},\boldsymbol{\varrho}\mid\mathcal{X})$ \\
		$\rhocp$ & dependence parameter estimator from the log-likelihood $\llcp(\boldsymbol{\Theta},\boldsymbol{\varrho}\mid\mathcal{X})$ \\
        $\widehat{\boldsymbol{\varrho}}$ & dependence parameter estimator from the log-likelihood $\llcbsp(\boldsymbol{\Theta},\boldsymbol{\varrho}\mid\mathcal{X})$ \\
        $\boldsymbol{R}$ & Gaussian copula correlation \\
        $\boldsymbol{\rho}$ & Spearman's correlation\\
        $\widehat{\boldsymbol{\rho}}$ & Spearman's correlation estimator\\
        %$\Lambda_i$ & lower tail dependence coefficient \\
        $\mathbbm{1}$ & indicator function \\
        $|A|$ & determinant of a matrix $A$ \\
        $|I|$ & cardinality of a set $I$ \\
        $\nabla_{\boldsymbol{\varrho}}$ & gradient w.r.t.\ the components of $\boldsymbol{\varrho}$ \\
        %$\overset{\text{a.s.}}{=}$ & almost surely equality \\
		$\xrightarrow{\text{a.s.}}$ & almost surely convergence \\
		$\mathcal{P}$ & multinomial distribution \\
		$\delta_0(x)$ & Dirac delta function, centred in $0$ \\
		\bottomrule		
\end{tabular}

\bigskip

\begin{tabular}{|c|l|}
		\toprule
		\textbf{Shorthand}& \textbf{Description}\\ \bottomrule
		a.s.	& almost surely \\
		cdf 	& cumulative distribution function \\
		CI 	& Confidence Interval \\
        IFM 	& Inference Functions for Margins method \\
		ML 	& Maximum Likelihood \\
		%MLE 	& Maximum Likelihood Estimator \\
		pdf 	& probability density function \\
		rv 	& random variable \\
		st.n. 	& standard normal (random variable) \\
		w.r.t.\  	& with respect to \\
		\bottomrule
\end{tabular}

\end{document}